\documentclass[envcountsect,envcountsame,runningheads]{llncs}
\usepackage[T1]{fontenc}
\usepackage[utf8]{inputenc}
\usepackage{hyperref}


\usepackage{ksano_logic}

\makeatletter
\@ifpackageloaded{fixme}{}{\usepackage[final]{fixme}} 
\makeatother
\FXRegisterAuthor{ks}{aks}{KS}
\FXRegisterAuthor{tmil}{atmil}{TL minor}
\FXRegisterAuthor{tl}{atl}{TL major}
\definecolor{mydarkgreen}{rgb}{0,0.34,0}
\newcommand{\tlnt}[1]{\tmilnote[inline,marginclue]{\textcolor{mydarkgreen}{#1}}}

\usepackage{ifthen}


\makeatletter
\newif\if@full
 \@fulltrue    
\ProcessOptions*\relax
\let\iffull\if@full

\iffull
\usepackage{geometry}

\geometry{
  includeheadfoot,
  margin=3.2cm
}
\else\fi

\usepackage{multirow}
\usepackage{tabularx}
\newcolumntype{L}[1]{>{\raggedright\let\newline\\\arraybackslash\hspace{0pt}}m{#1}}
\newcolumntype{C}[1]{>{\centering\let\newline\\\arraybackslash\hspace{0pt}}m{#1}}
\newcolumntype{R}[1]{>{\raggedleft\let\newline\\\arraybackslash\hspace{0pt}}m{#1}}

\newcommand{\powset}{\mathcal{P}}
\newcommand{\states}[1]{\powset^{\supseteq}
(#1)}

\newcommand{\lna}[1]{\ensuremath{\mathsf{#1}}}
\newcommand{\inqbqa}[1]{\lna{InqBQ}_{#1}}
\newcommand{\inqbq}{\lna{InqBQ}}
\newcommand{\inqbqn}{\inqbqa{n}}
\newcommand{\inqbqo}{\inqbqa{<\omega}}

\newcommand{\IntModels}[1]{\ensuremath{\lna{IntModels}(#1)}}
\newcommand{\InjModels}[1]{\ensuremath{\lna{ReflInfModels}(#1)}}

\newcommand{\fwdm}{\states{W},\lft{\mM}}

\newcommand{\ndm}{\mathfrak{N}, \mathfrak{J}}
\newcommand{\nng}{\ndm, n \Vdash_g}
\newcommand{\nog}{\ndm, o \Vdash_g}

\newcommand{\mM}{\ensuremath{\mathfrak{M}}}
\newcommand{\mN}{\ensuremath{\mathfrak{N}}}
\newcommand{\mI}{\ensuremath{\mathfrak{I}}}

\newcommand{\lft}[1]{\hat{#1}}

\renewcommand{\models}{\Vdash}

\renewcommand{\phi}{\varphi}

\newcommand{\lang}{\mathsf{L}}
\newcommand{\langrex}{\mathsf{L_{rex}}}

\title{Bounded Inquisitive Logics: \\
Sequent Calculi and Schematic Validity}

\author{Tadeusz Litak\inst{1}
\orcidID{0000-0003-2240-3161} 
\and
Katsuhiko Sano\inst{2}
\orcidID{0000-0002-7780-423X} 
}

\institute{\iffull
 Dipartimento di Ingegneria Elettrica e delle Tecnologie dell'Informazione, \\ \else\fi
 University of Naples Federico II\\ \email{tadeusz.litak@gmail.com}  \and  
Faculty of Humanities and Human Sciences, Hokkaido University \\ \email{v-sano@let.hokudai.ac.jp}
}

\begin{document}

\maketitle

\begin{abstract}
Propositional inquisitive logic is the limit of its $n$-bounded approximations. In the predicate setting, however, this does not hold anymore, as discovered by Ciardelli and Grilletti \cite{Ciardelli2022a}, who also found complete axiomatizations of $n$-bounded inquisitive logics $\inqbq_{n}$, for every fixed $n$. We introduce cut-free labelled sequent calculi for these logics. 
We illustrate the intricacies of \textit{schematic validity} in such systems by showing that the well-known Casari formula is \textit{atomically} valid in (a weak sublogic of) predicate inquisitive logic $\mathsf{InqBQ}$, fails to be schematically valid in it, and yet is schematically valid under the finite boundedness assumption. The derivations in our calculi, however, are guaranteed to be schematically valid whenever a single specific rule is not used. 
\end{abstract}

\
\keywords{
inquisitive logic \and
superintuitionistic predicate logics  \and
labelled sequent calculi  \and
cut elimination  \and
schematic validity  \and
finite boundedness  \and
constant domains
}


\iffull
\begin{footnotesize}
\paragraph*{Disclaimer}
This is a modified and expanded version of a paper accepted for TABLEAUX 2025. In particular, readers should note that the numeration of environments is different in the conference version. 
\end{footnotesize}
\else\fi

\section{Introduction}

Inquisitive logic~\cite{Ciardelli2018,Ciardelli2016,Ciardelli2017} provides a framework for studying both declarative and interrogative sentences in one setting. On the propositional level, it enriches classical logic with {\em inquisitive disjunction} $\inqd$, and on the predicate level, it also adds the \emph{inquisitive existential quantifier} $\inqe$. Assuming $S(x,y)$ is the predicate ``$x$ sings for $y$'', $e$ is the constant denoting Eric Clapton and $g$ denotes Gottlob Frege, we write ``Does Eric Clapton sing for Gottlob Frege?'' as $?S(e,g) := S(e,g) \inqd \neg S(e,g)$. The question ``Who is some person that Eric Clapton sings for?'' translates to $\Exi{x}S(e,x)$, whereas ``Which are the people Eric Clapton sings for?'' is rendered as $\Any{x}?S(e,x)$, and ``Is Eric Clapton singing for everybody?'' gets expressed as $?\Any{x}S(e,x)$.




Semantically, \emph{relational information models} reflect different (combinations of) potential answers to questions as \emph{states}, which are just collections of FO models (structures) over some fixed domain of individuals. A singleton state is maximal under $\supseteq$, settling one way or another all questions. The interpretation of an atomic predicate at a state $s$ is obtained as the intersection of its denotations in members of $s$. Thus, inquisitive logic can be seen not only as an expansion of classical logic with additional operators, but also as a formalism extending (as a set of theorems) the intuitionistic logic of constant domains \lna{CD} \cite{gornemann71}. In other words, $\inqd$ and $\inqe$ are just standard intuitionistic connectives, whereas $\vee$ and $\exists$ are simply de Morgan duals of $\wedge$ and $\forall$, respectively. Atomic formulas, however, are evaluated as \emph{regular} upsets satisfying the double negation law. Anyone familiar with, e.g., the G\"odel-Gentzen translation \cite[Ch. 2.3]{TroelstraVD},  \cite[Ch. 6--7]{SorensenU06:book}, \cite[Ch. 2]{ChagrovZ97:ml} \cite{FerreiraO12:full,litakpr17} can see that formulas involving $\inqd$ and $\inqe$ are the only ones that can behave in a non-classical way under this restriction.

The failure of some non-atomic instances of Boolean laws implies that unlike standard superintuitionistic logics, inquisitive logic is not closed under schematic substitutions (see Section \ref{sec:casari}). This naturally leads to the question of \emph{schematic validity} in inquisitive logic: what is its \emph{schematic core/fragment}, i.e., the largest standard superintuitionistic logic contained in it?  For the propositional inquisitive logic \lna{InqL}, Ciardelli \cite{Ciardelli09}  established that its schematic fragment is exactly Medvedev's logic \lna{ML} of finite problems or finite (topless) boolean cubes. Conversely, \lna{InqL} can be obtained as the negative counterpart of \lna{ML}, i.e., the collection of formulas whose negatively substituted variants (replacing each atom with its negation) belong to \lna{ML}. To the best of our knowledge, the corresponding first-order question has not been addressed.
 
Another set of questions is brought by the idea of restricting the size of the models. Of course, by Trakhtenbrot's theorem, the restriction to (unbounded) finite domains of individuals would block recursive axiomatizability in the presence of at least one binary relation symbol. Moreover, as noted by Ono \cite{OnoH:1973}, there are axiom \textit{schemes} in the sense of Section \ref{sec:casari} that can sense finiteness of the domain of individuals. 
However, in the inquisitive setting, we can also restrict the cardinality of the \emph{collection of structures}, and consequently of \emph{available states}; note that finite boundedness defined this way is a generalization of the finite model property defined in terms of individuals. In the propositional case, this issue also has been investigated by Ciardelli \cite{Ciardelli09}: not only can all inquisitive non-theorems be refuted with a countable collection of possible worlds/structures, but \lna{InqL} is in fact the limit of its $n$-bounded approximations. Furthermore, \emph{op.cit.} proves that each such $n$-bounded approximation is obtained as the negative counterpart of \lna{ML} extended with the axiom limiting the depth of frames to $n$.

The second named author~\cite{Sano2011c} asked whether analogues of boundedness results discussed in the preceding paragraph hold in the predicate case, and provided two complete calculi (Hilbert-style and tree-sequent) for the two-bounded system $\inqbq_{2}$. Recently, Ciardelli and Grilletti \cite{Ciardelli2022a} have provided complete axiomatizations of $\inqbq_{n}$ for each $n \in \omega$, and proved that $\inqbq \subsetneq \bigcap\inqbq_{n}$. 
Disregarding their specific counterexample, one can see this strict inclusion as follows. 
In the propositional case, \emph{there are no intuitionistic analogues of the Grzegorczyk and L\"ob formulas, which expresses the presence of infinite ascending chains} \cite[p. 175]{ChagrovZ97:ml}. On the other hand, there are numerous such formulas are known in the predicate case, both for ascending and descending chains \cite{Casari83,Esakia98:bsl,MinariEA90}, with the Casari formula \cite{Casari83} being one of the most salient examples (cf. Theorem \ref{th:casari} for its semantic characterization). Such formulas must be schematically valid in $\inqbq_{n}$, for any $n \in \omega$. But it does not immediately follow that they fail in $\inqbq$: the restriction of atoms to regular upsets makes finding relational information countermodels challenging. The negative substitution of the Casari axiom is in fact an intutionistic theorem (Theorem \ref{th:casat}). In the unbounded case there exists a refuting instance of the Casari scheme and a suitable relational information model, but the proof is far from trivial (Theorem \ref{th:casfail}).

Given the difficulty of providing such semantic proofs, one can wonder if there are convenient proof calculi to work with. Several proof systems have been proposed for propositional inquisitive logic\footnote{A Hilbert system and a natural deduction system for the propositional inquisitive logic $\mathsf{InqB}$ are presented in~\cite{Ciardelli2010} and~\cite{Ciardelli2022}, respectively. A display calculus for $\mathsf{InqB}$ is also provided in~\cite{Frittella2016}. Labelled G3-style sequent calculi for $\mathsf{InqB}$ were proposed in~\cite{Chen2017,Muller2023}; see Footnote \ref{ft:propcalc}.}. 
In the predicate setting, proof calculi have been developed for certain syntactic fragments or for logics tailored to specific classes of models. Grilletti~\cite{Grilletti2021} presents a sound and complete natural deduction system for the classical antecedent fragment of first-order inquisitive logic. For $\mathsf{InqBQ}_n$, Ciardelli and Grilletti~\cite{Ciardelli2022a} provide a sound and strongly complete natural deduction calulus, 
 which however does not appear promising in the context of investigating schematic validity in bounded systems, and in several other typical proof-theoretic applications such as proof search (cf. Remark~\ref{rem:comparison} below). We instead leverage the labelled sequent system for $\inqbq_{2}$ provided by Sano~\cite{Sano2011c}, generalizing it to a semantically complete, cut-free G3-style calculus for $\inqbq_{n}$ for any $n \in \omega$. As far as we are aware, this is the first G3-style calculus for such systems. 
 In our sequent calculus, all inference rules are height-preserving invertible, weakening and contraction are height-preserving admissible, and the cut rule is admissible via a syntactic argument (Sano~\cite{Sano2011c} only shows semantic cut elimination for the  calculus for $\inqbq_{2}$ considered therein).  Furthermore, the derivations in our calculi are guaranteed to be schematically valid whenever a single specific rule is not used (Theorem~\ref{thm:generation_schematic_validities}).

We proceed as follows: Section \ref{sec:prelim} sets up the syntax of first-order inquisitive logic and its Kripke semantics. Section \ref{sec:casari} discusses the issue of validity of the Casari scheme. 
After introducing our labelled sequent calculus for the finitely bounded first-order inquisitive logics in Section \ref{sec:labseq}, Section \ref{sec:semcompl} proves the semantic completeness of the calculi, and Section \ref{sec:cutelim} establishes the admissibility of cut using a syntactic argument. Tying together the two main threads of the paper, Proposition \ref{prop:dercasari} illustrates how the Casari scheme is derived using structural rules when a specific finite bound is fixed; more such derivations are provided in the accompanying material. Section \ref{sec:conclusions} concludes the paper with further directions of research.

 \if0
Apart from key metalogical results, i.e., completeness (Section \ref{sec:semcompl}) and admissibility of cut and other standard rules (Section \ref{sec:cutelim}), we provide derivations of schemes valid in each  $\inqbq_{n}$, including the Casari scheme (Proposition \ref{prop:dercasari}, but also \lna{CD} (Example \ref{ex:cd}), the Kuroda scheme (Proposition \ref{prop:kurocalc}) or variants of Kreisel-Putnam (Propositions \ref{prop:KP} and \ref{prop:EKP}). 
Our Gentzen-style formalism appears optimal in several aspects (Remarks \ref{rem:rex} and \ref{rem:inqb}). 
Further discussion is postponed to Conclusions (Section \ref{sec:conclusions}).
\fi


\section{Preliminaries}
\label{sec:prelim}
Our language $\lang$ assumes a countably infinite set $\mathsf{Var}$ of variables and $\mathsf{Pred}$  of predicate symbols (with fixed arities), respectively, and logical connectives $\bot$, $\land$, $\to$, $\forall$ as well as {\em inquisitive disjunction} $\inqd$ and the {\em inquisitive existential quantifier} $\inqe$: 
\[
\varphi ::= P(x_{1},\ldots,x_{m}) \,|\,  \bot \,|\, \varphi \to \varphi \,|\, \varphi \land \varphi \,|\, \varphi \inqd \varphi \,|\, \Any{x}\varphi \,|\, \Exi{x}\varphi,
\]
where the arity of $P$ is $m$ (when the arity of $P$ is $0$, $P$ is regarded as a {\em propositional variable}). Throughout the paper, $P(\overline{x})$ denotes $P(x_{1},\ldots,x_{m})$, where $m$ is the arity of $P$.  We define $\neg \varphi$ := $\varphi \to \bot$ and $?\varphi$ := $\varphi \inqd \neg \varphi$. 
By $\varphi[z/x]$, we mean the result of capture-avoiding substitution of free occurrences of $x$ by $z$ in $\varphi$.  
\if0
A formula $\alpha$ not containing any occurrences of $\inqd$ or $\inqe$ is called a {\em classical formula}:
\[
\alpha ::= P(x_{1},\ldots,x_{m}) \,|\,  \bot \,|\, \alpha \to \alpha \,|\, \alpha \land \alpha \,|\, \Any{x}\alpha.
\]
\fi
Note that in order to simplify presentation, we assume that our $\lang$ contains neither identity nor function symbols. An interested reader might compare it with a Rocq formalization by Max Ole Elliger \cite{OleMSc,OleForm}, which allows rigid terms.



A {\em relational information model} is a structure $\mM$ = $(W,D,\mI)$ where $W$ is a non-empty set of possible worlds, $D$ is a non-empty set of individuals and $\mI: \mathsf{Pred} \times W \to D^{< \omega}$ is a function sending each $n$-arity predicate symbol $P$ to an element $\mI(P,w)$ of $\mathcal{P}(D^{n})$. 
For a relational information model $\mM$ = $(W,D,\mI)$, we say that $s \subseteq W$ is a \emph{state}. Let $\mM$ = $(W,D,\mI)$  be a relational information model, $g:\mathsf{Var} \to D$ an assignment and $s \subseteq W$ a state.
We define the notion of support $M, s \models_{g} A$ as follows:
\[
\begin{array}{rcl}
     \mM, s \Vdash_{g} P(x_{1},\ldots, x_{m}) &\mathrm{iff}& (g(x_{1}),\ldots,g(x_{m})) \in \mI(P,w) \text{ for all $w \in s$}; \\
     \mM,s \Vdash_{g} \bot &\mathrm{iff}& s =\varnothing; \\ 
     \mM,s \Vdash_{g} \varphi \land \psi &\mathrm{iff}& \mM,s \Vdash_{g} \varphi \text{ and  } \mM,s \Vdash_{g} \psi; \\   \mM,s \Vdash_{g} \varphi \to\psi &\mathrm{iff}& \text{for all $t \subseteq s$: } \mM,t \Vdash_{g} \varphi \text{ implies } \mM,t \Vdash_{g} \psi; \\
     \mM,s \Vdash_{g} \varphi \inqd \psi &\mathrm{iff}& \mM,s \Vdash_{g} \varphi \text{ or } \mM,s \Vdash_{g} \psi; \\
     \mM,s \Vdash_{g} \Any{x}\varphi &\mathrm{iff}& \mM,s \Vdash_{g[x \mapsto d]} \varphi \text{ for all $d \in D$; } \\
     \mM,s \Vdash_{g} \Exi{x}\varphi &\mathrm{iff}& \mM,s \Vdash_{g[x \mapsto d]} \varphi \text{ for some $d \in D$. } \\
\end{array}
\]
where $g[x \mapsto d]$ is the same mapping as $g$ except that it sends $x$ to $d$. 
We say that $\varphi$ is \emph{valid} in a relational information model $\mM$ = $(W,D,\mI)$ if 
$\mM, s \Vdash_{g} \varphi$ for every assignment $g: \mathsf{Var} \to D$ and every state $s \subseteq W$. 


\begin{proposition}[Persistency]
\label{prop:persistency}
If $\mM,s \Vdash_{g} \varphi$ and $s \supseteq t$ then $\mM,t \Vdash_{g} \varphi$. 
\end{proposition}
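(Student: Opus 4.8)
The plan is to prove this by induction on the structure of the formula $\varphi$, establishing the downward persistence property that support at a state $s$ propagates to every substate $t \subseteq s$. The induction hypothesis states that the claim holds for all proper subformulas, and I would carry out the case analysis following exactly the clauses of the support definition. Throughout, I fix the model $\mM = (W,D,\mI)$ and the assignment $g$, and assume $\mM, s \Vdash_g \varphi$ with $t \subseteq s$; the goal is to derive $\mM, t \Vdash_g \varphi$.

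\textbf{Base cases.} For an atomic formula $P(x_1,\ldots,x_m)$, support at $s$ means $(g(x_1),\ldots,g(x_m)) \in \mI(P,w)$ for all $w \in s$; since $t \subseteq s$, this condition holds a fortiori for all $w \in t$, giving support at $t$. For $\bot$, support at $s$ means $s = \varnothing$, whence $t \subseteq s = \varnothing$ forces $t = \varnothing$, so $\mM, t \Vdash_g \bot$.

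\textbf{Inductive cases.} The connectives split into the easy local ones and the one genuinely subtle case. For $\varphi \land \psi$ and $\varphi \inqd \psi$, support decomposes into support of the immediate subformulas at the \emph{same} state $s$, so I simply apply the induction hypothesis to each conjunct (resp.\ the disjunct that holds) and recombine. The quantifier cases $\Any{x}\varphi$ and $\Exi{x}\varphi$ are handled identically, passing the induction hypothesis through the modified assignment $g[x \mapsto d]$: support at $s$ for all (resp.\ some) $d$ yields support at $t$ for all (resp.\ some) $d$, since $t \subseteq s$ and the subformula is smaller. The only case requiring care is the implication $\varphi \to \psi$. Here support at $s$ is itself a universally quantified statement over all substates of $s$, and I would exploit precisely this: given any $u \subseteq t$, transitivity of $\subseteq$ yields $u \subseteq s$, so the hypothesis $\mM, s \Vdash_g \varphi \to \psi$ applied to $u$ gives that $\mM, u \Vdash_g \varphi$ implies $\mM, u \Vdash_g \psi$; since $u$ ranges over all substates of $t$, this is exactly $\mM, t \Vdash_g \varphi \to \psi$.

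The main (and essentially only) subtlety is the implication case, which notably does \emph{not} invoke the induction hypothesis at all: persistency for $\to$ follows purely from the transitivity of the substate relation together with the "for all substates" shape of its support clause. This is worth flagging, since it explains why the proof goes through smoothly even though the semantics of implication is not locally determined at a single state. All other cases are routine applications of the induction hypothesis, and the whole argument is a standard structural induction once the implication case is understood.
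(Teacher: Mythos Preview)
Your proof is correct and is exactly the standard structural induction one would expect. The paper does not supply a proof for this proposition at all, treating persistency as a routine fact; your argument fills in precisely the details one would write out, including the observation that the implication case needs only transitivity of $\subseteq$ and not the induction hypothesis.
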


\begin{definition}
Define $\inqbq$ to be the set of all the valid formulas in any relational information model $M$. 
We also write $\inqbqn$ for the collection of formulas forced when cardinality of $W$ is restricted to at most $n$, and $\inqbqo = {\bigcap}_{n \in \omega}\inqbqn$. 
\end{definition}

We note that $\neg\neg P(\overline{x}) \to P(\overline{x}) \in \inqbq$ for every $P \in \lna{Pred}$. Clearly, $\inqbq_{1}$ is just the classical logic and for all $n \in \omega$, we have $\inqbq \subseteq \inqbqo \subseteq \inqbqn$.

\if0
Now we can define inquisitive logic as follows: 
\[
\inqbq := \{ \phi \in \lang \mid \fwdm \Vdash \phi \; \text{ for all } W, D, \mM \in \InjModels{W,D} \}.
\]
We also write $\inqbqn$ for the collection of formulas forced when cardinality of $W$ is restricted to at most $n$, and $\inqbqo = \bigcap\limits_{n \in \omega}\inqbqn$. One notes that thanks to \ref{regat}, $\neg\neg P(\overline{x}) \to P(\overline{x}) \in \inqbq$ for every $P \in \lna{Pred}$. 
Clearly, $\inqbq_{1}$ is just the classical logic and for all $n \in \omega$, we have
$\inqbq \subseteq \inqbqo \subseteq \inqbqn$.
\fi

\section{Schematic Validity and The Casari Axiom} \label{sec:schema} \label{sec:casari}
 The clause for implication and for $\inqd$ given above are those of Kripke models for intuitionistic predicate logic, with the accessibility ordering being the reverse inclusion on states.  
   Obviously, neither $\inqbq$, nor $\inqbqo$, nor $\inqbqn$ for any fixed $n$ is a superintutionistic predicate logic in the sense 
 of being closed under substitutions of formulas for predicate symbols. In order to formalize such substitutions, Ono \cite{OnoH:1973} employs conventions of Church \cite[Ch. III]{Church1958}, 
  whereas Gabbay, Shehtman and Skvortsov \cite[\S\ 2.2--2.5]{GabSkvShe} follow Bourbaki in using the notion of a \emph{scheme} (which appears to resemble \emph{locally nameless} representation in mechanical reasoning, cf. \cite{Chargueraud12}). The treatment given by Kleene \cite[\S\ VII.34, pp. 155--162]{klee:intr52} appears compact and accessible. 
\if0
 but the key idea is clear enough: any Kripke model  forcing $\phi$ 
 also has to force any formula arising from $\varphi$ by properly replacing $P(\overline{x})$ with an arbitrary $\psi(\overline{x})$ satisfying natural (if technically involved) sanity conditions, as one can freely change the interpretation of atoms everywhere in the poset, as long as persistence is respected. 
\fi
\begin{definition} \label{def:subst}
Assume  $\bar{a}_1$, $\dots$, $\bar{a}_k$ are $n$-ary lists of variables, $P \in \mathsf{Pred}$ is $n$-ary and $P(\bar{a}_1), \dots, P(\bar{a}_k)$ are all occurrences of $P$ in $\phi$. 
Assume furthermore that $\psi$ is a formula with \emph{exactly} $n$ free variables. Moreover, for each  $i \leq k$,  create an \emph{$i$-clean instance} of $\psi$ by renaming bound variables to ensure no variable in $\bar{a}_i$ becomes captured in $\psi$ after substitution.  Then $(\phi)_{P}^{\psi}$ denotes the outcome of syntactic replacement of each $P(\bar{a}_i)$ with the corresponding $i$-clean instance of $\psi(\bar{a}_i)$.
\end{definition}

\begin{remark}
A variant of this definition more faithful to that of Kleene  would allow for \emph{at least} $n$ free variables in the formula being substituted instead of \emph{exactly} $n$. In such a situation, we would need to insist on additional clean-up to ensure that these additional variables do not get captured in $\phi$ after substituting corresponding occurrences of $P$. The definition would also get require a minor reformulation if our syntax allowed other individual terms than variables.
\end{remark}


\begin{remark} \label{rem:sokripke}
Any intuitionistic Kripke structure  forcing some given $\phi$ 
 also has to force $(\phi)_{P}^{\psi}$ for every suitable choice of $P$ and $\psi$, 
 as one can freely change the interpretation of atomic relations at any node, as long as persistence is respected. However, in a relational information model, the interpretation of atoms  at non-singleton states is induced by their singleton substates.  
Thus, e.g., the atomic validity of double negation elimination does not extend to all its non-atomic instances. \iffull For readers who are not familiar with the relationship between inquisitive logic and intuitionistic logic, we provide a detailed discussion in Appendix \ref{sec:cd}.\else\fi
\end{remark}
Table \ref{table:scheme} present some important examples of schemes to be investigated. 

\newcommand{\negsp}{\vspace{-1pt}}

\begin{table}
\negsp
\caption{Formula Schemes}
\label{table:scheme}
\begin{center}
\begin{tabular}{|c|c|}
\hline
Name & Scheme \\
\hline
\lna{Kuroda} & $\Any{x}\neg\neg\phi \to \neg\neg\Any{x}\phi$ \\
\hline
\lna{CD} & $\Any{x}(\phi(x)\inqd\psi) \to (\Any{x}\phi(x))\inqd\psi)$ where  $x$ does not appear in $\psi$ \\
\hline
{\lna{CasariAtomic}} & $(\Any{x}((P(x) \to \Any{x}P(x)) \to \Any{x}P(x))) \to \Any{x}P(x)$ \\
\hline
{\lna{CasariDNAtomic}} & $(\Any{x}((\neg \neg P(x) \to \Any{x}\neg \neg P(x)) \to \Any{x}\neg \neg P(x))) \to \Any{x}\neg \neg P(x)$ \\
\hline
{\lna{CasariScheme}} & $(\Any{x}((\phi(x) \to \Any{x}\phi(x)) \to \Any{x}\phi(x))) \to \Any{x}\phi(x)$ \\
\hline
\end{tabular}
\end{center}
\negsp
\end{table}

\begin{lemma}[Ciardelli \& Grilletti]
Both $\lna{Kuroda}$ and $\lna{CD}$ are in $\inqbq$. 
\end{lemma}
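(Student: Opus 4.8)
The plan is to verify both schemes directly from the support clauses. Fix an arbitrary relational information model $\mM = (W,D,\mI)$, an assignment $g$, and a state $s$, and show that the relevant formula is supported at $s$. Since in both cases the principal connective is an implication, the clause for $\to$ reduces the task to taking an arbitrary $t \subseteq s$ that supports the antecedent and deriving support for the consequent at $t$; so I may work at a single fixed state throughout.

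The one auxiliary fact I would isolate first is a \emph{double-negation lemma}: for every $\chi$, $\mM, t \Vdash_{g} \neg\neg\chi$ iff $\mM, \{w\} \Vdash_{g} \chi$ for every $w \in t$. I prove it in two small steps. First, $\mM, \varnothing \Vdash_{g} \chi$ holds for every $\chi$ by a trivial induction, so $\mM, t \Vdash_{g} \neg\chi$ amounts to saying that no nonempty substate of $t$ supports $\chi$; by Persistency (Proposition~\ref{prop:persistency}) a nonempty substate supports $\chi$ exactly when some singleton $\{w\}$ with $w \in t$ does, whence $\mM, t \Vdash_{g} \neg\chi$ iff $\mM, \{w\} \not\Vdash_{g} \chi$ for all $w \in t$. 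Applying this once more to $\neg\chi$ and unwinding at singletons yields the stated equivalence for $\neg\neg\chi$.

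For $\lna{Kuroda}$, assume $\mM, t \Vdash_{g} \Any{x}\neg\neg\phi$. Unfolding the clause for $\Any{x}$ together with the double-negation lemma, this says precisely: for every $d \in D$ and every $w \in t$, $\mM, \{w\} \Vdash_{g[x \mapsto d]} \phi$. The consequent $\neg\neg\Any{x}\phi$, again via the double-negation lemma and the clause for $\Any{x}$, says: for every $w \in t$ and every $d \in D$, $\mM, \{w\} \Vdash_{g[x \mapsto d]} \phi$. These two conditions differ only by the order of the two universal quantifiers, so they coincide and the implication holds. For $\lna{CD}$, assume $\mM, t \Vdash_{g} \Any{x}(\phi(x) \inqd \psi)$, i.e.\ for each $d \in D$ we have $\mM, t \Vdash_{g[x \mapsto d]} \phi(x)$ or $\mM, t \Vdash_{g[x \mapsto d]} \psi$. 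Since $x$ is not free in $\psi$, the right disjunct is independent of $d$, so $\mM, t \Vdash_{g[x \mapsto d]} \psi$ iff $\mM, t \Vdash_{g} \psi$. I then split on whether $\psi$ is supported at $t$: if it is, the right disjunct of the consequent $(\Any{x}\phi(x)) \inqd \psi$ holds; otherwise the right disjunct fails for every $d$, forcing $\mM, t \Vdash_{g[x \mapsto d]} \phi(x)$ for all $d$, which is exactly $\mM, t \Vdash_{g} \Any{x}\phi(x)$, the left disjunct.

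I do not expect a serious obstacle, but two structural features do the real work and deserve to be flagged. For $\lna{CD}$, the key point is that $\inqd$ is supported at a state \emph{as a whole} rather than by splitting it into substates, so the uniform (``constant'') choice of disjunct licensed by the genuinely constant domain $D$ is precisely what $\inqd$ records. For $\lna{Kuroda}$, the delicate step is the double-negation lemma, where Persistency together with the empty state supporting everything is what lets one pass between a substate and its singletons and thereby read $\neg\neg$ truth-conditionally; once that lemma is in place, the rest is bookkeeping with the support clauses.
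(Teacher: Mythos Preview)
Your argument is correct. The double-negation lemma is exactly right (it is the standard ``truth-conditionality of $\neg\neg$'' fact in inquisitive semantics), and both applications go through as you describe.

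As for comparison: the paper does not actually prove this lemma. It is stated with attribution to Ciardelli and Grilletti~\cite{Ciardelli2022a} and left at that. What the paper \emph{does} provide, later and for a somewhat different purpose, are derivations of both schemes in the labelled sequent calculus $\mathbf{G}(\mathsf{FBInqBQ})$: \lna{CD} in Example~\ref{ex:cd} and \lna{Kuroda} in Proposition~\ref{prop:kurocalc}. Via soundness (Proposition~\ref{prop:sound}) these yield membership in each $\inqbqn$, and since neither derivation depends on the bound, one indirectly recovers validity in every relational information model. Your direct semantic verification is more self-contained and, for \lna{Kuroda}, in fact shows the stronger fact that antecedent and consequent have identical support conditions; the sequent-calculus route, by contrast, ties the result into the paper's proof-theoretic framework and illustrates that both derivations avoid the rule $(\Rightarrow\mathtt{at})$, hence are schematically valid in the sense of Theorem~\ref{thm:generation_schematic_validities}.
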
 

\noindent
In this section, we focus on $\lna{CasariAtomic}$ and its schematic version \lna{CasariScheme} from Table \ref{table:scheme}. The following is not hard to establish, using either a chosen Gentzen-style or Hilbert-style system for intuitionistic predicate logic, or one's preferred proof assistant, or favourite semantics such as Kripke frames. \iffull\, A natural deduction derivation is provided in Appendix \ref{sec:casat}.\else\fi.

\begin{theorem} \label{th:casat}  \label{cor:casat}
Casari scheme instantiated with $($doubly$)$ negated atoms is a theorem of intuitionistic logic: $\lna{CasariDNAtomic} \in \lna{IQC}$. Consequently, $\lna{CasariAtomic} \in \inqbq$.
\end{theorem}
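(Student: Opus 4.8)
\textbf{Part 1: $\lna{CasariDNAtomic} \in \lna{IQC}$.} The plan is to give a direct natural-deduction derivation in intuitionistic predicate logic, the crux being that a \emph{doubly negated} antecedent trivializes the self-referential premise of each Casari implication, which is exactly what makes the otherwise non-intuitionistic scheme provable here. Write $A := \Any{x}\neg\neg P(x)$, which is closed in $x$ (the inner quantifier rebinds it), and let $H := \Any{x}((\neg\neg P(x)\to A)\to A)$ be the antecedent of \lna{CasariDNAtomic}. Assuming $H$, I would prove $A$ by $\forall$-introduction: fix a fresh parameter $a$ and, to establish $\neg\neg P(a)$, assume $\neg P(a)$ and derive $\bot$. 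The key step is that under the assumption $\neg P(a)$ the premise $\neg\neg P(a)\to A$ is derivable for free: assuming $\neg\neg P(a)$ and applying it to $\neg P(a)$ yields $\bot$, hence $A$ by \emph{ex falso}. Feeding this into the instance $(\neg\neg P(a)\to A)\to A$ of $H$ at $a$ produces $A$; instantiating $A=\Any{x}\neg\neg P(x)$ at $a$ gives back $\neg\neg P(a)$, which contradicts the standing assumption $\neg P(a)$. This closes the derivation of $\bot$, discharges $\neg P(a)$ to give $\neg\neg P(a)$, and (since $a$ is fresh and $H$ is closed) generalizes to $A$; discharging $H$ yields $H\to A=\lna{CasariDNAtomic}$. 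The same argument shows more generally that \lna{CasariScheme} is intuitionistically provable whenever $\phi$ has the form $\neg\neg\chi$.

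\textbf{Part 2: consequently $\lna{CasariAtomic}\in\inqbq$.} This I would obtain by transferring the derivation through two observations about the support semantics. First, every relational information model is, on the $\{\to,\land,\forall,\bot\}$-fragment together with atoms, an intuitionistic Kripke model over the poset of states ordered by reverse inclusion: the atom clause $\mM,s\Vdash_g P(\overline{x})$ amounts to $\overline{g(x)}\in\bigcap_{w\in s}\mI(P,w)$, which is monotone in the required sense by Persistency (Proposition \ref{prop:persistency}), the domain $D$ is constant, and the empty state forces every formula, so \emph{ex falso} is sound. Hence every theorem of $\lna{IQC}$ — in particular \lna{CasariDNAtomic} by Part 1 — is valid in every relational information model, i.e.\ belongs to $\inqbq$. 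Second, the excerpt's observation that $\neg\neg P(\overline{x})\to P(\overline{x})\in\inqbq$, together with the intuitionistically valid $P(\overline{x})\to\neg\neg P(\overline{x})$, gives $P(\overline{x})\leftrightarrow\neg\neg P(\overline{x})\in\inqbq$, so $P(\overline{x})$ and $\neg\neg P(\overline{x})$ have identical support at every state under every assignment.

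Since \lna{CasariAtomic} and \lna{CasariDNAtomic} differ only by replacing each occurrence $P(x)$ with $\neg\neg P(x)$ inside a context built solely from $\to$, $\land$ and $\forall$, a routine compositional replacement argument — the support of a formula depends only on the supports of its immediate subformulas — shows $\mM,s\Vdash_g\lna{CasariAtomic}$ iff $\mM,s\Vdash_g\lna{CasariDNAtomic}$ for every $\mM$, $s$, $g$. Combined with the $\inqbq$-validity of \lna{CasariDNAtomic}, this yields $\lna{CasariAtomic}\in\inqbq$.

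\textbf{Anticipated obstacle.} The genuinely novel point is the proof-theoretic trick of Part 1: recognizing that double negation makes the antecedent $\neg\neg P(a)\to A$ of each Casari instance derivable by \emph{ex falso} once $\neg P(a)$ is assumed. Part 2 is essentially bookkeeping, but one must be careful that the replacement of $P$ by $\neg\neg P$ occurs inside a purely intuitionistic context (no $\inqd$ or $\inqe$ intervenes), so that both the embedding $\lna{IQC}\subseteq\inqbq$ and the congruence property of support apply cleanly; the same caveat is what ultimately separates \lna{CasariAtomic} from the full \lna{CasariScheme}.
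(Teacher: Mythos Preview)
Your proposal is correct and matches the paper's approach: the natural-deduction argument in Part~1 is precisely the derivation given in the paper's appendix (assume $H$, fix a fresh parameter, assume $\neg P(a)$, obtain $\neg\neg P(a)\to A$ by \emph{ex falso}, feed it to $H$, instantiate, and contradict $\neg P(a)$). Your Part~2 spells out the ``Consequently'' that the paper leaves implicit, via the embedding of \lna{IQC} into $\inqbq$ and the atomic double-negation equivalence already noted in the paper; this is exactly the intended bridge.
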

\if0
\begin{proof}
The formula (just like its single-negation variant) in fact lies within the scope of several partial decision procedures for \lna{IQC} such as the \texttt{firstorder} tactic in Coq. 
\qed\end{proof}

\ksnote{I have read the reviewers' comment again, maybe I think it is better to put our comments on coq in the conclusion part. Otherwise we need to explain what is \texttt{firstorder} here. I temporally drop the description on coq. }

\begin{corollary}
$\lna{CasariAtomic} \in \inqbq$.
\end{corollary}

\ksnote{Maybe we can combine Theorem 1 and Corollary 1?}
\fi 


\begin{theorem}[Casari \cite{Casari83}, Th. 10(2), p. 294] \label{th:casari}
A Kripke structure for intuitionistic predicate logic forces $\lna{CasariAtomic}$ $($and, consequently, $\lna{CasariScheme}$$)$  
whenever it does not contain infinitely ascending chains. 
\end{theorem}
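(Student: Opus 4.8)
The plan is to prove that $\lna{CasariAtomic}$ is valid on any such structure by contradiction, constructing an infinite strictly ascending chain out of a putative refutation, and then to obtain $\lna{CasariScheme}$ for free from the closure of intuitionistic frame validity under substitution (Remark~\ref{rem:sokripke}). Write $\Phi := \Any{x}P(x)$, let $\leq$ denote the intuitionistic accessibility order, and abbreviate the antecedent by $A := \Any{x}((P(x) \to \Phi) \to \Phi)$. If $\lna{CasariAtomic}$ were refuted somewhere, then unfolding the outer implication yields a world $w_{0}$ with $w_{0} \Vdash A$ and $w_{0} \not\Vdash \Phi$.

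The core of the argument is a one-step lemma: for every $w$ with $w \Vdash A$ and $w \not\Vdash \Phi$ there is $w' > w$ enjoying the same two properties. Granting this, iteration from $w_{0}$ produces a strictly ascending chain $w_{0} < w_{1} < w_{2} < \cdots$, directly contradicting the absence of infinitely ascending chains. To prove the step, I would reason as follows. Since $w \not\Vdash \Phi = \Any{x}P(x)$, there is a world $v \geq w$ and a witness $d$ in the domain of $v$ with $v \not\Vdash P(d)$. Crucially, $v \not\Vdash \Phi$ as well, for $v \Vdash \Any{x}P(x)$ would force $v \Vdash P(d)$, a contradiction. By persistence of the universally quantified $A$ we have $v \Vdash A$, so instantiating its universal at $d$ gives $v \Vdash (P(d) \to \Phi) \to \Phi$; combined with $v \not\Vdash \Phi$, this forces $v \not\Vdash P(d) \to \Phi$, i.e., there is $w' \geq v$ with $w' \Vdash P(d)$ and $w' \not\Vdash \Phi$. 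Because $v \not\Vdash P(d)$ but $w' \Vdash P(d)$, we get $w' \neq v$, hence $w' > v \geq w$ and in particular $w' > w$; persistence again yields $w' \Vdash A$, completing the step.

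Finally, $\lna{CasariScheme}$ follows with no further semantic work: the validity of $\lna{CasariAtomic}$ on the frame is preserved when an arbitrary $\phi(x)$ is substituted for the atom $P$, since any intuitionistic Kripke structure forcing a formula also forces each of its substitution instances (Remark~\ref{rem:sokripke}). This is exactly the ``consequently'' in the statement.

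The main obstacle is ensuring that the chain \emph{strictly} increases at every stage, so that it genuinely exhibits an infinite ascending chain rather than stalling. This rests on two observations that I would take care to isolate: the witness world $v$ at which $P(d)$ fails cannot itself force $\Phi$ (otherwise it would force $P(d)$), and the passage from $v$ to $w'$ is strict precisely because $w'$ flips $P(d)$ from unforced to forced. In the constant-domain setting relevant to relational information models one may take the witness at $w$ itself (so $v = w$), which streamlines the bookkeeping; I would keep the argument in the general varying-domain form to match the cited statement for arbitrary intuitionistic Kripke structures, noting this simplification in passing.
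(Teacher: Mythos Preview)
Your argument is correct. The one-step lemma is sound: from $w \Vdash A$ and $w \not\Vdash \Phi$ you correctly extract $v \geq w$ and $d$ with $v \not\Vdash P(d)$, observe $v \not\Vdash \Phi$, instantiate $A$ at $d$ to get $v \not\Vdash P(d) \to \Phi$, and hence find $w' > v \geq w$ with $w' \Vdash P(d)$, $w' \not\Vdash \Phi$; the strict inequality is secured exactly as you say, by $P(d)$ flipping. Iteration yields the forbidden chain, and the passage to $\lna{CasariScheme}$ via Remark~\ref{rem:sokripke} is the intended one.

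As for comparison with the paper: there is nothing to compare. The paper does not supply a proof of Theorem~\ref{th:casari}; it is quoted from Casari~\cite{Casari83} (Th.~10(2)) as a known result and used as a black box to obtain Corollary~\ref{cor:casfinb}. Your write-up is the standard semantic argument one would expect for this citation, and your care in handling the varying-domain case (while noting the constant-domain simplification $v = w$) goes slightly beyond what the paper needs.
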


\noindent As mentioned in the introduction, one is reminded of modal propositional schemes sensitive to the presence of infinite chains. 
Esakia \cite{Esakia98:bsl} highlights more such analogies: e.g., between the Kuroda scheme above and the propositional McKinsey scheme.

\begin{corollary}\label{cor:casfinb}
All instances of $\lna{CasariScheme}$  are in $\inqbqo$. 
\end{corollary}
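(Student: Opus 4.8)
The plan is to read \lna{CasariScheme} $\in \inqbqo$ off Casari's semantic characterization (Theorem~\ref{th:casari}) by recognizing every finitely bounded relational information model as an intuitionistic predicate Kripke structure that contains no infinitely ascending chains. Since $\inqbqo = \bigcap_{n\in\omega}\inqbqn$, it suffices to fix an arbitrary $n$ and show that any $\mM = (W,D,\mI)$ with $|W|\le n$ validates every instance of \lna{CasariScheme}.

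First I would make precise the support/forcing correspondence already announced at the start of this section. The state poset $(\powset(W),\supseteq)$, equipped with the constant individual domain $D$ and the atomic valuation induced by $\mI$, is an intuitionistic predicate Kripke structure: the support clauses for $\to$, $\land$, and $\Any{x}$ are exactly the usual (constant-domain) intuitionistic clauses relative to the reverse-inclusion ordering, while $\inqd$ and $\Exi{x}$ are interpreted precisely as intuitionistic disjunction and existential quantification. Hence, for \emph{any} formula $\varphi$ — in particular any instance of \lna{CasariScheme}, whose $\phi(x)$ may itself contain $\inqd$ and $\Exi{x}$ — the relation $\mM,s\Vdash_g\varphi$ agrees with intuitionistic forcing of $\varphi$ at the node $s$, so validity of $\varphi$ in $\mM$ is just forcing of $\varphi$ at every node.

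The second step is purely order-theoretic: when $|W|\le n$, the poset $(\powset(W),\supseteq)$ is finite and therefore contains no infinite chains in either direction, a fortiori no infinitely ascending chain in the accessibility ordering. Theorem~\ref{th:casari} then applies and yields that this structure forces \lna{CasariScheme} at every node, i.e.\ $\mM$ validates \lna{CasariScheme}. As $\mM$ with $|W|\le n$ was arbitrary we get \lna{CasariScheme} $\in\inqbqn$, and since $n$ was arbitrary, \lna{CasariScheme} $\in\inqbqo$.

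I expect the only delicate point to be the legitimacy of the first step: a relational information model differs from a generic intuitionistic Kripke structure in that atoms are confined to the regular, double-negation-closed upsets, and in that the empty state $\varnothing$ supports $\bot$ and hence everything. Neither feature obstructs the argument — the atomic restriction merely selects a subclass of admissible intuitionistic valuations, and the fallible state $\varnothing$ is the unique $\supseteq$-greatest (everywhere-accessible) node, creating no new infinite chains — but it is worth one sentence confirming that Theorem~\ref{th:casari}, established for arbitrary intuitionistic predicate Kripke structures without infinitely ascending chains, indeed covers precisely these finite structures.
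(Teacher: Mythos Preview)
Your proposal is correct and follows essentially the same route as the paper: view the state poset $(\powset(W),\supseteq)$ of an $n$-bounded relational information model as an intuitionistic Kripke structure, observe that it is finite and hence free of infinitely ascending chains, and invoke Theorem~\ref{th:casari}. The paper's proof is a two-line compression of exactly this argument; your additional remarks on the regular-atom restriction and the inconsistent state $\varnothing$ are sound elaborations rather than deviations.
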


\begin{proof}
For each $n$, the underlying intuitionistic frames of $\inqbqn$ models (with reverse inclusion on states being the partial ordering) do not contain infinitely ascending $\supseteq$-chains, as relational information models induced by finitely many possible worlds contain only finitely many states. 
\qed\end{proof}

\noindent

\noindent By contrast, in the light of Remark \ref{rem:sokripke},  inquisitively atomic validity and schematic validity may come apart:


\begin{theorem} \label{th:casfail}
$\lna{CasariScheme}$ with $\phi(x) \equiv \Exi{y} R(x,y)$ fails in \inqbq.  
\end{theorem}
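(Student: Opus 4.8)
The plan is to refute the displayed instance semantically. Write $A := \Any{x}\Exi{y}R(x,y)$, so that with $\phi(x)\equiv\Exi{y}R(x,y)$ the formula in question is $(\Any{x}((\Exi{y}R(x,y)\to A)\to A))\to A$. I will build a relational information model $\mM=(W,D,\mI)$ and pick a state $s^\star$ at which the antecedent is supported while $A$ is not. For the bookkeeping, say that $d\in D$ \emph{has a witness in} a state $s$ if there is $e\in D$ with $(d,e)\in\mI(R,w)$ for all $w\in s$; by the clauses for $\Exi{}$ and $\Any{}$ we have $\mM,s\Vdash_{g[x\mapsto d]}\Exi{y}R(x,y)$ iff $d$ has a witness in $s$, and $\mM,s\Vdash A$ iff \emph{every} $d$ has a witness in $s$. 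Put $\mathsf{Fail}(s):=\{d\in D : d \text{ has no witness in } s\}$, so $\mM,s\Vdash A$ iff $\mathsf{Fail}(s)=\varnothing$; by Persistency a witness for $s$ is a witness for each $t\subseteq s$, hence $\mathsf{Fail}$ is $\subseteq$-monotone: $u\subseteq s$ implies $\mathsf{Fail}(u)\subseteq\mathsf{Fail}(s)$.

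Next I would unfold the antecedent. Applying the implication clause twice, $\mM,s^\star$ supports $\Any{x}((\Exi{y}R(x,y)\to A)\to A)$ iff for every $d\in D$ and every $t\subseteq s^\star$ the support of $\Exi{y}R(x,y)\to A$ at $(t,g[x\mapsto d])$ forces $\mM,t\Vdash A$. Taking contrapositives and using that $t$ \emph{refutes} $\Exi{y}R(x,y)\to A$ exactly when some $u\subseteq t$ supports $\Exi{y}R(x,y)$ yet refutes $A$, the entire task reduces to arranging two conditions: (conclusion) $\mathsf{Fail}(s^\star)\neq\varnothing$; and (saving) for every $A$-failing $t\subseteq s^\star$ and every $d\in\mathsf{Fail}(t)$ there is an $A$-failing $u\subseteq t$ with $d\notin\mathsf{Fail}(u)$ — i.e. from any refuting substate one can recover the witness of any prescribed individual while some other individual still lacks one. (When $d\notin\mathsf{Fail}(t)$ one simply takes $u=t$.)

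Before constructing $\mM$, the saving condition already dictates its shape. It forbids $\subseteq$-minimal $A$-failing states, since at such a state no proper $A$-failing substate exists and an individual missing there could never be saved. Thus the $A$-failing states form a $\supseteq$-up-set with no minimal element, so the intuitionistic frame $(\powset(W),\supseteq)$ must carry an infinite ascending chain; this is precisely the configuration that Theorem~\ref{th:casari} excludes on well-founded frames, and it forces $W$ to be infinite. The same point explains the contrast with the atomic case: if $R(x,\cdot)$ were a regular atom, $\mathsf{Fail}$ would be fixed by the singleton substates and could not be sustained under shrinking, which is why $\lna{CasariAtomic}$ nevertheless holds (Theorem~\ref{th:casat}), whereas the inquisitive $\Exi{}$-instance can escape.

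Finally, for the construction I would let each world \emph{delete} witnesses: over a countable witness domain set $\mI(R,w)[d]$ to be that domain with some elements removed, so that every finite state has the finite-intersection property and hence supports $A$, while suitable infinite states exhaust the witnesses of some individuals and refute $A$. With such an interpretation the \emph{recovery} half of the saving condition is automatic — deleting from $t$ the fibre over one exhausted witness returns $d$ a witness on a still-infinite substate — so the whole difficulty concentrates in making the refutation of $A$ \emph{sustained}: the $A$-failing infinite states must be an up-set without minimal elements on which any single individual can be saved while another keeps failing. I expect this to be the main obstacle, and it is genuinely delicate, because one cannot simply let $A$ fail on \emph{every} infinite state: a diagonal argument produces, for any such deleting interpretation, an infinite state on which no individual's witnesses are exhausted. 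The proof therefore needs a carefully coordinated choice of deleted witnesses across the infinitely many worlds, after which the two displayed conditions are checked by the support computations above, yielding a relational information model that refutes the Casari instance and hence $\lna{CasariScheme}\notin\inqbq$.
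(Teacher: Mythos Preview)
Your semantic reduction is correct and matches the paper's approach: the instance fails at a state $s^\star$ exactly when $\mathsf{Fail}(s^\star)\neq\varnothing$ and your ``saving'' condition holds on the $A$-failing substates of $s^\star$. You also correctly diagnose why $W$ must be infinite and why the atomic case behaves differently.

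The gap is that you do not actually construct the model. You sketch an idea (``let each world delete witnesses''), explicitly identify the coordination problem as ``the main obstacle'' and ``genuinely delicate,'' and then stop. But that unspecified ``carefully coordinated choice of deleted witnesses'' is the entire technical content of the theorem. The paper carries it out concretely: with $W=D=\omega$ it sets
\[
\mI(R,i)\;=\;\{(2m{+}1,2m{+}1):m\in\omega\}\;\cup\;\{(2m,j): j\neq i \text{ and } (j\text{ odd or }j>2m)\},
\]
and then proves two claims characterising, for each parity of $n$, exactly which states support $\Exi{y}R(n,y)$; from these your (conclusion) and (saving) conditions follow with $s^\star=\omega$. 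A referee's simplification recorded in Remark~\ref{rem:cascoh} uses $\mI(R,i)=\{(n,k):i<k\text{ or }(k=n\text{ and }n\neq i)\}$ instead. Your own diagonal remark already shows that naive deletion schemes fail, so an explicit $\mI$ together with the verification of the two conditions is not a routine detail one can wave past; without it the proposal is an outline, not a proof.
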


\begin{proof}
    Set $W$ and $D$ to be equal to $\omega$. For a fixed $i \in \omega$, we define $\mI(R,i)$ as follows:
    \begin{align*}
    \forall m \in \omega. & \quad ((2m + 1) ,(2m + 1)) \in  \mI(R,i)   \\
    \forall m, j \in \omega. & \quad j \neq i \;\&\; ((j \text{ odd}) \text{ or } (j > 2m)) \; \Longrightarrow (2m,j) \in \mI(R,i)
    \end{align*}
    %
    
    \noindent As $D = \omega$, for any $n \in \omega$ and any formula $\psi(x)$ whose sole free variable is $x$ we can write ``$\mM, s \Vdash \psi(n)$'' to denote $\mM, s \Vdash_g \phi(x)$ for some/any valuation s.t. $g(x) = n$.
    
    \begin{claim}
    $\forall s \subseteq \omega, m \in \omega. \quad \mM, s \Vdash \Exi{y} R(2 m + 1,y)$.
    \end{claim}
    This is shown by noting that at every possible state, $y = 2m + 1$ does the job.
    \begin{claim}
    $\forall s \subseteq \omega, m \in \omega. \quad \mM, s \nVdash \Exi{y} R(2 m,y)$ iff $s \supseteq \omega - \{0, 2, \dots, 2m\}$.
    \end{claim}
    
    \noindent {(Proof of \textit{Claim}.)}  To prove the left-to-right implication of this claim, assume that $s \not\supseteq \omega - \{0, 2, \dots, 2m\}$.  This means that there exists $n \not\in s$  which is \begin{center}
    (*) either an odd number or an even number strictly greater than $2m$. 
    \end{center}
    This however means that 
    every $i \in s$ is distinct from $n$ and  hence $(2m,n) \in \mI(R,i)$ by definition, thus $\mM, s \Vdash \Exi{y} R(2 m,y)$.  Conversely, assume $s \supseteq \omega - \{0, 2, \dots, 2m\}$. This means that every $n$ of the form (*) is a member of $s$. Regardless of $i \in \omega$, every $j \in \omega$ such that $(2m,j) \in \mI(R,i)$ must belong to $s$. But whichever $j$ we would pick to witness $\Exi{y} R(2 m,y)$ at $s$, we would get $(2m,j) \notin \mI(R,j)$. \hfill {(QED of \textit{Claim}.)}
    
    These two claims imply that we constructed an ascending-chain-based refutation of the corresponding instance of Casari as needed in Theorem \ref{th:casari}. In more detail, setting $\phi(x)$ to be $\Exi{y} R(x,y)$, by our Claims, entails that the denotation of $\Any{x}\phi(x)$ is the collection 
    $E := \{s \subseteq \omega \mid (\exists n \in \omega.(2n + 1) \not\in s) \vee 
    (\forall m \exists n. 2(m + n) \not\in s)\},$
    i.e., the collection of states that either fail to contain at least one odd number, or whose complement contains infinitely many even numbers. 
    Clearly, $E$ is downward closed, i.e., closed under inclusion. Consequently, we get the following:
    \begin{itemize}
    \item For an odd $m$, the denotation of $\phi(m) \to \Any{x}\phi(x)$ is $E$ and thus the denotation of 
    \iffull\[\else$\fi
    (\phi(m) \to \Any{x}\phi(x)) \to \Any{x}\phi(x)
    \iffull\]\else$\fi
     is the whole space of subsets of $\omega$.
    \item For an even $m$,  the denotation of $\phi(m)$ is the downset of sets not containing at least one element of the form (*), thus evidence for failure of $\phi(m) \to \Any{x}\phi(x)$ is provided by the set
     $$\{s \subseteq \omega \mid (\forall n \in \omega.(2n + 1) \in s) \wedge 
    (\exists m' \forall n. 2(m' + n) \in s) \wedge (\exists n. 2n \not\in s \wedge 2n >m)\},$$ 
    i.e., the set of those states that contain all the odd numbers and cofinitely many even ones (thus refuting $\Any{x}\phi(x)$), but fail to contain at least one even number greater than $2m$ (thus satisfying $\phi(m)$). 
    One easily notes that the superset-closure of this set of states is exactly the complement of $E$, thus the denotation of $\phi(m) \to \Any{x}\phi(x)$ is exactly $E$, and we finish just like in the odd case.
    \end{itemize}
    This shows that the denotation of the antecedent of our instance of $\lna{CasariScheme}$ is the whole state space. It is not the case for the consequent though (which is $E$), from which the theorem follows.
    \qed\end{proof}


\begin{remark} \label{rem:cascoh}
A reviewer noted that whenever $\phi(x)$ is \emph{$n$-coherent} \cite{Ciardelli2022a} for some finite $n$, the corresponding Casari instance does not fail and asked if this is in fact a necessary and sufficient condition. The same reviewer also suggested a simplified construction proving Theorem \ref{th:casfail} with
\iffull
\[\mI(R,i) := \{(n, k) \mid i < k \; \text{or}\; (k = n\; \text{and}\; n \neq i)\}.\] 
\else
$\mI(R,i)$ defined as $\{(n, k) \mid i < k \; \text{or}\; (k = n\; \text{and}\; n \neq i)\}$. 
\fi
 Such questions seem promising experiments for a Rocq mechanization developed by Max Ole Elliger \cite{OleMSc,OleForm}. As in its present form it contains a detailed formalization of our proof of Theorem \ref{th:casfail}, we leave experiments with the statement and the proof of the theorem to subsequent versions of the paper or follow-up work. Note that the conjecture regarding $n$-coherence being a \emph{necessary} condition for $\phi(x)$ to yield a non-failing instance appears false, at least  without some minimal additional conditions. 
For example, $\Exi{x}P(x)$ fails to be coherent even in a transfinite sense \cite[Prop 3.3]{Ciardelli2022a}; merely insisting on $x$ being free does not seem a sufficient refinement.  
It is generally not obvious whether a refuting instance can be monadic, i.e., involve solely unary predicates, either for the Casari scheme or for other ones that are atomically but not generally valid (we do not have sufficiently many such examples now). It might nevertheless be possible; Rybakov and Shkatov  \cite{RybakovS24} amply illustrate that unary predicates in modal or superintuitionistic predicate logics can often simulate binary ones using the so-called Kripke Trick. 
\end{remark}

\if0
\begin{question} \label{quest:unary}
Does there exist an instance of Casari involving solely unary predicates that fails in \inqbq? More generally, are there schemes which are atomically valid, but which have refuting unary instances?
\end{question}
\fi


%

\section{Labelled Sequent Calculus for The Finite-Bounded Case} \label{sec:labseq}

From now on we focus on providing the  proof calculus promised in the introduction. Its schematic aspects are clarified by Theorem \ref{thm:generation_schematic_validities} below. We begin by ``translating''  the satisfaction relation 
in terms of a labelled formalism, as is common in the literature~\cite{Negri2005,Dyckhoff2011}. 
Since our target is {\em finite}-bounded inquisitive logic $\mathsf{InqBQ}_{n}$ 
for fixed but arbitrary $n$, 
we use a non-empty {\em finite} set $X$ of natural numbers as a label, similar to the labelled tableau calculus for propositional dependence logic~\cite{Sano2015a}. We exclude $\varnothing$ from our labels for simplicity. If we impose a constraint $\# X \leqslant n$, our results in the remainder of the paper instantiate to soundness,  completeness and syntactic cut-elimination for $\mathsf{InqBQ}_{n}$.


\begin{definition}
A {\em label} $X$ is a non-empty finite subset of $\omega$, i.e., $\varnothing \neq X \subseteq_{\mathrm{fin}} \omega$. A {\em labelled formula} is an expression of the form $X: \varphi$ where $X$ is a label and $\varphi$ is a formula. The {\em length} $|X:\varphi|$ of a labelled formula $X:\varphi$ is the number of all logical connectives in $\varphi$ (including quantifiers), with the cardinality of $X$ being disregarded.  
A {\em sequent}, denoted by $\Gamma \Rightarrow \Delta$, is a pair $(\Gamma, \Delta)$ of finite multisets of labelled formulas, where $\Gamma$ is the {\em antecedent} and $\Delta$ is the {\em consequent} of the sequent. 
\end{definition}

\noindent By translating ``$\mathfrak{M}, s \Vdash_{g} \varphi$'' to ``$X: \varphi$'', the satisfaction relation on a relational information model in Section \ref{sec:prelim} naturally provides the axioms and rules in Table \ref{table:seq_calc}. 
Note that the rules $(\Rightarrow \mathtt{at})$ and $(\Rightarrow \to)$ have finitely many premises depending on $\#X$ and the number of all non-empty finite subsets of $X$, respectively.
Since our labels are finite sets of natural numbers, the propositional fragment of our calculus is simpler than the existing labelled G3-style sequent calculi for inquitive propositional logic $\mathsf{InqB}$~\cite{Chen2017,Muller2023} in terms of the number of axioms and rules (see Remark \ref{rem:comparison}). 
This simplicity also enables us to expand the calculus naturally to the first-order level. If we drop the rule $(\Rightarrow \mathtt{at})$, disregard the labels and remove the repetition of the implication formula from the rule $(\to \Rightarrow)$, the resulting system coincides with $\mathbf{G3c}$ \cite[Definition 3.5.1]{TS2000} ~\cite[p.67]{NegriPlato2001}, i.e., G3-style sequent calculus for the first-order classical logic. 

\begin{remark}
\label{rem:comparison}
When the syntax is restricted to the propositional fragment, with the standard assumption that the set of propositional variables (predicate symbols of arity $0$) is countably infinite, we denote by $\mathbf{G}(\mathsf{InqB})$ the propositional fragment of $\mathbf{G}(\mathsf{FBInqBQ})$ where $(\Rightarrow \texttt{at})$ and $(\mathtt{id})$ are of the following form:
\iffull\else\begin{small}\fi
    $$
    \infer[(\Rightarrow \texttt{at})]
    {\Gamma \Rightarrow \Delta, X: P }
    {
    \inset{\Gamma \Rightarrow \Delta, \setof{k}:P}{k \in X}
    }
    \quad
    \infer[(\texttt{id}) \text{ where $X \supseteq Y$}]{X:P, \Gamma \Rightarrow \Delta,Y:P}
    {}.
    $$
\iffull\else\end{small}\fi
Since we dispense with atoms such as ``$s \subseteq t$'' 
in our calculus,  we only have 2 initial sequents and 7 logical rules for the propositional fragment\footnote{\label{ft:propcalc} Chen's labelled sequent calculus~\cite{Chen2017} has 3 axioms (initial sequents), 13 logical axioms and 9 rules for relational atoms ``$s \subseteq t$'', whereas that of M\"uller~\cite{Muller2023} has 3 axioms (initial sequents), 10 logical axioms and 11 rules for relational atoms ``$s \subseteq t$''.   The latter reference claims: ``Chen and Ma ~\cite{Chen2017} do not provide a modular construction [...] Our system is also much simpler and allows for a more elegant cut-admissibility proof."~\cite[p.40]{Muller2023}}.
 Note that we have rules with possibly more than or equal to two premises, e.g. $(\Rightarrow \mathtt{at})$, whose number of branches depends on the cardinality of $\# X$ of the label $X$ of the formula $X:P(\overline{x})$ in the conclusion of the rule $(\Rightarrow \mathtt{at})$. 
While this aspect might seem a drawback of the absence of relational atoms, we can conduct the root-first proof search yielding decidability results of our $\mathbf{G}(\mathsf{InqB})$. Note that all of our inference rules are height-preserving invertible and the number of possible  applications of $(\to \Rightarrow)$ is bounded by $2^{\#X} - 1$ where $X:\varphi \to \psi$ is the principal formula of $(\to \Rightarrow)$. By contrast, M\"uller~\cite[pp.65--66]{Muller2023} admits that the algorithm provided in \emph{op.cit.} may generate infinitely many new variables in the course of the proof-search process. 
We come back to $\mathbf{G}(\mathsf{InqB})$ also in Remark \ref{rem:inqb}. 
\end{remark}

\if0
\begin{remark} \label{rem:oldnd}
For  $\mathsf{InqBQ}_{n}$, Ciardelli and Grilletti~\cite{Ciardelli2022a} provided a sound and strongly complete natural deduction system.
However, there is little by way of proof-theoretic study of first-order inquisitive logic in terms of sequent calculi. Sano ~\cite{Sano2011c} provides a sound and complete labelled sequent calculus for $\mathsf{InqBQ}_{2}$, 
but that calculus is not G3-style and its labels are of a different character than ours. Furthermore, \emph{op.cit.} provided only \emph{semantic} cut elimination. If we impose a constraint $\# X \leqslant 2$, our results below instantiate to soundness,  completeness and syntactic cut-elimination for $\mathsf{InqBQ}_{2}$. 
\end{remark}
\fi


\begin{table}[htbp]
\hrule
    \centering 
    \iffull\else
    \negsp
    \fi
    \caption{Sequent Calculus $\mathbf{G}(\mathsf{FBInqBQ})$}
    \label{table:seq_calc}
    \iffull\else
    \begin{small}
    \fi
    \[
    \infer[(\mathtt{id}) \text{ where $X \supseteq Y$}]
    {X:P(\overline{x}), \Gamma \Rightarrow \Delta, Y:P(\overline{x})}{}
    \quad
    \infer[(\bot \Rightarrow)]{
    X:\bot, \Gamma \Rightarrow \Delta}{}
    \]
    $$
    \infer[(\Rightarrow \texttt{at})]
    {\Gamma \Rightarrow \Delta, X: P(\overline{x}) }
    {
    \inset{\Gamma \Rightarrow \Delta, \setof{k}:P(\overline{x})}{k \in X}
    }
    $$
\if0
    $$
    \infer[(\texttt{at} \Rightarrow) \text{ where $X \supseteq Y$}]{X:P(x_{1},\dots,x_{n}) , \Gamma \Rightarrow \Delta}
    {Y:P(x_{1},\dots,x_{n}), X:P(x_{1},\dots,x_{n}) , \Gamma \Rightarrow \Delta}
    $$
\fi
    $$
    \infer[(\Rightarrow\land)]{\Gamma \Rightarrow \Delta, X:\varphi \land \psi}{\Gamma \Rightarrow \Delta, X:\varphi & \Gamma \Rightarrow \Delta, X:\psi}
    \quad
    \infer[(\land \Rightarrow)]{X:\varphi \land \psi, \Gamma \Rightarrow \Delta}{X:\varphi, X:\psi, \Gamma \Rightarrow \Delta}
    $$
    $$
    \infer[(\Rightarrow \inqd)]{\Gamma \Rightarrow \Delta, X:\varphi \inqd \psi}{\Gamma \Rightarrow \Delta, X:\varphi, X:\psi}
    \quad
    \infer[(\inqd \Rightarrow)]{X:\varphi \inqd \psi, \Gamma \Rightarrow \Delta}
    {X:\varphi, \Gamma \Rightarrow \Delta & X:\psi, \Gamma \Rightarrow \Delta}
    $$
    $$
    \infer[{{(\Rightarrow \to )}}]
    {\Gamma \Rightarrow \Delta, X:\varphi \to \psi}
    {\inset{Y:\varphi, \Gamma \Rightarrow \Delta, Y:\psi}{X \supseteq Y}}
    $$
    $$
    \infer[{{(\to \Rightarrow)}} \text{ where $X \supseteq Y$}]{X:\varphi \to \psi, \Gamma \Rightarrow \Delta}
    {X:\varphi \to \psi, \Gamma \Rightarrow \Delta, Y:\varphi & Y:\psi, X:\varphi \to \psi, \Gamma \Rightarrow \Delta}
    $$
    $$
    \infer[(\Rightarrow \forall)\dagger]{\Gamma \Rightarrow \Delta, X: \Any{x}\varphi}{\Gamma \Rightarrow \Delta, X: \varphi[z/x]}
    \quad
    \infer[(\forall \Rightarrow)]{X: \Any{x}\varphi, \Gamma \Rightarrow \Delta}{X: \varphi[y/x], X: \Any{x}\varphi, \Gamma \Rightarrow \Delta}
    $$
    $$
    \infer[(\Rightarrow \inqe)]{\Gamma \Rightarrow \Delta, X: \Exi{x}\varphi }{ \Gamma \Rightarrow \Delta, X: \Exi{x}\varphi, X: \varphi[y/x]}
    \quad
    \infer[(\inqe \Rightarrow)\dagger]{X: \Exi{x}\varphi, \Gamma \Rightarrow \Delta}{X: \varphi[z/x], \Gamma \Rightarrow \Delta}
    \qquad
    $$
    where $\dagger$ is the eigenvariable condition: $z$ does not occur in the conclusion. 
    \iffull\else
     \end{small}
     \fi
\hrule
\end{table}

\begin{definition}
A labelled formula not in the context $\Gamma, \Delta$ of the conclusion of a rule is called {\em principal}. 
 A {\em derivation} $\mathcal{D}$ in $\mathbf{G}(\mathsf{FBInqBQ})$ is a finite tree generated from initial sequents $(\mathtt{id})$ and $(\bot \Rightarrow)$ by inference rules of $\mathbf{G}(\mathsf{FBInqBQ})$. The {\em height} of a derivation $\mathcal{D}$ is the length of the longest branch in $\mathcal{D}$. Given a derivation $\mathcal{D}$, $\mathtt{r}(\mathcal{D})$ is the last applied rule $($including the initial sequents$)$ of $\mathcal{D}$. We say that a sequent $\Gamma \Rightarrow \Delta$ is {\em derivable} in $\mathbf{G}(\mathsf{FBInqBQ})$ $($written: $\mathbf{G}(\mathsf{FBInqBQ}) \vdash \Gamma \Rightarrow \Delta$$)$ if there is a derivation $\mathcal{D}$ whose root is $\Gamma \Rightarrow \Delta$. 
    We write $\mathbf{G}(\mathsf{FBInqBQ}) \vdash_{h} \Gamma \Rightarrow \Delta$ to mean that there is a derivation of $\Gamma \Rightarrow \Delta$ whose height is at most $h$. 
\end{definition}

We define the set of subformulas of $\varphi$ as in Troelstra and Schwichtenberg~\cite[Definition 1.1.3]{TS2000} by requiring $\mathtt{Sub}(\Any{x}\psi)$ = $\setof{\Any{x}\psi} \cup \inset{\psi[z/x]}{z \in \mathsf{Var}}$ and similarly for $\mathtt{Sub}(\Exi{x}\psi)$.
Given labelled formulas $X:\varphi$ and $Y:\psi$, we say that {\em $Y:\psi$ is a {\em subexpression} of $X:\varphi$} if $\psi$ is a subformula of $\varphi$ and $Y \subseteq X$. Our calculus is trivially proved to be {\em analytic} in the following sense. 

\begin{proposition}
\label{cor:subfor_prop}
If $\Gamma \Rightarrow \Delta$ is derivable in $\mathbf{G}(\mathsf{FBInqBQ})$ then $\Gamma \Rightarrow \Delta$ has a derivation in $\mathbf{G}(\mathsf{FBInqBQ})$ such that every labelled subformula in the derivation is a subexpression of a labelled formula in $\Gamma, \Delta$.  
\end{proposition}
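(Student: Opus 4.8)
The plan is to prove the subformula property (Proposition~\ref{cor:subfor_prop}) by a straightforward induction on the height of a given derivation $\mathcal{D}$ of $\Gamma \Rightarrow \Delta$, establishing the stronger invariant that there exists a derivation in which \emph{every} labelled formula occurring anywhere is a subexpression of some labelled formula in the end-sequent $\Gamma, \Delta$. First I would record the two ingredients that make this work: the notion of subexpression is reflexive ($X:\varphi$ is a subexpression of itself) and, crucially, \emph{transitive}, so that once I know each premise of the last rule has a derivation with the subexpression-property relative to \emph{that premise}, it suffices to check that every labelled formula appearing in each premise is itself a subexpression of some formula in the conclusion $\Gamma \Rightarrow \Delta$. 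Transitivity then propagates the property down to the root.

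Next I would dispatch the base cases: the initial sequents $(\mathtt{id})$ and $(\bot \Rightarrow)$ are single-node derivations whose only labelled formulas are the principal ones, each trivially a subexpression of itself. For the inductive step I would go through the rules of Table~\ref{table:seq_calc} one by one, verifying in each case that every labelled formula in every premise is a subexpression of a labelled formula in the conclusion. For the logical rules the principal formula $X:\varphi$ in the conclusion has as its premises' active formulas precisely smaller-formula subexpressions: e.g.\ in $(\land \Rightarrow)$ the premise introduces $X:\varphi$ and $X:\psi$, both subexpressions of $X:\varphi\land\psi$ since $\varphi,\psi$ are subformulas and $X\subseteq X$; in $(\Rightarrow\to)$ and $(\to\Rightarrow)$ the active labels $Y$ satisfy $X\supseteq Y$, so $Y:\varphi$ and $Y:\psi$ are subexpressions of $X:\varphi\to\psi$ by both the subformula and the label-inclusion condition; in $(\Rightarrow\mathtt{at})$ the singleton labels $\{k\}$ with $k\in X$ satisfy $\{k\}\subseteq X$, so $\{k\}:P(\overline{x})$ is a subexpression of $X:P(\overline{x})$. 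The quantifier rules are handled by the enlarged definition of $\mathtt{Sub}$ adopted just before the statement, which already places every instance $\psi[z/x]$ in $\mathtt{Sub}(\Any{x}\psi)$ and $\mathtt{Sub}(\Exi{x}\psi)$, so the active formulas $X:\varphi[z/x]$, $X:\varphi[y/x]$ count as subexpressions with no further argument needed. The context $\Gamma,\Delta$ is copied verbatim from premise to conclusion in every rule, hence those formulas trivially satisfy the invariant.

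One subtlety I would flag rather than gloss over concerns the rules that \emph{retain} the principal formula in a premise, namely $(\to\Rightarrow)$, $(\forall\Rightarrow)$ and $(\Rightarrow\inqe)$: there the copy of $X:\varphi\to\psi$, $X:\Any{x}\varphi$, or $X:\Exi{x}\varphi$ reappearing upward is literally the same labelled formula as in the conclusion, so it is a subexpression of itself and poses no problem for the invariant. The eigenvariable rules $(\Rightarrow\forall)$ and $(\inqe\Rightarrow)$ introduce a fresh $z$, but since $\varphi[z/x]\in\mathtt{Sub}(\Any{x}\varphi)$ (resp.\ $\mathtt{Sub}(\Exi{x}\varphi)$) by the adopted convention, the freshly-labelled formula is again a subexpression of the principal one. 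I do not expect any genuine obstacle here; the only place demanding a moment's care is ensuring that the \emph{stronger} global form of the invariant is what gets carried through the induction, since the weaker statement ``the end-sequent formulas are subexpressions of themselves'' would be vacuous. The real content, as the paper's phrase ``trivially proved'' signals, is simply that the calculus was designed so that no rule ever introduces a label outside the $\subseteq$-downward closure of the conclusion's labels and no rule introduces a non-subformula; transitivity of the subexpression relation then does the rest.
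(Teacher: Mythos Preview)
Your proposal is correct and is exactly the routine induction the paper has in mind: the authors give no proof beyond the remark that the calculus is ``trivially proved to be analytic,'' and your rule-by-rule check that every premise formula is a subexpression of a conclusion formula, combined with transitivity of the subexpression relation, is the standard unpacking of that remark. The only cosmetic point is that, since $\mathbf{G}(\mathsf{FBInqBQ})$ is cut-free by definition, you need not construct a \emph{new} derivation at all: the given derivation $\mathcal{D}$ itself already satisfies the invariant, so the existential in the statement is witnessed by $\mathcal{D}$.
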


\begin{example}
Let $X \subseteq \omega$ be finite. To derive $\Rightarrow X: \neg \neg
P(\overline{x}) \to P(\overline{x})$ by the rule $(\Rightarrow \to)$, it suffices to derive a sequent $Y:\neg \neg P(\overline{x}) \Rightarrow Y:P(\overline{x})$ for all $Y$ such that $X \supseteq Y$. 
By the rule $(\Rightarrow \mathtt{at})$, it suffices to show the derivability of 
$Y:\neg \neg P(\overline{x}) \Rightarrow \setof{k}:P(\overline{x})$ for each $k \in Y$. 
Fix any $k \in Y$. Then we proceed as follows:
\iffull\else
\begin{scriptsize}
\fi
\begin{equation*}
\infer[(\to \Rightarrow)]{Y:\neg \neg P(\overline{x}) \Rightarrow \setof{k}:P(\overline{x})}{
\infer[(\Rightarrow \to)]{
Y:\neg \neg P(\overline{x}) \Rightarrow \setof{k}:P(\overline{x}), \setof{k}: \neg P(\overline{x})}{
\infer[(\mathtt{id})]{
\setof{k}:P(\overline{x}), Y:\neg \neg P(\overline{x}) \Rightarrow \setof{k}:P(\overline{x}), \setof{k}:\bot}{}
}
&
\infer[(\bot \Rightarrow)]{\setof{k}:\bot, Y:\neg \neg P(\overline{x}) \Rightarrow \setof{k}:P(\overline{x}) }{}
},
\end{equation*}
\iffull\else
\end{scriptsize}
\fi
where the number of premises of $(\Rightarrow \to)$ above is one. This means that $(\Rightarrow \mathtt{at})$ captures the semantic clause for the atomic formula.  
\end{example}

Recall that the persistency in the semantics on a relational information model is formulated as follows: if $s \supseteq t$ and $\mM,s \Vdash_{g} \varphi$ then $\mM,t\Vdash_{g} \varphi$ 
by Proposition \ref{prop:persistency}. This can be expressed in terms of our sequent calculus:   

\begin{proposition}
\label{prop:persis_seqcalc}
If $X \supseteq Y$, then $X: \varphi, \Gamma \Rightarrow \Delta, Y: \varphi$ is derivable in $\mathbf{G}(\mathsf{FBInqBQ})$ without applying $(\Rightarrow \mathtt{at})$. 
\end{proposition}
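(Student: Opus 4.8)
The plan is to proceed by induction on the length $|\varphi|$ of the principal formula, proving the \emph{stronger} statement in which the context is universally quantified: for every $\varphi$, all finite multisets $\Gamma,\Delta$ of labelled formulas, and all labels $X\supseteq Y$, the sequent $X:\varphi,\Gamma\Rightarrow\Delta,Y:\varphi$ is derivable in $\mathbf{G}(\mathsf{FBInqBQ})$ without $(\Rightarrow\mathtt{at})$. Quantifying over $\Gamma,\Delta$ up front is what makes the induction self-contained: each inductive step leaves behind identity sequents on proper subformulas sitting inside an enlarged context, and the generalised hypothesis absorbs that context, so no appeal to weakening is needed — which matters here, since structural rules such as weakening have not yet been shown admissible at this point.

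The base cases are immediate. If $\varphi=P(\overline{x})$ is atomic, the sequent is an instance of $(\mathtt{id})$, whose side condition $X\supseteq Y$ is exactly our hypothesis; this is the single place where the persistency built into the semantics gets discharged syntactically. If $\varphi=\bot$, then $X:\bot$ occurs in the antecedent and the sequent is an instance of $(\bot\Rightarrow)$. Neither axiom is $(\Rightarrow\mathtt{at})$.

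The conjunction, inquisitive disjunction and quantifier cases follow a uniform recipe: unfold the consequent occurrence and the antecedent occurrence of $\varphi$ by the corresponding right and left rules, then close each branch by the induction hypothesis applied to a proper subformula. For $\varphi=\psi\land\chi$ one applies $(\Rightarrow\land)$ and then $(\land\Rightarrow)$, reducing each branch to $X:\psi,X:\chi,\Gamma\Rightarrow\Delta,Y:\psi$ (and its symmetric variant), covered by the hypothesis at $\psi$, respectively $\chi$, with the same labels $X\supseteq Y$ and a larger context. The case $\psi\inqd\chi$ is dual, using $(\inqd\Rightarrow)$ followed by $(\Rightarrow\inqd)$. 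For $\varphi=\Any{x}\psi$ one applies $(\Rightarrow\forall)$ with a fresh eigenvariable $z$, then $(\forall\Rightarrow)$ instantiated at that same $z$, arriving at $X:\psi[z/x],X:\Any{x}\psi,\Gamma\Rightarrow\Delta,Y:\psi[z/x]$, which is an instance of the hypothesis at $\psi[z/x]$; here we use $|\psi[z/x]|=|\psi|<|\Any{x}\psi|$, so substitution leaves the induction measure intact. The case $\Exi{x}\psi$ is symmetric, with $(\inqe\Rightarrow)$ supplying the fresh eigenvariable and $(\Rightarrow\inqe)$ re-instantiating it.

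The implication case is the one to watch, as it is the only place where the label ordering genuinely interacts with the rules. To derive $X:\psi\to\chi,\Gamma\Rightarrow\Delta,Y:\psi\to\chi$, apply $(\Rightarrow\to)$ on the right; this produces, for each $Z$ with $Y\supseteq Z$, a premise $Z:\psi,X:\psi\to\chi,\Gamma\Rightarrow\Delta,Z:\chi$. Fix such a $Z$. By transitivity of $\supseteq$ we have $X\supseteq Z$, so the side condition of $(\to\Rightarrow)$ is met for the antecedent formula $X:\psi\to\chi$ with this same $Z$; applying it yields two premises, one carrying $Z:\psi$ on both sides and one carrying $Z:\chi$ on both sides. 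Both close by the induction hypothesis, now at the \emph{reflexive} labels $Z\supseteq Z$, applied to $\psi$ and to $\chi$ respectively. The only delicacy is thus the bookkeeping $Z\subseteq Y\subseteq X$ that licenses reusing $Z$ inside $(\to\Rightarrow)$; everything else is routine. Since every rule invoked is one of $(\mathtt{id})$, $(\bot\Rightarrow)$, or the logical left/right rules — and never $(\Rightarrow\mathtt{at})$ — the resulting derivation meets the stated restriction, which completes the induction.
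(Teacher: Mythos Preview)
Your proof is correct and follows exactly the approach the paper indicates: induction on $\varphi$, with the implication case handled by transitivity of $\supseteq$ so that $(\to\Rightarrow)$ can be applied at the label $Z$ produced by $(\Rightarrow\to)$, and no use of $(\Rightarrow\mathtt{at})$ anywhere. The paper's own proof is a one-line sketch (``by induction on $\varphi$''), so your write-up is simply a faithful expansion of that sketch.
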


\begin{proof}
By induction on $\varphi$, where the rule $(\Rightarrow \mathtt{at})$ is not necessary to conduct an inductive argument.
\qed\end{proof}

%

\begin{example} \label{ex:cd}
In order to derive the constant domain axiom by the rule $(\Rightarrow \to)$, it suffices to derive all instances of $Y: \Any{x}(\varphi \inqd \psi) \Rightarrow  Y: (\Any{x}\varphi) \inqd \psi$ (where $x$ does not occur free in $\psi$) for all $Y$ such that $X \supseteq Y$. This is demonstrated as follows.
\iffull\else
\begin{scriptsize}
\fi
\begin{equation*}
\infer[(\Rightarrow \inqd)]{
Y: \Any{x}(\varphi \inqd \psi) \Rightarrow  Y: (\Any{x}\varphi) \inqd \psi
}{
\infer[(\Rightarrow \forall)]{Y: \Any{x}(\varphi \inqd \psi) \Rightarrow  Y: (\Any{x}\varphi), Y:\psi}{
\infer[(\forall \Rightarrow)]{
Y: \Any{x}(\varphi \inqd \psi) \Rightarrow  Y: \varphi[z/x], Y:\psi
}{
\infer[(\inqd \Rightarrow)]{
Y: \varphi[z/x] \inqd \psi, Y: \Any{x}(\varphi \inqd \psi) \Rightarrow  Y: \varphi[z/x], Y:\psi
}{
Y: \varphi[z/x], Y: \Any{x}(\varphi \inqd \psi)  \Rightarrow  Y: \varphi[z/x], Y:\psi
&
Y: \psi, Y: \Any{x}(\varphi \inqd \psi) \Rightarrow  Y: \varphi[z/x], Y:\psi
}
}
}
}
\end{equation*}
\iffull\else
\end{scriptsize}
\fi
\noindent where the left and right topsequents are derivable by Proposition \ref{prop:persis_seqcalc} 
and it is noted for the application of the rule $(\forall \Rightarrow)$ that $Y: (\varphi \inqd \psi) [z/x]$ $\equiv$ $Y: \varphi [z/x] \inqd \psi $ since $x$ does not occur free in $\psi$. 
\if0
\begin{small}
\[
\infer[(\Rightarrow \forall)]{
Y: (\Any{x}\varphi) \inqd \psi \Rightarrow Y: \Any{x}(\varphi \inqd \psi) 
}
{
\infer[(\inqd \Rightarrow)]{
 Y: (\Any{x}\varphi) \inqd \psi \Rightarrow Y: \varphi[z/x] \inqd \psi }
{ \infer[(\Rightarrow \inqd)]{Y: (\Any{x}\varphi)  \Rightarrow Y: \varphi[z/x] \inqd \psi}{
\infer[(\forall \Rightarrow)]{Y: (\Any{x}\varphi)  \Rightarrow Y: \varphi[z/x], Y:\psi}{
Y: \varphi[z/x] \Rightarrow Y: \varphi[z/x], Y:\psi
}
} &
\infer[(\Rightarrow \inqd)]{Y: \psi \Rightarrow Y: \varphi[z/x] \inqd \psi }{
Y: \psi \Rightarrow Y: \varphi[z/x], Y:\psi 
}
}
}
\]
\end{small}
where the leftmost and rightmost leaves are derivable by Proposition \ref{prop:persis_seqcalc}.
\fi
\end{example}

Given a relational information model $\mM$ = $(W,D,\mathfrak{J})$, an assignment $g: \mathsf{Var} \to D$ and a mapping $f: \omega\to W$, we define the  satisfaction relation $\mM, f \Vdash_{g} X: \varphi$ for labelled formulas as follows:
\[
\begin{array}{lll}
\mM, f \Vdash_{g} (X: \varphi) &\mathrm{iff}& 
\mM, f[X] \Vdash_{g} \varphi. 
\end{array}
\]
For a sequent $\Gamma \Rightarrow \Delta$, we define $\mM,f \Vdash_{g} \Gamma \Rightarrow \Delta$ as follows: if $\mM,f  \Vdash_{g} (X:\varphi)$ for all $(X:\varphi) \in \Gamma$, then $\mM,f  \Vdash_{g} (Y:\psi)$ for some $(Y:\psi) \in \Delta$. 
        
\begin{proposition}[Soundness]
\label{prop:sound}
If $\mathbf{G}(\mathsf{FBInqBQ}) \vdash \Gamma \Rightarrow \Delta$ then $\mM, f \Vdash_{g} \Gamma \Rightarrow \Delta$ holds for every relational information model $\mM$ = $(W,D,\mathfrak{J})$, every assignment $g: \mathsf{Var} \to D$ and every mapping $f: \omega\to W$.
\end{proposition}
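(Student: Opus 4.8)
The plan is to argue by induction on the height of the derivation $\mathcal{D}$ of $\Gamma \Rightarrow \Delta$, showing that validity (in the sense that $\mM, f \Vdash_{g} \Gamma \Rightarrow \Delta$ for all $\mM$, $g$, $f$) is preserved from the premises to the conclusion of every rule, with the two initial sequents as base cases. Throughout, the translation $\mM, f \Vdash_{g} (X:\varphi)$ iff $\mM, f[X] \Vdash_{g} \varphi$ lets me read each labelled rule directly off the corresponding support clause from Section \ref{sec:prelim}, so that most cases reduce to unfolding definitions. For the base cases, $(\mathtt{id})$ with $X \supseteq Y$ follows from Persistency (Proposition \ref{prop:persistency}): since $f[X] \supseteq f[Y]$, any $\mM, f[X] \Vdash_{g} P(\overline{x})$ forces $\mM, f[Y] \Vdash_{g} P(\overline{x})$. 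The sequent $(\bot \Rightarrow)$ holds vacuously, because a label is by definition non-empty, so $f[X] \neq \varnothing$ and the antecedent $X:\bot$ can never be satisfied. This non-emptiness of labels is a standing assumption I would flag explicitly, as it is reused implicitly below.

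For the inductive step I would proceed rule by rule. The rules for $\land$ and $\inqd$, together with $(\Rightarrow \mathtt{at})$, are immediate: for $(\Rightarrow \mathtt{at})$, validity of all premises $\Gamma \Rightarrow \Delta, \setof{k}:P(\overline{x})$ gives, whenever $\Gamma$ holds and $\Delta$ fails, that $(g(\overline{x})) \in \mI(P, f(k))$ for every $k \in X$, which is precisely $\mM, f[X] \Vdash_{g} P(\overline{x})$. For the quantifiers, the two rules carrying the eigenvariable condition $\dagger$, namely $(\Rightarrow \forall)$ and $(\inqe \Rightarrow)$, require an auxiliary substitution lemma: $\mM, s \Vdash_{g[z \mapsto d]} \varphi[z/x]$ iff $\mM, s \Vdash_{g[x \mapsto d]} \varphi$ whenever $z$ is fresh. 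Combined with the freshness of $z$ (so that support of $\Gamma$ and $\Delta$ is unaffected by reassigning $z$), these cases go through in the standard fashion, and the instance rules $(\forall \Rightarrow)$ and $(\Rightarrow \inqe)$ use the same lemma with $d = g(y)$.

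The most delicate case, and the one I expect to be the main obstacle, is $(\Rightarrow \to)$. Its premises range over all labels $Y$ with $X \supseteq Y$, i.e.\ over the \emph{non-empty} subsets of $X$, whereas the support clause for $\to$ quantifies over \emph{all} substates $t \subseteq f[X]$. To bridge this gap I would, given an arbitrary $t \subseteq f[X]$ with $\mM, t \Vdash_{g} \varphi$, set $Y := \{k \in X \mid f(k) \in t\}$ and verify that $f[Y] = t$, where the inclusion $t \subseteq f[Y]$ uses exactly $t \subseteq f[X]$. When $t \neq \varnothing$ this $Y$ is a legitimate label and the corresponding premise yields $\mM, t \Vdash_{g} \psi$; when $t = \varnothing$ there is no matching label, but then $\mM, \varnothing \Vdash_{g} \psi$ holds automatically since the empty state supports every formula, so this degenerate case is handled separately. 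The dual rule $(\to \Rightarrow)$ is then routine: from the first premise one extracts $\mM, f[Y] \Vdash_{g} \varphi$, pushes it through $\mM, f[X] \Vdash_{g} \varphi \to \psi$ using $f[Y] \subseteq f[X]$ to obtain $\mM, f[Y] \Vdash_{g} \psi$, and closes with the second premise. The only genuine subtlety of the entire argument is thus this state/label correspondence, together with the bookkeeping that labels are required to be non-empty while the semantic quantifiers also range over the empty state.
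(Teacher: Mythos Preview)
Your proposal is correct and follows the same approach as the paper's proof sketch, namely a routine induction on the derivation checking that each rule preserves validity; the paper only spells out the $(\Rightarrow \mathtt{at})$ case, while you additionally work through the label/state correspondence for $(\Rightarrow \to)$ and the empty-state corner case, which is exactly the right place to put the extra care.
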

%
        
\begin{proof}[Sketch]
As an example proof step, we check the validity preservation of $(\Rightarrow \mathtt{at})$. 
Let $\mM$ = $(W,D,\mathfrak{J})$ be a relational information model, $g: \mathsf{Var} \to D$ an assignment and $f: \omega\to W$ be a mapping. Suppose $\mM,f \Vdash_{g} \Gamma \Rightarrow \Delta, \setof{k}:P(\overline{x})$ for all $k \in X$. 
We show that $\mM,f \Vdash_{g} \Gamma \Rightarrow \Delta, X:P(\overline{x})$. 
Assume that $\mM, f \Vdash_{g} (Y:\varphi)$ for all $(Y:\varphi) \in \Gamma$ and 
$\mM, f \not\Vdash_{g} (Z:\psi)$ for all $(Z:\psi) \in \Delta$. 
We show that $\mM, f \Vdash_{g} X:P(\overline{x})$, i.e., 
$\mM, f[X] \Vdash_{g} P(\overline{x})$, which is equivalent to the following: for all $w \in f[X]$, $(g(x_{1}),\ldots, g(x_{n})) \in \mathfrak{J}(P,w)$. So, let us fix $w \in f[X]$. 
We can find a natural number $k \in X$ such that $f(k)$ = $w$. 
By assumption, our supposition $\mM,f \Vdash_{g} \Gamma \Rightarrow \Delta, \setof{k}:P(\overline{x})$ gives us that 
$\mM, f \Vdash_{g} \setof{k}:P(\overline{x})$, which implies our goal.  
\end{proof}       
        
\begin{corollary}
\label{cor:soundness_inqbqn}
If $\mathbf{G}(\mathsf{FBInqBQ}) \vdash \Rightarrow \setof{1,\ldots, n}: \varphi$ then $\varphi \in \mathsf{InqBQ}_{n}$.
\end{corollary}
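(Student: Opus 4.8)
The plan is to obtain the corollary directly from the soundness of $\mathbf{G}(\mathsf{FBInqBQ})$ (Proposition~\ref{prop:sound}), exploiting the universal quantification over the world-mapping $f$ built into its statement. Assuming $\mathbf{G}(\mathsf{FBInqBQ}) \vdash \; \Rightarrow \setof{1,\ldots,n}:\varphi$, I would apply Proposition~\ref{prop:sound} to this sequent to get $\mM, f \Vdash_{g} (\Rightarrow \setof{1,\ldots,n}:\varphi)$ for every relational information model $\mM = (W,D,\mathfrak{J})$, every assignment $g$ and every mapping $f:\omega \to W$. Since the antecedent of the sequent is empty, the semantic clause for sequents makes its hypothesis vacuously true, so the statement collapses to its consequent: $\mM, f \Vdash_{g} (\setof{1,\ldots,n}:\varphi)$, i.e.\ $\mM, f[\setof{1,\ldots,n}] \Vdash_{g} \varphi$, for all such $\mM$, $g$, $f$.

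Next I would fix an arbitrary model $\mM = (W,D,\mathfrak{J})$ subject to the boundedness constraint $\# W \leqslant n$, together with an arbitrary assignment $g$ and an arbitrary \emph{non-empty} state $s \subseteq W$. The crux is that every such non-empty state is realizable as the $f$-image of the label $\setof{1,\ldots,n}$: since $1 \leqslant \# s \leqslant \# W \leqslant n$, there is a surjection of $\setof{1,\ldots,n}$ onto $s$, which I extend to a total map $f:\omega \to W$ arbitrarily (say, constantly on $\omega \setminus \setof{1,\ldots,n}$), so that $f[\setof{1,\ldots,n}] = s$. Instantiating the consequence of soundness with this particular $f$ yields $\mM, s \Vdash_{g} \varphi$.

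It remains to cover the empty state, which cannot arise as a label-image because labels are non-empty. Here I would invoke persistency (Proposition~\ref{prop:persistency}): as $W \neq \varnothing$, picking any $w \in W$ gives $\mM, \{w\} \Vdash_{g} \varphi$ by the previous paragraph, and $\{w\} \supseteq \varnothing$ then delivers $\mM, \varnothing \Vdash_{g} \varphi$. Hence $\mM, s \Vdash_{g} \varphi$ for every state $s \subseteq W$ and every $g$, i.e.\ $\varphi$ is valid in $\mM$; and since $\mM$ ranged over all models with at most $n$ worlds, $\varphi \in \inqbqn$ by definition.

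The argument is essentially bookkeeping once soundness is available; the only conceptually substantive point — and thus where I would take care — is the passage from the syntactic datum ``label $\setof{1,\ldots,n}$'' to the semantic datum ``arbitrary bounded state'', that is, checking that the cardinality bound $\# W \leqslant n$ is precisely what makes every non-empty state of $W$ expressible as $f[\setof{1,\ldots,n}]$ for a suitable $f$. I anticipate no genuine obstacle beyond correctly treating the two edge cases handled above: the empty state (via persistency) and a state strictly smaller than $n$ (via a surjection rather than a bijection).
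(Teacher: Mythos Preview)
Your argument is correct and is exactly the intended unpacking of the corollary from Proposition~\ref{prop:sound}; the paper simply states it as an immediate consequence without spelling out the choice of $f$ or the empty-state case. Your handling of both edge cases (surjection for $\#s<n$, persistency for $s=\varnothing$) is the right way to make the passage rigorous.
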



\begin{definition}
Let $(\cdot)_{P}^{\psi}$ be the  substitution operation from Definition \ref{def:subst}. 
For a labelled formula $X: \varphi$, we define $(X: \varphi)_{P}^{\psi}$ as $X: (\varphi)_{P}^{\psi}$.
When $\Gamma$ = $\setof{X_{1}:\varphi_{1},\ldots,X_{n}:\varphi_{n}}$, we define $\Gamma_{P}^{\psi}$ $:=$ $\setof{ (X_{1}:\varphi_{1})_{P}^{\psi},\ldots,(X_{n}:\varphi_{n})_{P}^{\psi}}$. 
\end{definition}

\begin{theorem}
\label{thm:generation_schematic_validities}
If a sequent $\Gamma \Rightarrow \Delta$ is derivable in $\mathbf{G}(\mathsf{FBInqBQ})$ without applying $(\Rightarrow \mathtt{at})$, then 
$\Gamma_{P}^{\psi} \Rightarrow \Delta_{P}^{\psi}$ is also derivable in $\mathbf{G}(\mathsf{FBInqBQ} )$ without applying $(\Rightarrow \mathtt{at})$. 
Therefore, if a sequent $\Rightarrow \setof{1,\ldots, n}: \varphi$ is derivable in $\mathbf{G}(\mathsf{FBInqBQ} )$ without applying $(\Rightarrow  \mathtt{at})$, then $\varphi$ is schematically valid in $\mathsf{InqBQ}_{n}$. 
\end{theorem}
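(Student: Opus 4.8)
The plan is to establish the first assertion by induction on the height of the given $(\Rightarrow \mathtt{at})$-free derivation of $\Gamma \Rightarrow \Delta$, showing that applying $(\cdot)_{P}^{\psi}$ to every sequent occurring in it produces an $(\Rightarrow \mathtt{at})$-free derivation of $\Gamma_{P}^{\psi} \Rightarrow \Delta_{P}^{\psi}$. The driving observation is that $(\cdot)_{P}^{\psi}$ commutes with every connective and quantifier and leaves all labels untouched, so that each rule other than $(\Rightarrow \mathtt{at})$ is transported to an instance of the very same rule; the only instances needing to be replaced wholesale are initial sequents $(\mathtt{id})$ on the symbol $P$. For the base cases: $(\bot \Rightarrow)$ is untouched, since $(X:\bot)_{P}^{\psi} = X:\bot$, and an instance of $(\mathtt{id})$ on a symbol distinct from $P$ is transported to another instance of $(\mathtt{id})$. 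The crucial case is $(\mathtt{id})$ on $P$ itself, i.e.\ $X:P(\overline{x}), \Gamma \Rightarrow \Delta, Y:P(\overline{x})$ with $X \supseteq Y$; it is sent to $X:\psi', \Gamma_{P}^{\psi} \Rightarrow \Delta_{P}^{\psi}, Y:\psi''$, where $\psi'$ and $\psi''$ are clean instances of $\psi(\overline{x})$ and hence $\alpha$-equivalent. Modulo $\alpha$-equivalence this is exactly a persistency sequent, derivable without $(\Rightarrow \mathtt{at})$ by Proposition \ref{prop:persis_seqcalc}. This is precisely why omitting $(\Rightarrow \mathtt{at})$ is the correct hypothesis: the atomic identity on $P$ is replaced by a genuine persistency derivation for the compound $\psi$.

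For the inductive step, the propositional rules transport directly. Substitution distributes over $\land$, $\inqd$ and $\to$; the side condition $X \supseteq Y$ of $(\to \Rightarrow)$ and the indexing of the premises of $(\Rightarrow \to)$ by the subsets $Y \subseteq X$ depend only on labels and are therefore preserved, as is the repetition of the principal implication in $(\to \Rightarrow)$. Applying the induction hypothesis to the derivations of the premises and reassembling them with the same rule yields a derivation of the substituted conclusion, still free of $(\Rightarrow \mathtt{at})$.

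The delicate cases, and the main obstacle, are the quantifier rules, where one must verify that $(\cdot)_{P}^{\psi}$ commutes with the term substitutions $[z/x]$ and $[y/x]$ occurring in the premises, and that the eigenvariable conditions on $(\Rightarrow \forall)$ and $(\inqe \Rightarrow)$ are maintained. Concretely, one needs $(\theta[z/x])_{P}^{\psi} = (\theta_{P}^{\psi})[z/x]$ up to $\alpha$-equivalence, and this is exactly where the $i$-clean renaming of bound variables in Definition \ref{def:subst} earns its keep: one arranges the clean-up to avoid the relevant term variables, in particular keeping any eigenvariable fresh. Moreover, since $\psi$ has exactly $n$ free variables, all matching the arity of $P$, replacing $P(\overline{a})$ by a clean instance of $\psi(\overline{a})$ introduces only variables already present in $\overline{a}$; hence an eigenvariable absent from the conclusion of $(\Rightarrow \forall)$ or $(\inqe \Rightarrow)$ remains absent after substitution, so the condition survives. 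Granting these commutations, the quantifier rules transport just as the propositional ones, completing the induction.

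The final clause is then immediate. If $\Rightarrow \setof{1,\ldots, n}: \varphi$ is derivable without $(\Rightarrow \mathtt{at})$, then so is $\Rightarrow \setof{1,\ldots, n}: (\varphi)_{P}^{\psi}$ for every admissible choice of $P$ and $\psi$, whence $(\varphi)_{P}^{\psi} \in \mathsf{InqBQ}_{n}$ by Corollary \ref{cor:soundness_inqbqn}. Iterating the single-predicate substitution over the finitely many predicate symbols occurring in $\varphi$ shows that every substitution instance of $\varphi$ belongs to $\mathsf{InqBQ}_{n}$; that is, $\varphi$ is schematically valid in $\mathsf{InqBQ}_{n}$.
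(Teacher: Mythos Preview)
Your proposal is correct and follows essentially the same approach as the paper: induction on the derivation, with the key base case being that an instance of $(\mathtt{id})$ whose principal atom is $P(\overline{x})$ is replaced by the persistency derivation of Proposition~\ref{prop:persis_seqcalc}, while all other rules transport directly. Your treatment is in fact more detailed than the paper's, which dismisses the inductive step (including the eigenvariable bookkeeping you spell out) as ``straightforward''.
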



\begin{proof}
The latter statement follows from the former statement and 
Proposition \ref{prop:sound}. So we focus on proving the former by induction on a derivation with no application of $(\Rightarrow \mathtt{at})$ of $\Gamma \Rightarrow \Delta$. 
For the base case,  it is easy to see that the $(\cdot)_{P}^{\psi}$-translation of an initial sequent $(\bot \Rightarrow)$ is again of the same form $(\bot \Rightarrow)$. 
When $P(\overline{x})$ is not principal of an initial sequent $(\mathtt{id})$, then the $(\cdot)_{P}^{\psi}$-translation of the initial sequent is also of the form $(\mathtt{id})$. 
Otherwise, $X: \psi(\overline{x}), \Gamma_{P}^{\psi} \Rightarrow \Delta_{P}^{\psi}, Y: \psi(\overline{x})$ is derivable without applying $(\Rightarrow \mathtt{at})$ by Proposition \ref{prop:persis_seqcalc}. 
For the inductive step, the argument is straightforward. 
\if0
we only deal with the case where the last applied rule is $(\Rightarrow \forall)$, i.e., from $\Gamma \Rightarrow \Delta, \varphi[z/x]$ we obtain $\Gamma \Rightarrow \Delta, \Any{x}\varphi$ where $z$ is fresh. 
Our goal is to show that $\Gamma_{P}^{\psi} \Rightarrow \Delta_{P}^{\psi}, (\Any{x}\varphi)_{P}^{\psi}$ is derivable without $(\Rightarrow \texttt{at})$. 
Let $(\Any{x} \varphi)_{P}^{\psi} \equiv \Any{x} (\varphi_{P}^{\psi})$ (otherwise we rewrite bound variables to apply a similar argument). Then it suffices to derive 
$\Gamma_{P}^{\psi} \Rightarrow \Delta_{P}^{\psi}, (\varphi_{P}^{\psi}) [u/x]$ without $(\Rightarrow \texttt{at})$ where $u$ is fresh in our goal sequent. 
By height-preserving substitution $[u/z]$, we obtain 
$\Gamma \Rightarrow \Delta, \varphi[u/x]$ and then 
$\Gamma_{P}^{\psi} \Rightarrow \Delta_{P}^{\psi}, (\varphi[u/x])_{P}^{\psi}$ by induction hypothesis.  Since $(\varphi[u/x])_{P}^{\psi} \equiv \varphi_{P}^{\psi} [u/x]$, we obtain the intended derivability of our goal sequent. \fi
\qed
\end{proof}

Whether the converse direction of the second statement of Theorem \ref{thm:generation_schematic_validities} holds is an open question.




\section{Strong Completeness} \label{sec:semcompl}

This section establishes the converse direction of Corollary \ref{cor:soundness_inqbqn} in a stronger form, as the semantic strong completeness.  

\begin{definition}
Let $1 \leqslant n \in \omega$. Given a set $\Psi \cup \setof{\varphi}$ of formulas of $\lang$, we use $\Psi \Vdash_{\mathsf{InqBQ}_{n}} \varphi$ to mean that, for all relational information models $\mM$ = $(W,D,\mI)$ such that $\# W \leqslant n$, all $g:\mathsf{Var} \to D$ and  all $s \subseteq W$, if $\mM, s \Vdash_{g} \psi$ for every $\psi \in \Psi$ then $\mM, s \Vdash_{g} \varphi$.
\end{definition}

\noindent We note that $\varnothing \models_{\mathsf{InqBQ}_{n}} \varphi$ iff $\varphi \in \mathsf{InqBQ}_{n}$. 
In this subsection, $\Gamma$, $\Delta$ are possibly infinite in the expression $\Gamma \Rightarrow \Delta$. 
A (possibly infinite) sequent $\Gamma \Rightarrow \Delta$
is {\em derivable} in $\mathbf{G}(\mathsf{FBInqBQ})$ 
if $\mathbf{G}(\mathsf{FBInqBQ}) \vdash \Gamma' \Rightarrow \Delta'$ for some finite sequent $\Gamma' \Rightarrow \Delta'$ such that $\Gamma' \subseteq \Gamma$ and $\Delta' \subseteq \Delta$. 

\begin{definition}
\label{dfn:Zsaturation}
Let $Z$ be a label, i.e.,  $\varnothing \neq Z \subseteq_{\mathrm{fin}} \omega$. 
We say that a possibly infinite sequent $\Gamma \Rightarrow \Delta$ is {\em $Z$-saturated} if all the labels 
in $\Gamma$ or $\Delta$ are subsets of $Z$ and the sequent satisfies the following conditions: 
\begin{itemize}[align=left]
    \item[$(\mathtt{unprov})$] $\Gamma \Rightarrow \Delta$ is not derivable in $\mathbf{G}(\mathsf{FBInqBQ})$.
    \item[$(\mathtt{at}R)$] If $X:P(x_{1},\ldots, x_{n}) \in \Delta$ then $\setof{k}:P(x_{1},\ldots, x_{n}) \in \Delta$ for some $k \in X$. 
    \item[$(\land R)$] If $X:\varphi_{1} \land \varphi_{2} \in \Delta$ then $X:\varphi_{1} \in \Delta$ or $X:\varphi_{2} \in \Delta$. 
    \item[$(\land L)$] If $X:\varphi_{1} \land \varphi_{2} \in \Gamma$ then $X:\varphi_{1} \in \Gamma$ and $X:\varphi_{2} \in \Gamma$. 
    \item[$(\inqd R)$] If $X:\varphi_{1} \inqd \varphi_{2} \in \Delta$ then $X:\varphi_{1} \in \Delta$ and $X:\varphi_{2} \in \Delta$. 
    \item[$(\inqd L)$] If $X:\varphi_{1} \inqd \varphi_{2} \in \Gamma$ then $X:\varphi_{1} \in \Gamma$ or $X:\varphi_{2} \in \Gamma$.    
    \item[$(\mathop{\to} R)$] If $X:\varphi_{1} \to \varphi_{2} \in \Delta$ then $Y: \varphi_{1} \in \Gamma$ and $Y: \varphi_{1} \in \Delta$ for some $Y \subseteq X$.
    \item[$(\mathop{\to} L)$] If $X:\varphi_{1} \to \varphi_{2} \in \Gamma$ then $Y:\varphi_{1} \in \Delta$ or $Y:\varphi_{2} \in \Gamma$ for all $Y \subseteq X$. 
    \item[$(\forall R)$] If $\Any{x}\varphi \in \Delta$ then $\varphi[z/x] \in \Delta$ for some variable $z$. 
    \item[$(\forall L)$] If $\Any{x}\varphi \in \Gamma$ then $\varphi[y/x] \in \Gamma$ for all variables $y$.
    \item[$(\inqe R)$] If $\Exi{x}\varphi \in \Delta$ then $\varphi[y/x] \in \Delta$ for all variables $y$.
    \item[$(\inqe L)$] If $\Exi{x}\varphi \in \Gamma$ then $\varphi[z/x] \in \Gamma$ for some variable $z$.    
\end{itemize}
\end{definition}

\begin{lemma}
\label{lem:saturation}
Let $\varnothing \neq Z \subseteq_{\mathrm{fin}} \omega$ and $\Gamma \Rightarrow \Delta$ be a possibly infinite sequent. If $\Gamma \Rightarrow \Delta$ is not 
derivable in $\mathbf{G}(\mathsf{FBInqBQ}) $ and all the labels in $\Gamma,\Delta$ are subsets of $Z$, then there exists a $Z$-saturated sequent $\Gamma^{+} \Rightarrow \Delta^{+}$ such that $\Gamma \subseteq \Gamma^{+}$ and $\Delta \subseteq \Delta^{+}$. 
\end{lemma}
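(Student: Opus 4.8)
The plan is to run a Lindenbaum-style saturation, building an increasing $\omega$-chain of non-derivable sequents whose union satisfies all the closure clauses of Definition~\ref{dfn:Zsaturation}. Set $\Gamma_0 \Rightarrow \Delta_0 := \Gamma \Rightarrow \Delta$, which by hypothesis is non-derivable with all labels $\subseteq Z$. I enumerate, by a fair dovetailing, all \emph{saturation demands}: each demand is a labelled formula together with the clause of Definition~\ref{dfn:Zsaturation} it could trigger, and for the ``universal'' clauses $(\forall L)$ and $(\inqe R)$ a single target variable, so that each such clause is discharged one variable at a time. Since the labels occurring are subsets of the fixed finite $Z$ and the set of formulas is countable, there are only countably many demands; whenever a formula is inserted into the antecedent or consequent, the demands it generates are appended to the queue, and fairness guarantees each is eventually processed.

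At stage $i+1$ I process the next demand, extending $\Gamma_i \Rightarrow \Delta_i$ while preserving non-derivability, the justification in each case being the matching rule of $\mathbf{G}(\mathsf{FBInqBQ})$ read contrapositively together with the height-preserving admissibility of weakening and contraction. For a clause demanding a \emph{disjunction} of extensions (e.g. $(\land R)$, $(\inqd L)$, or a fixed instance $Y$ of $(\to L)$) I observe that if \emph{all} the candidate extensions were derivable, one application of the matching rule (here $(\Rightarrow\land)$, $(\inqd\Rightarrow)$, resp.\ $(\to\Rightarrow)$) would make $\Gamma_i \Rightarrow \Delta_i$ itself derivable, a contradiction; so at least one candidate keeps the sequent non-derivable and I add it. For a clause demanding a \emph{conjunction} of new formulas with the same label (e.g. $(\inqd R)$, $(\land L)$, or a single-variable instance of $(\forall L)$/$(\inqe R)$), the principal formula remains present, so the matching rule shows the extended sequent is non-derivable outright. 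The clauses $(\mathtt{at}R)$ and $(\to R)$ are the ``for some $k$''/``for some $Y$'' cases: as $X \subseteq Z$ is finite there are finitely many candidates, and the rules $(\Rightarrow\mathtt{at})$, $(\Rightarrow\to)$ supply the same contradiction argument; every label introduced — $\setof{k} \subseteq X$, or $Y \subseteq X$ — stays within $Z$.

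The only delicate steps are the eigenvariable clauses $(\forall R)$ and $(\inqe L)$, which require a witness $z$ fresh in the \emph{current} sequent. Here, if $X:\Any{x}\varphi \in \Delta_i$ and $\Gamma_i \Rightarrow \Delta_i, X:\varphi[z/x]$ were derivable for a $z$ occurring nowhere in $\Gamma_i,\Delta_i$, then some finite subsequent witnesses this, and — since that derivation is finite and $z$ is fresh in it — one application of $(\Rightarrow\forall)$ re-derives $\Gamma_i \Rightarrow \Delta_i$ (using that $X:\Any{x}\varphi$ is already on the right, reinstated by admissible weakening), a contradiction; dually for $(\inqe L)$ via $(\inqe\Rightarrow)$. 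The main obstacle is thus guaranteeing a steady supply of fresh variables, even though the sequent may be infinite and the clauses $(\forall L)$/$(\inqe R)$ eventually instantiate with \emph{every} variable. I resolve this by reserving at the outset a countably infinite set of variables not occurring in $\Gamma \Rightarrow \Delta$ (expanding $\mathsf{Var}$ if necessary, and taking the domain of the eventual canonical model to be this enlarged set); because the universal clauses are discharged one variable at a time, at each finite stage only finitely many reserved variables have been consumed, so an unused — hence fresh — one is always available for the next eigenvariable demand.

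Finally, set $\Gamma^{+} := \bigcup_i \Gamma_i$ and $\Delta^{+} := \bigcup_i \Delta_i$. Non-derivability transfers to the union because any derivation of a (possibly infinite) sequent uses only a finite subsequent, which already lies in some $\Gamma_i \Rightarrow \Delta_i$; all labels remain $\subseteq Z$ by the per-step analysis; and every clause of Definition~\ref{dfn:Zsaturation} holds because its triggering demand was processed at some stage, so the witnessing formula(s) entered $\Gamma^{+}$ or $\Delta^{+}$. Hence $\Gamma^{+} \Rightarrow \Delta^{+}$ is the required $Z$-saturated extension of $\Gamma \Rightarrow \Delta$.
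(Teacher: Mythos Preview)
Your proposal is correct and follows essentially the same Hintikka/Lindenbaum-style saturation as the paper: both expand the variable stock to secure fresh eigenvariables, build an $\omega$-chain of non-derivable extensions by reading each rule of $\mathbf{G}(\mathsf{FBInqBQ})$ contrapositively, and take the union. The organizational differences---your fair dovetailing with one-variable-at-a-time discharge of $(\forall L)/(\inqe R)$ versus the paper's infinitely-repeating enumeration that batches instances $\varphi[z_i/x]$ for $i\le k$ at stage $k$, and your per-$Y$ treatment of $(\to L)$ versus the paper's inner loop over all $Y\subseteq X$---are inessential variants of the same fairness mechanism.
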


\noindent
\begin{proof}
 We proceed by a Hintikka-style saturation argument. Suppose that  $\mathbf{G}(\mathsf{FBInqBQ}) \not\vdash  \Gamma\Rightarrow \Delta$ and that all the labels are subsets of a fixed label $Z$. 
We add a countably infinite fresh set of variables to $\lang$ to define the expanded syntax $\lang^{+}$. 
Let $(X_{i}:\varphi_{i})_{i \in \omega}$ be an enumeration of all labelled formulas such that $\varphi_{i}$ is a formula of $\lang^{+}$ and $X_{i} \subseteq Z$ is a label and each labelled formula $X_{i}:\varphi_{i}$ occurs infinitely many times in the enumeration. Let $(z_{i})_{i \in \omega}$ be an enumeration of all the variables in $\lang^{+}$. In what follows, we inductively define an infinite sequence $(\Gamma_{i} \Rightarrow \Delta_{i})_{i \in \omega}$ of labelled sequents such that each sequent $\Gamma_{i} \Rightarrow \Delta_{i}$ is not derivable in $\mathbf{G}(\mathsf{FBInqBQ})$, $\Gamma_{i} \subseteq \Gamma_{i+1}$, and $\Delta_{i} \subseteq \Delta_{i+1}$ ($i \in \omega$). Then we show that the ``limit'' $\bigcup_{i\in\omega}\Gamma_{i} \Rightarrow \bigcup_{i\in\omega}\Delta_{i}$ of the sequence enjoys all the saturation properties in Definition \ref{dfn:Zsaturation}, $\Gamma  \subseteq \bigcup_{i\in\omega}\Gamma_{i}$ and
$\Delta  \subseteq \bigcup_{i\in\omega}\Delta_{i}$. 
\begin{description}
\item[(Basis)] We put $\Gamma_{0}$ := $\Gamma$ and $\Delta_{0}$ := $\Delta$. It is clear that the sequent $\Gamma_{0} \Rightarrow \Delta_{0}$ is not derivable in $\mathbf{G}(\mathsf{FBInqBQ})$ by assumption. 
\item[(Inductive Step)] Suppose that we have constructed 
$(\Gamma_{i} \Rightarrow \Delta_{i})_{i \leqslant k}$ such that 
each sequent $\Gamma_{i} \Rightarrow \Delta_{i}$ is not derivable in $\mathbf{G}(\mathsf{FBInqBQ})$ ($i \leqslant k$), $\Gamma_{i} \subseteq \Gamma_{i+1}$, and $\Delta_{i} \subseteq \Delta_{i+1}$ ($i < k$).
We define a sequent $\Gamma_{k+1} \Rightarrow \Delta_{k+1}$ in terms of the sequent $\Gamma_{k} \Rightarrow \Delta_{k}$ and the 
$k$ th element $X_{k}: \varphi_{k}$ in the enumeration. The definition is conducted in terms of the following argument by cases. 
\begin{itemize}
\item[$(\mathtt{at}r)$] Let $\varphi_{k}$ be of the form $P(\overline{x})$ and $X_{k}:P(\overline{x}) \in \Delta_{k}$. There is an $m \in X_{k}$ such that
\iffull\[\else$\fi 
\mathbf{G}(\mathsf{FBInqBQ}) \nvdash \Gamma_{k} \Rightarrow \Delta_{k}, \setof{m}: P(\overline{x})
\iffull.\]\else$. \fi 
Suppose otherwise. By the rule $(\Rightarrow \mathtt{at})$, we obtain the derivability of $\Gamma_{k}\Rightarrow \Delta_{k}$, a contradiction. So let us fix such $m \in X_{k}$. Then we define $
\Gamma_{k+1}$ := $\Gamma_{k}$ and  $\Delta_{k+1}$ := $\Delta_{k} \cup \setof{\setof{m}: P(\overline{x})}$.
%
%
%
\item[$(\inqd r)$] Let $\varphi_{k}$ be of the form $\psi_{1}\inqd \psi_{2}$ and 
$X_{k}:\psi_{1}\inqd \psi_{2} \in \Delta_{k}$. 
Then we define \iffull\else\newline\fi $\Gamma_{k+1}$ := $\Gamma_{k} $ and  
\iffull\[\else$\fi
\Delta_{k+1} := \Delta_{k} \cup \setof{X_{k}: \psi_{1}, X_{k}: \psi_{2}}
\iffull.\]\else$.\fi
We can prove the underivability of $\Gamma_{k+1} \Rightarrow \Delta_{k+1}$ by the rule $(\Rightarrow \inqd)$. \newline Clause $(\land l)$ is analogous.
\item[$(\inqd l)$] Let $\varphi_{k}$ be of the form $\psi_{1}\inqd \psi_{2}$ and 
$X_{k}:\psi_{1}\inqd \psi_{2} \in \Gamma_{k}$. 
Either 
\begin{center}
$\Gamma_{k}, X_{k}:\psi_{1} \Rightarrow \Delta_{k}$ \qquad or \qquad 
$\Gamma_{k}, X_{k}:\psi_{2} \Rightarrow \Delta_{k}$ 
\end{center}
is underivable. 
Suppose otherwise. By the rule $(\inqd \Rightarrow)$, we obtain the derivability of 
$\Gamma_{k} \Rightarrow \Delta_{k}$, which is a contradiction. 
Choose one of the underivable sequents as  $\Gamma_{k+1} \Rightarrow \Delta_{k+1}$. \newline Clause $(\land r)$ is analogous. 
\item[$(\mathop{\to} r)$] Let $\varphi_{k}$ be of the form $\psi_{1}\to \psi_{2}$ and $X_{k}:\psi_{1}\to \psi_{2} \in \Delta_{k}$. 
Suppose for contradiction that 
sequents $Y: \psi_{1}, \Gamma_{k}\Rightarrow \Delta_{k}, Y: \psi_{2}$ are derivable for all $Y \subseteq X_{k}$. 
Then the rule $(\Rightarrow \to)$ gives us the derivability of $\Gamma_{k}\Rightarrow \Delta_{k}$, which is a contradiction. Choose one of the underivable sequents as  $\Gamma_{k+1} \Rightarrow \Delta_{k+1}$. 
\item[$(\mathop{\to} l)$] Let $\varphi_{k}$ be of the form $\psi_{1}\to \psi_{2}$ and $X_{k}:\psi_{1}\to \psi_{2} \in \Gamma_{k}$. 
Let $h$ $:=$ $\# (\wp(X_{k}) \setminus \setof{\varnothing})$  and $Z_{1},\ldots,Z_{h} $ be an enumeration of the set. We define an increasing sequence $(\Gamma_{k}^{(j)} \Rightarrow \Delta_{k}^{(j)})_{1 \leqslant j\leqslant h}$ of sequents inductively as follows. 
Let $(\Gamma_{k}^{(1)} \Rightarrow \Delta_{k}^{(1)})$ be \iffull\else\newline\fi $\Gamma_{k} \Rightarrow \Delta_{k}$. 
Suppose that we have defined an increasing sequence \iffull\else\newline\fi $(\Gamma_{k}^{(j)} \Rightarrow \Delta_{k}^{(j)})_{1 \leqslant j\leqslant i}$ of sequents such that each sequent is underivable. We define $\Gamma_{k}^{(i+1)} \Rightarrow \Delta_{k}^{(i+1)}$ as follows. 
Suppose for contradiction that both \iffull\else\newline\fi
$\Gamma_{k}^{(i)} \Rightarrow \Delta_{k}^{(i)}, Z_{j}:\psi_{1}$ and 
$Z_{j}:\psi_{2}, \Gamma_{k}^{(i)} \Rightarrow \Delta_{k}^{(i)}$ are derivable. 
By $X_{k}: \psi_{1} \to \psi_{2} \in \Gamma_{k}$, we obtain the derivability of 
$\Gamma_{k}^{(i)} \Rightarrow \Delta_{k}^{(i)}$ by the rule $(\to \Rightarrow)$. 
Choose one of the underivable sequents as  $\Gamma_{k}^{(i+1)} \Rightarrow \Delta_{k}^{(i+1)}$. 
Define $\Gamma_{k+1} \Rightarrow \Delta_{k+1}$ := 
$\Gamma_{k}^{(h)} \Rightarrow \Delta_{k}^{(h)}$. 
%
%
%
\item[$(\inqe r)$] Let $\varphi_{k}$ be of the form $\Exi{x}\psi$ and $X_{k}:\Exi{x}\psi \in \Delta_{k}$. We define $\Gamma_{k+1}$ := $\Gamma_{k}$ and 
\iffull\[\else\newline$\fi
\Delta_{k+1} := \Delta_{k} \cup \inset{X_{k}: \psi[z_{i}/x]}{0 \leqslant i \leqslant k}
\iffull.\]\else$. \fi
Suppose for contradiction that \iffull\else\newline\fi $\Gamma_{k+1} \Rightarrow \Delta_{k+1}$ is derivable. By applying $(\Rightarrow \inqe)$ finitely many times, we obtain the derivability of $\Gamma_{k} \Rightarrow \Delta_{k}$, which is a contradiction. 
Therefore, $\Gamma_{k+1} \Rightarrow \Delta_{k+1}$ is not derivable. \newline
Clause $(\forall l)$ is analogous.
\item[$(\inqe l)$] Let $\varphi_{k}$ be of the form $\Exi{x}\psi$ and $X_{k}:\Exi{x}\psi \in \Gamma_{k}$. Since $\Gamma_{k} \Rightarrow \Delta_{k}$ is finite, we can choose the first fresh variable $z$ in the enumeration $(z_{i})_{i \in \omega}$ such that $z$ does not occur free in $\Gamma_{k} \Rightarrow \Delta_{k}$. 
Then we define $\Gamma_{k+1}$ := $\Gamma_{k} \cup \setof{X_{k}:\varphi[z/x]}$ and \iffull\else\newline\fi $\Delta_{k+1}$ := $\Delta_{k}$. We can show that $\Gamma_{k+1} \Rightarrow \Delta_{k+1}$ is not derivable by the rule $(\inqe\Rightarrow)$ since $z$ is a fresh variable not occurring in $\Gamma_{k} \Rightarrow \Delta_{k}$. \newline
Clause $(\forall r)$ is analogous. 
\item[(else)] Otherwise. We put $\Gamma_{k+1}$ := $\Gamma_{k}$ and  $\Delta_{k+1}$ := $\Delta_{k}$. 
\end{itemize}
\end{description}
Finally it is easy to establish that $\bigcup_{i\in\omega}\Gamma_{i} \Rightarrow \bigcup_{i\in\omega}\Delta_{i}$ enjoys all the saturation properties (as well as the condition $(\mathtt{unprov})$) in Definition \ref{dfn:Zsaturation}, $\Gamma  \subseteq \bigcup_{i\in\omega}\Gamma_{i}$ and
$\Delta  \subseteq \bigcup_{i\in\omega}\Delta_{i}$. 

\end{proof}

\begin{definition}
Let  $\varnothing \neq Z \subseteq_{\mathrm{fin}} \omega$ and $\Gamma \Rightarrow \Delta$ be $Z$-saturated. We define the {\em derived relational information model} $\mM_{(\Gamma,\Delta)}$ = $(Z,\mathsf{Var},\mathfrak{J}_{(\Gamma,\Delta)})$ from the sequent as follows:
\begin{itemize}
    \item $(x_{1},\ldots,x_{n}) \in \mathfrak{J}_{(\Gamma,\Delta)}(P,k)$ $\mathrm{iff}$ $\setof{k}: P(x_{1},\ldots,x_{n}) \notin \Delta$. 
\end{itemize}
\end{definition}

\begin{lemma}
\label{lem:truth}
Let  $\varnothing \neq Z \subseteq_{\mathrm{fin}} \omega$ and $\Gamma \Rightarrow \Delta$ be $Z$-saturated. Then the following hold: \begin{center}
$(\mathrm{i})$ if $X: \varphi \in \Gamma$ then $\mM_{(\Gamma,\Delta)}, X \Vdash_{\mathtt{id}} \varphi$;\quad
$(\mathrm{ii})$ if $X: \varphi \in \Delta$ then $\mM_{(\Gamma,\Delta)}, X \not\Vdash_{\mathtt{id}} \varphi$,
\end{center}
where $\mathtt{id}: \mathsf{Var} \to \mathsf{Var}$ is the identity function. 
\end{lemma}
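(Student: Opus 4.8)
The plan is to prove both (i) and (ii) simultaneously by induction on the length $|X:\varphi|$, i.e., the number of logical connectives in $\varphi$. This measure is the right one because substituting a variable for a variable leaves it unchanged, so the induction hypothesis will apply to the instances $\varphi[z/x]$ produced by the quantifier saturation clauses. Throughout, the derived model $\mM_{(\Gamma,\Delta)}$ has domain $D = \mathsf{Var}$ and we evaluate under the identity assignment $\mathtt{id}$, so I would first isolate the routine coincidence/substitution fact that $\mM_{(\Gamma,\Delta)}, X \Vdash_{\mathtt{id}} \varphi[y/x]$ holds iff $\mM_{(\Gamma,\Delta)}, X \Vdash_{\mathtt{id}[x \mapsto y]} \varphi$ (induction on $\varphi$, using $\mathtt{id}(y)=y$ and capture-avoidance). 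I would also record the standard empty-state property: $\mM_{(\Gamma,\Delta)}, \varnothing \Vdash_{\mathtt{id}} \varphi$ for every $\varphi$, which follows directly from the support clauses.

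For the base cases I would argue from $(\mathtt{unprov})$. If $X:\bot \in \Gamma$, then $(\bot \Rightarrow)$ makes the sequent derivable, contradicting $(\mathtt{unprov})$, so (i) is vacuous for $\bot$; and if $X:\bot \in \Delta$, then since $X$ is a nonempty label we have $\mM_{(\Gamma,\Delta)}, X \not\Vdash_{\mathtt{id}} \bot$, giving (ii). For an atom $P(\overline{x})$: if $X:P(\overline{x}) \in \Gamma$ and $\setof{k}:P(\overline{x}) \in \Delta$ for some $k \in X$, then $X \supseteq \setof{k}$ yields an instance of $(\mathtt{id})$ and hence derivability, contradicting $(\mathtt{unprov})$; thus $\setof{k}:P(\overline{x}) \notin \Delta$ for every $k \in X$, i.e., by the definition of $\mathfrak{J}_{(\Gamma,\Delta)}$ the atom holds at every world in $X$, which is (i). For (ii), $(\mathtt{at}R)$ supplies some $k \in X$ with $\setof{k}:P(\overline{x}) \in \Delta$, so the atom fails at world $k$, whence $\mM_{(\Gamma,\Delta)}, X \not\Vdash_{\mathtt{id}} P(\overline{x})$.

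The inductive cases for $\land$ and $\inqd$ fall out by reading off the matching saturation conditions and applying the induction hypothesis: $(\land L)$ with IH(i) gives (i) for $\land$, $(\land R)$ with IH(ii) gives (ii), and $(\inqd L)$, $(\inqd R)$ are handled symmetrically. The quantifier cases use the substitution fact: from $(\forall L)$ I obtain $X:\varphi[y/x] \in \Gamma$ for every variable $y$, and since every element of the domain $\mathsf{Var}$ occurs as such a $y$, IH(i) yields support of $\Any{x}\varphi$ at $X$; the cases $(\inqe L)$, $(\forall R)$, and $(\inqe R)$ are treated dually, the first two via a single witnessing variable and the last again ranging over all variables.

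The main obstacle, and the case deserving the most care, is implication, because of the intuitionistic quantification over substates. For (i), assuming $X:\varphi_1 \to \varphi_2 \in \Gamma$, I would fix an arbitrary $Y \subseteq X$: the empty substate is settled immediately by the empty-state property, and for nonempty $Y$ the condition $(\mathop{\to} L)$ gives $Y:\varphi_1 \in \Delta$ or $Y:\varphi_2 \in \Gamma$, so IH(ii) rules out $\mM_{(\Gamma,\Delta)}, Y \Vdash_{\mathtt{id}} \varphi_1$ in the first case while IH(i) delivers $\mM_{(\Gamma,\Delta)}, Y \Vdash_{\mathtt{id}} \varphi_2$ in the second, establishing the implication clause. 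For (ii), $(\mathop{\to} R)$ supplies a single $Y \subseteq X$ with $Y:\varphi_1 \in \Gamma$ and $Y:\varphi_2 \in \Delta$; IH(i) and IH(ii) then make $Y$ a witnessing substate at which $\varphi_1$ is supported but $\varphi_2$ is not, so $\mM_{(\Gamma,\Delta)}, X \not\Vdash_{\mathtt{id}} \varphi_1 \to \varphi_2$. The points requiring vigilance are matching the universal direction of $(\mathop{\to} L)$ against the existential direction of $(\mathop{\to} R)$, and confirming that the empty substate is always harmless thanks to the empty-state property.
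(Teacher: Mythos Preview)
Your proposal is correct and takes essentially the same approach as the paper: simultaneous induction on the complexity of $\varphi$, with the atomic case handled via $(\mathtt{unprov})$ and $(\mathtt{at}R)$ exactly as in the paper, and the remaining cases read off from the saturation clauses. The paper merely states that the non-atomic cases are immediate from $Z$-saturation, whereas you spell them out (including the care needed for the empty substate in the implication case and the substitution fact for quantifiers), but there is no methodological difference.
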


\begin{proof}
We only establish the case where $\varphi \equiv P(\overline{x})$ because the other cases are immediate from the conditions for the $Z$-saturation of $\Gamma \Rightarrow \Delta$. For $(\mathrm{i})$, suppose that $X: P(\overline{x}) \in \Gamma$. 
To show that $\mM_{(\Gamma,\Delta)}, X \models  P(\overline{x})$, let us fix any $k \in X$.  
By definition, it suffices to show that $\setof{k}: P(\overline{x}) \notin \Delta$. 
Suppose for contradiction that $\setof{k}: P(\overline{x}) \in \Delta$. 
It follows from  $X: P(\overline{x}) \in \Gamma$ that  $\Gamma \Rightarrow \Delta$ is an instance of $(\mathtt{id})$, a contradiction with $(\mathtt{unprov})$. 
Next, we move to $(\mathrm{ii})$. We assume that $X: P(\overline{x}) \in \Delta$. 
To show that $\mM_{(\Gamma,\Delta)}, X \not\models  P(\overline{x})$, it suffices to find some $k \in X$ such that $\setof{k}:P(\overline{x}) \in \Delta$. 
This holds by $(\mathtt{at}R)$. \qed\end{proof}

\noindent Write $\setof{1,\ldots,n}:\Psi$ for $\inset{\setof{1,\ldots,n}: \psi}{\psi \in \Psi}$. Now we can state:

\begin{theorem}
\label{thm:semcompl}
Fix any $n \in \omega$ and let $\Psi \cup \setof{\varphi}$ be a possibly infinite set of formulas. If $\Psi \Vdash_{\mathsf{InqBQ}_{n}} \varphi$ then $\mathbf{G}(\mathsf{FBInqBQ})  \vdash \setof{1,\ldots,n}:\Psi \Rightarrow \setof{1,\ldots,n}: \varphi$.  In particular, if $\varphi \in \inqbq_{n}$ then $\Rightarrow \setof{1,\ldots, n}: \varphi$ is derivable in $\mathbf{G}(\mathsf{FBInqBQ})$.
\end{theorem}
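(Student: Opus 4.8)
The plan is to argue by contraposition, so that the theorem reduces to stitching together the two lemmas already in hand: the saturation Lemma~\ref{lem:saturation} and the truth Lemma~\ref{lem:truth}. Concretely, I would assume that $\mathbf{G}(\mathsf{FBInqBQ}) \not\vdash \setof{1,\ldots,n}:\Psi \Rightarrow \setof{1,\ldots,n}:\varphi$, where non-derivability is read in the possibly-infinite sense defined just before Definition~\ref{dfn:Zsaturation} (no finite subsequent is derivable), and aim to produce a relational information model with at most $n$ worlds, together with a state and an assignment, supporting every member of $\Psi$ while refuting $\varphi$. Such data witness exactly $\Psi \not\Vdash_{\inqbqn} \varphi$, which is the contrapositive of the main statement.

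Setting $Z := \setof{1,\ldots,n}$, I observe that every label occurring in the starting sequent is $Z$ itself, hence a subset of $Z$, so the hypotheses of Lemma~\ref{lem:saturation} are met. Applying that lemma yields a $Z$-saturated sequent $\Gamma^{+} \Rightarrow \Delta^{+}$ with $\setof{1,\ldots,n}:\Psi \subseteq \Gamma^{+}$ and $\setof{1,\ldots,n}:\varphi \in \Delta^{+}$. I would then form the derived model $\mM_{(\Gamma^{+},\Delta^{+})} = (Z, \mathsf{Var}, \mathfrak{J}_{(\Gamma^{+},\Delta^{+})})$, whose world set is $Z$ and hence has cardinality $n \leqslant n$, so it is an admissible $\inqbqn$-model. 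Invoking Lemma~\ref{lem:truth} for this saturated sequent, clause $(\mathrm{i})$ applied to each $\setof{1,\ldots,n}:\psi \in \Gamma^{+}$ gives $\mM_{(\Gamma^{+},\Delta^{+})}, Z \Vdash_{\mathtt{id}} \psi$ for every $\psi \in \Psi$, while clause $(\mathrm{ii})$ applied to $\setof{1,\ldots,n}:\varphi \in \Delta^{+}$ gives $\mM_{(\Gamma^{+},\Delta^{+})}, Z \not\Vdash_{\mathtt{id}} \varphi$. Taking the state $s := Z$ and the identity assignment therefore establishes $\Psi \not\Vdash_{\inqbqn} \varphi$, completing the contrapositive.

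The ``in particular'' clause is then the instance $\Psi = \varnothing$: since $\varnothing \Vdash_{\inqbqn} \varphi$ holds precisely when $\varphi \in \inqbqn$ (as noted immediately after the definition of $\Vdash_{\inqbqn}$), the main statement specializes to the derivability of $\Rightarrow \setof{1,\ldots,n}:\varphi$, giving the converse of Corollary~\ref{cor:soundness_inqbqn}.

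Since the conceptual work lives entirely in the two lemmas already proved, I expect the remaining obstacle to be bookkeeping rather than mathematics. The one point genuinely demanding care is the bridge between the semantic-consequence relation $\Vdash_{\inqbqn}$, which quantifies over models of the original language $\lang$, and the derived model, whose domain is the variable set $\mathsf{Var}$ of the expanded language $\lang^{+}$ introduced during saturation. Because $\Psi \cup \setof{\varphi} \subseteq \lang$ and the evaluation uses the identity assignment, the fresh variables of $\lang^{+}$ never appear in the formulas being evaluated, so the refutation witnessed at $(\mM_{(\Gamma^{+},\Delta^{+})}, Z, \mathtt{id})$ is a legitimate $\inqbqn$-countermodel; I would make this observation explicit to close the argument cleanly.
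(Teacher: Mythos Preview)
Your proposal is correct and follows essentially the same route as the paper: contraposition, set $Z := \{1,\ldots,n\}$, apply Lemma~\ref{lem:saturation} to obtain a $Z$-saturated extension, then invoke Lemma~\ref{lem:truth} on the derived model to refute $\varphi$ at state $Z$ under the identity assignment while supporting all of $\Psi$. Your write-up is in fact more detailed than the paper's, which does not explicitly address the $\lang$-versus-$\lang^{+}$ bookkeeping you flag.
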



\begin{proof}
Let $n \in \omega$ and $Z$ := $\setof{1,\ldots,n}$. 
Suppose that $Z:\Psi \Rightarrow Z:\varphi$ is not derivable in $\mathbf{G}(\mathsf{FBInqBQ})$. By Lemma \ref{lem:saturation}, we can find a $Z$-saturated sequent $\Gamma \Rightarrow \Delta$ such that $Z:\Psi  \subseteq \Gamma$ and $Z:\varphi \in \Delta$. By Lemma \ref{lem:truth}, we obtain $\mM_{(\Gamma, \Delta)}, Z \Vdash_{\mathtt{id}} \psi$ for all $\psi \in \Psi$ but $\mM_{(\Gamma, \Delta)}, Z \not\Vdash_{\mathtt{id}} \varphi$ hence $\Psi \not\models_{\mathsf{InqBQ}_{n}} \varphi$, as desired.  
\qed\end{proof}

\begin{remark}
\label{rem:strongcomp}
In the natural deduction system for $\mathsf{InqBQ}_{n}$ given by Ciardelli and Grilletti~\cite{Ciardelli2022a}, a key r\^ole is played by the $n$-cardinality formula depending on the signature $\mathsf{Pred}$~\cite[Section 6]{Ciardelli2022a}. They also established the strong completeness of the natural deduction system for $\mathsf{InqBQ}_{n}$ in~\cite[Theorem 8.1]{Ciardelli2022a}. 
In our labelled sequent calculus, the cardinality restriction is simply captured by the signature-independent label $\setof{1,\ldots, n}$ in Corollary \ref{cor:soundness_inqbqn} and Theorem~\ref{thm:semcompl}. 
\end{remark}



\begin{remark}
    \label{rem:rex}
    The {\em restricted existential} (\emph{rex}) fragment $\langrex$ \cite{Ciardelli2022a}  is defined by: 
    \[
    \chi ::= P(\overline{x}) \,|\, \bot \,|\, \chi \land \chi \,|\, \chi \inqd \chi \,|\, \varphi \to \chi | \Any{x} \chi,
    \]
    where $\varphi$ is a formula in $\lang$. 
    Theorem 9.2 of Ciardelli and Grilletti \cite{Ciardelli2022a} provides a complete system for the entailments with rex conclusions in terms of a {\em coherence rule} involving the above-mentioned signature-dependent cardinality formula. However,  characterizing entailments with rex conclusions in $\mathbf{G}(\mathsf{FBInqBQ})$ does not require additional inference rules.  
    First, it is a consequence of Theorem 4.3 in \emph{op.cit.} that entailments with rex conclusions are compact. Second, Propositions 3.5 and 5.3 in \emph{op.cit.} imply that, for every $\chi \in \langrex$, $\chi \in \mathsf{InqBQ}$ iff there exists $n_{\chi} \in \omega$ such that $\chi \in \mathsf{InqBQ}_{n_{\chi}}$, where $n_{\chi}$ is given by Definition 5.2 in \emph{op.cit} and it is recursively computed from $\chi \in \langrex$. 
    Now, we can show that 
    for every $\Sigma \subseteq \lang$ and $\chi \in \langrex$, the following are equivalent: 
    \begin{itemize}
    \item[$(\mathrm{i})$] $\Sigma \models \chi$;  
    \item[$(\mathrm{ii})$] $\mathbf{G}(\mathsf{FBInqBQ}) \vdash \,\Rightarrow \{1,\ldots,n \}: \bigwedge \Sigma' \to \chi$, for some  finite subset $\Sigma' \subseteq \Sigma$ and $n \in \omega$. 
    \end{itemize}
    We can prove the equivalence as follows. 
    By Theorem 4.3 in \emph{op.cit.}, $\Sigma \models \chi$ iff $\models \bigwedge \Sigma' \to \chi$ for some finite subset $\Sigma' \subseteq \Sigma$. 
    Fix such $\Sigma'$. Because 
    $\bigwedge \Sigma' \to \chi$ is still (equivalent to) a rex formula, 
    $\bigwedge \Sigma' \to \chi \in \mathsf{InqBQ}$ iff there exists $n \in \omega$ such that $\bigwedge \Sigma' \to \chi \in \mathsf{InqBQ}_{n}$ by the observation above. By Corollary \ref{cor:soundness_inqbqn} and Theorem~\ref{thm:semcompl}, it is equivalent to $\mathbf{G}(\mathsf{FBInqBQ}) \vdash \,\Rightarrow \{1,\ldots,n \}: \bigwedge \Sigma' \to \chi$, as desired. The converse direction is immediate. 
\end{remark}


\section{Cut Elimination and Derivability of Schematic Validities} \label{sec:cutelim}


First, this section establishes the admissibility of the cut rule on the height-preserving admissibility of weakening rules, inversions of all inference rules of $\mathbf{G}(\mathsf{FBInqBQ})$ and contraction rules. Second, as an application of the admissibility of structural rules and the cut rule, we provide derivations for $\lna{CasariScheme}$ as well as key axioms underlying the natural deduction system for $\mathsf{InqBQ}_{n}$ (see Remark \ref{rem:rex}), except for the signature-specific cardinality formula~\cite[Section 6]{Ciardelli2022a}.

\begin{definition}
We say that an inference rule 
\[
\infer[(\mathtt{r})]{\Gamma_{0} \Rightarrow \Delta_{0}}{\Gamma_{1}\Rightarrow \Delta_{1} &\cdots & \Gamma_{m}\Rightarrow \Delta_{m}}
\]
is {\em height-preserving admissible} in $\mathbf{G}(\mathsf{FBInqBQ})$ if whenever $\mathbf{G}(\mathsf{FBInqBQ}) \vdash_{h} \Gamma_{i} \Rightarrow \Delta_{i}$ for all $1 \leqslant i \leqslant m$ we obtain $\mathbf{G}(\mathsf{FBInqBQ}) \vdash_{h} \Gamma_{0} \Rightarrow \Delta_{0}$. 
When we drop the subscript ``$h$'' from all the occurences of ``$\vdash_{h}$'', we say that $(\mathtt{r})$ is {\em admissible}. 
\end{definition}

\noindent By a standard argument ~\cite[Ch. 3.5]{TS2000}~\cite{NegriPlato2001}, we get the following.


\begin{proposition}
\label{prop:admissible_rules}
\begin{itemize}
    \item[$(\mathrm{i})$] If $\mathbf{G}(\mathsf{FBInqBQ}) \vdash_{h} \Gamma \Rightarrow \Delta$ then $\mathbf{G}(\mathsf{FBInqBQ}) \vdash_{h} \Gamma[z/x] \Rightarrow \Delta[z/x]$, where 
    \iffull\[\else$\fi
    \Theta[z/x] := \inset{X:\varphi[z/x]}{X:\varphi \in \Theta}
     \iffull\]\else.$\fi
    \item[$(\mathrm{ii})$] The weakening rules are height-preserving admissible in $\mathbf{G}(\mathsf{FBInqBQ})$:
    \[
    \infer[(\Rightarrow w)]{\Gamma \Rightarrow \Delta, X:\varphi}{\Gamma \Rightarrow \Delta}
    \quad
    \infer[(w \Rightarrow)]{X:\varphi, \Gamma \Rightarrow \Delta}{\Gamma \Rightarrow \Delta}.
    \]
    \item[$(\mathrm{iii})$] All the rules of $\mathbf{G}(\mathsf{FBInqBQ})$ are height-preserving invertible, i.e., 
    whenever its conclusion of a rule of $\mathbf{G}(\mathsf{FBInqBQ})$ is derivable, the premises of the same rule are also derivable with no greater derivation height.    
    \item[$(\mathrm{vi})$] The contraction rules are height-preserving admissible in $\mathbf{G}(\mathsf{FBInqBQ})$:
    \[
    \infer[(\Rightarrow c)]{\Gamma \Rightarrow \Delta, X:\varphi}{\Gamma \Rightarrow \Delta, X:\varphi, X:\varphi}
    \quad
    \infer[(c \Rightarrow)]{X:\varphi, \Gamma \Rightarrow \Delta}{X:\varphi, X:\varphi, \Gamma \Rightarrow \Delta}.
    \]  
\end{itemize}
\end{proposition}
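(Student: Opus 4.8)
The four statements are the standard structural package for a G3-style calculus, so the plan is to prove them in the order (i), (ii), (iii), (iv), each by induction on the height of the given derivation, since every later part depends on the earlier ones; this follows the template of Troelstra--Schwichtenberg and Negri--von Plato \cite{TS2000,NegriPlato2001}. The point specific to $\mathbf{G}(\mathsf{FBInqBQ})$ is that the labels $X$ are finite subsets of $\omega$ that are never touched by variable substitution, so the side-conditions $X \supseteq Y$ and the branchings of $(\Rightarrow \mathtt{at})$ (over $k \in X$) and of $(\Rightarrow \to)$ (over $\varnothing \neq Y \subseteq X$) are inert for the first-order bookkeeping and are simply copied through each inductive step.

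For (i), I would induct on height. Every propositional and label-based rule ($(\mathtt{id})$, $(\bot \Rightarrow)$, $(\Rightarrow \mathtt{at})$, $(\Rightarrow \to)$, $(\to \Rightarrow)$, and the $\land,\inqd$ rules) commutes with the renaming $[z/x]$, since substitution acts only on the formula part and leaves labels fixed; one applies the IH to the premises and reapplies the rule. The sole delicate cases are $(\Rightarrow \forall)$ and $(\inqe \Rightarrow)$, whose eigenvariable may clash with $z$ or $x$: there I first rename the eigenvariable to a genuinely fresh one --- itself a height-preserving substitution, so the statement of (i) must be strong enough to license this --- and then apply the IH. Part (ii) is then another height induction: initial sequents $(\mathtt{id})$ and $(\bot \Rightarrow)$ stay initial when the context is enlarged by $X{:}\varphi$, and in the inductive step one weakens the premises by the IH and reapplies the last rule, invoking (i) to move the eigenvariable of $(\Rightarrow \forall)$ or $(\inqe \Rightarrow)$ off the newly added formula when necessary.

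Part (iii) splits along a clean dichotomy. The rules that carry their principal formula into the premises --- namely $(\to \Rightarrow)$, $(\forall \Rightarrow)$ and $(\Rightarrow \inqe)$ --- have premises that contain the conclusion as a sub-multiset (e.g. the premises of $(\to \Rightarrow)$ are the conclusion plus $Y{:}\varphi$ on the right, resp. plus $Y{:}\psi$ on the left), so their invertibility is immediate from height-preserving weakening (ii). The remaining rules ($(\Rightarrow \land)$, $(\land \Rightarrow)$, $(\Rightarrow \inqd)$, $(\inqd \Rightarrow)$, $(\Rightarrow \to)$, $(\Rightarrow \forall)$, $(\inqe \Rightarrow)$, and $(\Rightarrow \mathtt{at})$) replace the principal formula by its components, so their premises are not supersets of the conclusion and I would invert them by a separate height induction: if the last rule applied is exactly the rule under inversion with matching principal formula, its premises are read off directly; otherwise the principal formula sits in the context of the last rule, so one applies the IH to that rule's premises and reapplies it, using (i) again for eigenvariables. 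For $(\Rightarrow \mathtt{at})$ one fixes a single $k \in X$ throughout the permutation, so that the label-branching does not interfere; alternatively one notes that each target premise $\Gamma \Rightarrow \Delta, \{k\}{:}P(\overline{x})$ with $\{k\} \subseteq X$ is also obtainable with the help of Proposition \ref{prop:persis_seqcalc}.

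Finally, (iv) is proved by induction on height using the inversions of (iii). To contract two copies of $X{:}\varphi$ I consider the last rule: if neither copy is principal, I apply the IH to the premises and reapply the rule; if one copy is principal, I use the relevant inversion from (iii) on the other copy to decompose it as well, contract the resulting duplicated subexpressions by the IH, and reassemble. I expect the main obstacle to lie precisely in this step for the rules that either repeat or branch on their principal formula: $(\to \Rightarrow)$ carries $X{:}\varphi \to \psi$ into both premises, while $(\Rightarrow \mathtt{at})$ and $(\Rightarrow \to)$ split into several premises indexed by $X$. One must check that the repetition of $X{:}\varphi \to \psi$ is absorbed by the IH (so that no separate appeal to invertibility of $(\to \Rightarrow)$ is needed in that subcase, since that rule's repetition already supplies the extra copy) and that contracting inside one branch of a label-indexed rule does not desynchronize the remaining branches. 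Because the labels are fixed finite sets, unaffected by the variable machinery, and the heights of all premises strictly decrease, the induction nonetheless goes through uniformly.
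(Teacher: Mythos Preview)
Your proposal is correct and matches the paper's approach exactly: the paper's own proof is the single sentence ``By a standard argument~\cite[Ch.~3.5]{TS2000}~\cite{NegriPlato2001}'', and your sketch spells out precisely that template, with the right attention to the label-indexed branching of $(\Rightarrow \mathtt{at})$ and $(\Rightarrow \to)$ and to the repetition in $(\to \Rightarrow)$. One small caveat: your parenthetical alternative for inverting $(\Rightarrow \mathtt{at})$ ``with the help of Proposition~\ref{prop:persis_seqcalc}'' would need cut (to pass from $X{:}P(\overline{x})$ on the right to $\{k\}{:}P(\overline{x})$), which is not yet available at this point, so stick with the height-induction argument you give as the primary route.
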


\begin{theorem}
\label{thm:cut_elim}
The rule of cut is admissible in $\mathbf{G}(\mathsf{FBInqBQ})$:
$$
\infer[(Cut)]{\Gamma, \Pi \Rightarrow \Delta, \Sigma}{\Gamma \Rightarrow \Delta, X:\varphi & X:\varphi, \Pi \Rightarrow \Sigma}.
$$
where $X:\varphi$ is said to be a {\em cut formula}. 
\end{theorem}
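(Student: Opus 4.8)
The plan is to follow the standard Gentzen--Dragalin strategy for G3-style systems~\cite[Ch.~3.5]{TS2000}, adapted to the labelled setting as in Negri and von Plato~\cite{NegriPlato2001}. I would argue by a principal induction on the length $|X:\varphi|$ of the cut formula and a subordinate induction on the \emph{cut height}, i.e.\ the sum of the heights of the derivations of the two premises $\Gamma \Rightarrow \Delta, X:\varphi$ and $X:\varphi, \Pi \Rightarrow \Sigma$. Throughout, the height-preserving admissibility of weakening and contraction, the height-preserving invertibility of all rules, and the height-preserving substitution lemma (Proposition~\ref{prop:admissible_rules}) are the workhorses that let me reshape derivations without increasing the relevant measures.

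The case analysis splits as usual. First, if either premise is an initial sequent $(\mathtt{id})$ or $(\bot \Rightarrow)$, the conclusion is itself an instance of an initial sequent or is obtained from one premise by height-preserving weakening; the only subcase needing a word is when the cut formula is the principal $X:\bot$ of a left $(\bot \Rightarrow)$ axiom, where I first delete the redundant right-hand occurrence of $\bot$ (a trivial induction, since $X:\bot$ is never principal on the right) and then weaken. Second, if $X:\varphi$ is \emph{not} principal in the last rule of one of the premises, I permute the cut upward past that rule; this strictly lowers the cut height, so the subordinate hypothesis applies to the resulting cut(s). The only delicate point here is permuting past a rule carrying an eigenvariable condition, $(\Rightarrow \forall)$ or $(\inqe \Rightarrow)$, where I first rename the eigenvariable via Proposition~\ref{prop:admissible_rules}(i) to keep it fresh for the enlarged context.

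The heart of the argument is the principal case, where $X:\varphi$ is principal on both sides and I reduce to cuts on subformulas, invoking the principal induction hypothesis. For $\land$ and $\inqd$ this is routine. For implication, the right premise ends in $(\Rightarrow \to)$, supplying $Y:\varphi_1, \Gamma \Rightarrow \Delta, Y:\varphi_2$ for \emph{every} $\varnothing \neq Y \subseteq X$, while the left premise ends in $(\to \Rightarrow)$ at some particular $Y_0 \subseteq X$, supplying $X:\varphi_1 \to \varphi_2, \Pi \Rightarrow \Sigma, Y_0:\varphi_1$ and $Y_0:\varphi_2, X:\varphi_1 \to \varphi_2, \Pi \Rightarrow \Sigma$. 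I first remove the two remaining copies of $X:\varphi_1 \to \varphi_2$ from these left premises by two cuts of lower cut height (subordinate hypothesis), then combine the outcomes with the right premise \emph{instantiated at $Y_0$} via one cut on $Y_0:\varphi_1$ and one on $Y_0:\varphi_2$, both of strictly smaller length, finishing with admissible contractions. The quantifier cases are analogous, using Proposition~\ref{prop:admissible_rules}(i) to substitute the witness variable for the eigenvariable so that the cut on $X:\varphi[y/x]$ (of smaller length than $X:\Any{x}\varphi$ or $X:\Exi{x}\varphi$) becomes available.

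The genuinely calculus-specific step, and the one I expect to require the most care, is the principal \emph{atomic} cut, which replaces the trivial atomic case of ordinary $\mathbf{G3c}$. Since there is no left rule for atoms other than $(\mathtt{id})$, when $X:P(\overline{x})$ is principal on the left the left premise is literally an instance of $(\mathtt{id})$, so some $Y:P(\overline{x})$ with $\varnothing \neq Y \subseteq X$ sits in $\Sigma$; on the right, $(\Rightarrow \mathtt{at})$ supplies $\Gamma \Rightarrow \Delta, \setof{k}:P(\overline{x})$ for each $k \in X$. To conclude $\Gamma, \Pi \Rightarrow \Delta, \Sigma$ I weaken these singleton premises to $\Gamma, \Pi \Rightarrow \Delta, \Sigma, \setof{k}:P(\overline{x})$ for each $k \in Y$ (legitimate as $Y \subseteq X$) and reassemble the occurrence $Y:P(\overline{x}) \in \Sigma$ by a single application of $(\Rightarrow \mathtt{at})$. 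The hard part is precisely this label bookkeeping: the interplay between the subset side-condition on $(\mathtt{id})$ and the singleton decomposition performed by $(\Rightarrow \mathtt{at})$ must be tracked exactly so that the reassembled label matches the one already present in $\Sigma$.
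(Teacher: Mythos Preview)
Your proposal is correct and follows essentially the same route as the paper's own proof: the same lexicographic induction on cut-formula length and cut height, the same three-way case split, and the same handling of the calculus-specific atomic case via the subset side-condition on $(\mathtt{id})$ together with reassembly by $(\Rightarrow\mathtt{at})$. One small slip: in your atomic paragraph you write ``the left premise is literally an instance of $(\mathtt{id})$'' where you mean the \emph{right} premise $X{:}P(\overline{x}),\Pi\Rightarrow\Sigma$; the surrounding reasoning (that $Y{:}P(\overline{x})\in\Sigma$ with $Y\subseteq X$) shows you have the correct premise in mind.
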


\begin{proof}
Let $\mathcal{L}$ of $\Gamma \Rightarrow \Delta, X:\varphi$ and $\mathcal{R}$ of $X: \varphi, \Pi \Rightarrow \Sigma$ be derivations in $\mathbf{G}(\mathsf{FBInqBQ})$. Our proof is by induction on the lexicographic order of the length $\mathtt{c}$ of the cut labelled formula $X:\varphi$ and the cut-height $\mathtt{h}$, i.e., the sum of the height of the derivation $\mathcal{L}$ and the height of the derivation $\mathcal{R}$. 
We follow the standard argument in G3-style sequent calculus to divide our argument into the following three cases.
\begin{enumerate}
    \item One of $\mathtt{r}(\mathcal{L})$ and $\mathtt{r}(\mathcal{R})$ is an initial sequent $(\mathtt{id})$ or $(\bot \Rightarrow)$. 
    \item $\mathtt{r}(\mathcal{L})$ or $\mathtt{r}(\mathcal{R})$ is a rule where the cut formula is not principal in the rule. 
    \item \label{caseiii} $\mathtt{r}(\mathcal{L})$ and $\mathtt{r}(\mathcal{R})$ are rules where the cut formula is principal in both rules. 
\end{enumerate}
We comment on the difference from the standard argument for G3-style sequent calculi ~\cite{NegriPlato2001,Negri2005,Dyckhoff2011}. 
In particular, we focus on the cases related to the initial sequent $(\mathtt{id})$ and the new rule $(\Rightarrow \mathtt{at})$. 
First, we comment on Case 1 for the initial sequent $(\mathtt{id})$ of the form $X:P(\overline{x}), \Gamma \Rightarrow \Delta, Y: P(\overline{x})$ where $X \supseteq Y$, where the principal formulas are $X:P(\overline{x})$ and $Y:P(\overline{x})$. 
Let $\mathtt{r}(\mathcal{L})$ or $\mathtt{r}(\mathcal{R})$ be $(\mathtt{id})$. 
When the cut formula is not principal in $(\mathtt{id})$, it is immediate. 
So we assume that the cut formula is principal in $(\mathtt{id})$. 
Then we need to consider the following two cases: 
(i)  $(\mathtt{r}(\mathcal{L}), \mathtt{r}(\mathcal{R}))$ = $((\mathtt{id}),(\mathtt{id}))$, 
(ii) $(\mathtt{r}(\mathcal{L}), \mathtt{r}(\mathcal{R}))$ = $((\Rightarrow \mathtt{at}),(\mathtt{id}))$. 
For the case (i), it suffices to consider the following transformation: 
\begin{gather*}
    \infer[(Cut)]{X:P(\overline{x}), \Gamma,\Pi \Rightarrow \Delta, \Sigma, Z:P(\overline{x})}{
    \infer[(\mathtt{id})]{
    X:P(\overline{x}), \Gamma \Rightarrow \Delta, Y:P(\overline{x})
    }{}
    &
    \infer[(\mathtt{id})]{
    Y:P(\overline{x}), \Pi \Rightarrow \Sigma, Z:P(\overline{x})
    }{}
    }
    \\
    \leadsto 
    \infer[(\mathtt{id})]{
    X:P(\overline{x}), \Gamma,\Pi \Rightarrow \Delta, \Sigma, Z:P(\overline{x})
    }{}
\end{gather*}    
where $X \supseteq Y \supseteq Z$. 
For case (ii), we have the following derivation: 
\[
\infer[(Cut)]{
\Gamma,\Pi \Rightarrow \Delta, \Sigma, Y:P(\overline{x})
}{
\infer[(\Rightarrow\mathtt{at})]{\Gamma \Rightarrow \Delta, X:P(\overline{x})}
{
\inset{ \Gamma \Rightarrow \Delta, \setof{k}:P(\overline{x}) }{k \in X}
}
&
\infer[(\mathtt{id})]{
X:P(\overline{x}), \Pi \Rightarrow \Sigma, Y:P(\overline{x})}
{}
}
\]
where $X \supseteq Y$. By $X \supseteq Y$, we can obtain the following cut-free derivation:
\[
\infer=[(\Rightarrow w), (w \Rightarrow)]{
\Gamma,\Pi \Rightarrow \Delta, \Sigma, Y:P(\overline{x})
}{
\infer[(\Rightarrow \mathtt{at})]{
\Gamma \Rightarrow \Delta, Y:P(\overline{x})}
{
\inset{\Gamma \Rightarrow \Delta, \setof{k}:P(\overline{x})}{ k \in Y}
}
}.
\]
Second, we consider Case 2, in particular when $\mathtt{r}(\mathcal{L})$ or $\mathtt{r}(\mathcal{R})$ is $(\Rightarrow \mathtt{at})$. Here, the cut formula is not principal in $(\Rightarrow \mathtt{at})$. Then we can lift the application of $(Cut)$ for the conclusion of $(\Rightarrow \mathtt{at})$ 
to an application of $(Cut)$ for the premise. 
 The lifted application of $(Cut)$ can be eliminated since the complexity of the cut formula is the same but the cut height of the rewritten derivation becomes smaller than the original derivation. 
\qed\end{proof}

\begin{corollary}
\label{cor:mon_seqcalc}
The following rule is admissible in $\mathbf{G}(\mathsf{FBInqBQ})$: let $X \supseteq Y$. 
\[
\infer[(\mathtt{Mon})]{X:\varphi, \Gamma \Rightarrow \Delta}{Y:\varphi, \Gamma \Rightarrow \Delta}.
\]
\end{corollary}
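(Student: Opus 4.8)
The plan is to obtain $(\mathtt{Mon})$ as a direct consequence of the admissibility of cut (Theorem~\ref{thm:cut_elim}), combined with the persistency principle already established in Proposition~\ref{prop:persis_seqcalc} and the admissibility of contraction from Proposition~\ref{prop:admissible_rules}(vi). The guiding idea is that the hypothesis $X \supseteq Y$ lets us ``strengthen the label'' of an antecedent formula from $Y$ to $X$, which is exactly the proof-theoretic counterpart of persistency: a formula supported at the larger state $f[X]$ is supported at the smaller state $f[Y]$. So the job of the label manipulation is already done by Proposition~\ref{prop:persis_seqcalc}, and cut merely splices it against the assumed derivation.

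First I would invoke Proposition~\ref{prop:persis_seqcalc} with the very context $\Gamma, \Delta$ occurring in the goal: since $X \supseteq Y$, the sequent $X:\varphi, \Gamma \Rightarrow \Delta, Y:\varphi$ is derivable (in fact without any use of $(\Rightarrow \mathtt{at})$). This serves as the left premise of a cut whose cut formula is $Y:\varphi$, while the assumed derivation of $Y:\varphi, \Gamma \Rightarrow \Delta$ furnishes the right premise. Applying the admissible cut rule on $Y:\varphi$ then yields
\[
\infer[(Cut)]{X:\varphi, \Gamma, \Gamma \Rightarrow \Delta, \Delta}{X:\varphi, \Gamma \Rightarrow \Delta, Y:\varphi & Y:\varphi, \Gamma \Rightarrow \Delta}
\]
so that $X:\varphi, \Gamma, \Gamma \Rightarrow \Delta, \Delta$ is derivable, where each labelled formula of $\Gamma$ and of $\Delta$ now occurs together with a duplicate copy inherited from the two premises.

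Finally I would remove the duplicated occurrences by repeated application of the height-preserving admissible contraction rules $(\Rightarrow c)$ and $(c \Rightarrow)$ from Proposition~\ref{prop:admissible_rules}(vi), one contraction per labelled formula of $\Gamma$ and of $\Delta$, arriving at the desired $X:\varphi, \Gamma \Rightarrow \Delta$. I expect no genuinely hard step here: the entire substance is already packaged into the cut-elimination theorem, and the persistency of Proposition~\ref{prop:persis_seqcalc} supplies the only label-specific ingredient. The sole piece of care is the multiset bookkeeping, namely making sure contraction is applied to \emph{every} duplicated labelled formula in both the antecedent and the consequent; this is entirely routine once Proposition~\ref{prop:admissible_rules} is available.
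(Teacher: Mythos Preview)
Your proof is correct and follows essentially the same approach as the paper: persistency (Proposition~\ref{prop:persis_seqcalc}) followed by the admissible cut. The paper is slightly more economical in that it invokes Proposition~\ref{prop:persis_seqcalc} with the \emph{empty} context to obtain $X:\varphi \Rightarrow Y:\varphi$, so that cutting against $Y:\varphi, \Gamma \Rightarrow \Delta$ yields $X:\varphi, \Gamma \Rightarrow \Delta$ directly, with no duplication and hence no need for contraction.
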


\begin{proof}
Let $X \supseteq Y$. Assume that 
$Y:\varphi, \Gamma \Rightarrow \Delta$ is 
in $\mathbf{G}(\mathsf{FBInqBQ})$. 
By Proposition \ref{prop:persis_seqcalc}, $X: \varphi \Rightarrow Y: \varphi$ is derivable in $\mathbf{G}(\mathsf{FBInqBQ})$. 
By the admissibility of the cut rule (Theorem \ref{thm:cut_elim}), $\mathbf{G}(\mathsf{FBInqBQ}) \vdash X:\varphi, \Gamma \Rightarrow \Delta$ holds. 
\qed\end{proof}

\begin{proposition} \label{prop:dercasari}
For every finite subset $X \subseteq \omega$, the following sequent 
\[
X: \Any{x}((\phi(x) \to \Any{x}\phi(x)) \to \Any{x}\phi(x)) \Rightarrow  X:  \Any{x}\phi(x)
\]
is derivable in $\mathbf{G}(\mathsf{FBInqBQ})$. Therefore, a sequent $\Rightarrow X: \lna{CasariScheme}$ is derivable in $\mathbf{G}(\mathsf{FBInqBQ})$, 
for every finite subset $X \subseteq \omega$. 
\end{proposition}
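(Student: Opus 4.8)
The plan is to prove the first (sequent) statement by induction on $\#X$. Finiteness of $X$ is exactly the proof-theoretic incarnation of the absence of infinite ascending chains invoked in Theorem~\ref{th:casari}: shrinking a label to a proper nonempty subset corresponds to passing to a strictly larger world in a Kripke-semantic refutation attempt, and a finite label bounds the length of such a descent. To keep the bookkeeping readable, abbreviate $\alpha := \Any{x}\phi(x)$ and, for a variable $a$, $\beta_a := (\phi(a)\to\alpha)\to\alpha$, so that the antecedent is $X:\Any{x}\beta_x$ and the target succedent formula is $X:\alpha$.

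Fix a nonempty finite $X$ and assume the claim for every nonempty $Y\subsetneq X$. First I would apply $(\Rightarrow\forall)$ to the succedent, reducing the goal to $X:\Any{x}\beta_x \Rightarrow X:\phi(z)$ for a fresh $z$ (the eigenvariable condition is met since $z$ is chosen new). Next I would apply $(\forall\Rightarrow)$ to the antecedent, instantiating the outer quantifier with $z$ so as to add $X:\beta_z=X:(\phi(z)\to\alpha)\to\alpha$ while retaining $X:\Any{x}\beta_x$. Then I would apply $(\to\Rightarrow)$ to $X:\beta_z$ with the \emph{choice} $Y:=X$, which yields two premises. The right premise carries $X:\alpha$ in its antecedent; instantiating it with $z$ via $(\forall\Rightarrow)$ produces $X:\phi(z)$ on the left, and the sequent closes against $X:\phi(z)$ in the succedent by persistency (Proposition~\ref{prop:persis_seqcalc}, using $X\supseteq X$).

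The left premise has the shape $X:\beta_z, X:\Any{x}\beta_x \Rightarrow X:\phi(z), X:(\phi(z)\to\alpha)$, and I would apply $(\Rightarrow\to)$ to its right-hand implication. This fans out into one premise $Y:\phi(z), X:\beta_z, X:\Any{x}\beta_x \Rightarrow X:\phi(z), Y:\alpha$ for each nonempty $Y\subseteq X$. For $Y=X$ the premise again closes by persistency (with $X:\phi(z)$ on both sides). For each proper $Y\subsetneq X$ I would invoke the induction hypothesis $Y:\Any{x}\beta_x \Rightarrow Y:\alpha$, lift its antecedent label to $X$ by $(\mathtt{Mon})$ (Corollary~\ref{cor:mon_seqcalc}, since $X\supseteq Y$) to get $X:\Any{x}\beta_x \Rightarrow Y:\alpha$, and finish by height-preserving weakening (Proposition~\ref{prop:admissible_rules}(ii)) to restore the remaining side formulas. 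The base case $\#X=1$ is subsumed: a singleton has no proper nonempty subset, so only the $Y=X$ branch arises and everything closes by persistency without any appeal to the hypothesis. For the final (``Therefore'') statement, recall $\lna{CasariScheme}=\Any{x}\beta_x\to\alpha$; to derive $\Rightarrow X:\lna{CasariScheme}$ I would apply $(\Rightarrow\to)$, whose premises are precisely $Y:\Any{x}\beta_x \Rightarrow Y:\alpha$ for each nonempty $Y\subseteq X$, and each of these is already supplied by the first part instantiated at $Y$.

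The main obstacle, and really the only delicate point, is the label bookkeeping in the two implication steps: one must pick $Y=X$ at $(\to\Rightarrow)$ so that the right premise closes cheaply against the stored $X:\alpha$, whereas the subsequent $(\Rightarrow\to)$ must range over \emph{all} $Y\subseteq X$, and it is exactly the proper subsets that feed the induction. The conceptual crux is recognizing that $(\mathtt{Mon})$—hence, ultimately, the admissibility of cut (Theorem~\ref{thm:cut_elim})—is what allows a hypothesis stated at a smaller label $Y$ to be reattached to the fixed antecedent $X:\Any{x}\beta_x$; once this reattachment mechanism is in place, every remaining branch is closed by persistency and weakening, and the induction on $\#X$ terminates precisely because $X$ is finite.
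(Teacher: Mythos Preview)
Your proposal is correct and follows essentially the same route as the paper: induction on $\#X$, applying $(\Rightarrow\forall)$ then $(\forall\Rightarrow)$, then $(\to\Rightarrow)$ with the choice $Y=X$, closing the right branch by persistency after one more $(\forall\Rightarrow)$, and handling the left branch by $(\Rightarrow\to)$ with the $Y=X$ case closed by persistency and the proper-subset cases by the induction hypothesis lifted via $(\mathtt{Mon})$ and weakening. The only cosmetic difference is the base case: the paper treats $\#X=1$ separately by observing that singleton labels collapse to the classical $\mathbf{G3c}$ rules, whereas you absorb it into the inductive argument by noting that a singleton has no proper nonempty subsets---a slightly cleaner uniform treatment.
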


\begin{proof}
The latter is obtained from the former by applying the rule $(\Rightarrow \to)$. 
Thus we show the former statement by induction on $\# X$. 
For the base step, we assume $X$ = $\setof{n}$. Then we can regard that all the inference rules with the label $\setof{n}$
are the same as those for the classical first-order logic, and so  the sequent in question
is easily derivable in $\mathbf{G}(\mathsf{FBInqBQ})$. 
For the inductive step, let $\# X > 1$. 
\iffull\else
\begin{scriptsize}
\fi
\[
\infer[(\Rightarrow \forall)]{X: \Any{x}((\phi(x) \to \Any{x}\phi(x)) \to \Any{x}\phi(x)) \Rightarrow  X:  \Any{x}\phi(x)}{
\infer[(\forall \Rightarrow)]{X: \Any{x}((\phi(x) \to \Any{x}\phi(x)) \to \Any{x}\phi(x)) \Rightarrow  X:  \phi(z_{1})}{
\infer[(\to \Rightarrow )]{
X: \Any{x}((\phi(x) \to \Any{x}\phi(x)) \to \Any{x}\phi(x)), X: (\phi(z_{1}) \to \Any{x}\phi(x)) \to \Any{x}\phi(x) \Rightarrow  X:  \phi(z_{1})
}{
\infer[(\Rightarrow \to)]{\Theta \Rightarrow X: \phi(z_{1}) \to \Any{x}\phi(x), X:  \phi(z_{1})}{
\inset{\Theta, Y: \phi(z_{1}) \Rightarrow Y: \Any{x}\phi(x), X:  \phi(z_{1}) }{X \supseteq Y}
}
&
\infer[(\forall \Rightarrow)]{\Theta, X: \Any{x}\phi(x)\Rightarrow X:  \phi(z_{1})}{
\Theta, X: \phi(z_{1}), X: \Any{x}\phi(x)\Rightarrow X:  \phi(z_{1})
}
}
}
}
\]
\iffull\else
\end{scriptsize}
\fi
where \begin{footnotesize}$\Theta := X: \Any{x}((\phi(x) \to \Any{x}\phi(x)) \to \Any{x}\phi(x)), X: (\phi(z_{1}) \to \Any{x}\phi(x)) \to \Any{x}\phi(x).$\end{footnotesize}
Since the right topsequent is derivable by Proposition \ref{prop:persis_seqcalc}, we focus on providing a derivation for a sequent 
\iffull\[\else$\fi
\Theta, Y: \phi(z_{1}) \Rightarrow Y: \Any{x}\phi(x), X:  \phi(z_{1})
\iffull\]\else$\fi
for each $Y$ such that $X \supseteq Y$. Fix any $Y$ such that $X \supseteq Y$. Suppose $Y = X$, then the sequent is derivable by Proposition \ref{prop:persis_seqcalc}. So, let us assume that $Y \neq X$, which implies $\#Y < \#X$. 
By induction hypothesis, the sequent $Y: \Any{x}((\phi(x) \to \Any{x}\phi(x)) \to \Any{x}\phi(x)) \Rightarrow  Y:  \Any{x}\phi(x)$ is derivable. By Corollary \ref{cor:mon_seqcalc}, $X: \Any{x}((\phi(x) \to \Any{x}\phi(x)) \to \Any{x}\phi(x)) \Rightarrow  Y:  \Any{x}\phi(x)$ is derivable. By the admissibility of weakening rules (by Proposition~\ref{prop:admissible_rules} (ii)), we obtain our desired derivability of  $\Theta, Y: \phi(z_{1}) \Rightarrow Y: \Any{x}\phi(x), X:  \phi(z_{1})$. 
\qed\end{proof}

\noindent
\iffull
The reader can find more examples of derivations 
in Appendix \ref{sec:otherkeyaxioms}. 
\else\fi 

\tlnt{Move one-two more derivations to the body of the paper?}

\begin{remark}
\label{rem:inqb}
This is a continuation of Remark \ref{rem:comparison} on the propositional fragment $\mathbf{G}(\mathsf{InqB})$. 
Our syntactic argument for Theorem \ref{thm:cut_elim} implies the admissibility of cut rule in $\mathbf{G}(\mathsf{InqB})$.  
We can prove the following soundness and completeness for $\mathsf{InqB}$, where we regard $\mathsf{InqB}$ as the set of valid (propositional) formulas for the propositional reduction of our semantics (we do not need to consider the domain $D$). The following two clauses are equivalent: (i) $\mathbf{G}(\mathsf{InqB}) \vdash \,\Rightarrow \setof{1,\ldots,2^{n}}:\varphi$; 
(ii) $\varphi \in \mathsf{InqB}$ (note again that propositionally $\mathsf{InqB}$  coincides with $\mathsf{InqL}$ \cite{Ciardelli09}), where $n$ is the number of propositional variables in $\varphi$. 
The direction from (i) to (ii) is already established by  Proposition \ref{prop:sound}. 
For the direction from (ii) to (i), we proceed as follows. 
Suppose the negation of (i), i.e., $\mathbf{G}(\mathsf{InqB}) \nvdash \,\Rightarrow \setof{1,\ldots,2^{n}}:\varphi$. We prove $\varphi \notin \mathsf{InqB}$. Put $Z$ = $\setof{1,\ldots,2^{n}}$. By Lemma \ref{lem:saturation}, we can find a (propositionally) $Z$-saturated (finite) labelled sequent $\Gamma \Rightarrow \Delta$ such that $\setof{1,\ldots,2^{n}}:\varphi \in \Delta$. We define our derived model $\mathfrak{M}_{(\Gamma,\Delta)}$ = $(Z,v)$ as: $k \in v(P)$ iff $\setof{k}:P \notin \Delta$ (where $P$ occurs in $\Gamma$ or $\Delta$, otherwise $v(P)$ := $\varnothing$, we use ``$v$'' instead of $\mathfrak{J}$ here). 
By Lemma \ref{lem:truth}, we obtain $\mathfrak{M}_{(\Gamma,\Delta)}, \setof{1,\ldots,2^{n}} \nVdash \varphi$ hence $\varphi \notin \mathsf{InqB}$. 
\end{remark}

\if0
\begin{question} \label{quest:craig}
Do $\inqbq$, $\inqbqo$ and $\inqbq_{n}$ for a fixed $n$ enjoy the Craig interpolation property?
\end{question}
\fi


\section{Conclusions} \label{sec:conclusions}

Our results reveal complex interplay between correspondence theory and proof theory of inquisitive logic. 
 We have seen, for example, that the status of the Casari axiom in the base inquisitive logic resembles that of the double negation law: it is valid atomically 
 but not schematically. 
 Its schematic validity is characteristic of $\inqbqo$. Remark \ref{rem:cascoh} discusses further challenges and open questions related to these results. 

Further intriguing questions regarding notions closely related to boundedness such as \emph{coherence} 
are raised by Ciardelli and Grilletti \cite[\S\ 10]{Ciardelli2022a}. Several cardinality-related questions posed therein appear related to papers concerning model theory and correspondence theory of extensions of \lna{CD} \cite{MinariEA90,OnoH:1973}. It is not known, for example, if \inqbq\ is complete with respect to relational information models with at most \emph{countable} collection $W$ of possible worlds. It would be also of interest to see if the notion of schematic validity and our labelled sequent apparatus can be useful in resolving Ciardelli and Grilletti's challenge of algorithmically identifying formulas coherent for a fixed cardinality. Remark \ref{rem:rex} hints at the relevance of our sequent setup for their question whether the coherence rule is indispensable for entailments with rex conclusions.

Given our proof-theoretic results, it appears natural to consider the Craig interpolation property for logics considered herein. 
 It is also natural to mechanize our calculus in a proof assistant; some spadework in the Rocq proof assistant covering partially our results (but without, e.g., syntactic cut elimination) has been done by Max Ole Elliger \cite{OleMSc,OleForm}. 
 Furthermore, it could be of interest to extend existing computational interpretations of sequent calculi \cite[Ch. 7]{SorensenU06:book} \cite{CurienH00} to our setting. Continuing the topic of CS applications, the discussion of Armstrong relations  by  Abramsky and V{\"{a}}{\"{a}}n{\"{a}}nen \cite{AbramskyV09} or Ciardelli's perspective on \emph{mention-some} questions  \cite[p. 348 and Ch. 4.7]{Ciardelli2016} suggest relationships with database theory that appear rather under-explored by both communities.

\paragraph{Acknowledgements} We thank the reviewers of several versions of this paper for their constructive comments. 
The work of the first named author has been partially funded by the PNRR MUR project FAIR (No. PE0000013-FAIR).  
The work of the second named author was partially supported by JSPS KAKENHI Grant-in-Aid for Scientific Research (B) Grant Number JP 22H00597 and (C) Grant Number JP 25K03537. 
The work of both authors was partially supported by JSPS KAKENHI Grant-in-Aid for Scientific Research  (C) Grant Number JP 19K12113.
We would also like to acknowledge the feedback of Max Ole Elliger at  FAU Erlangen-Nuremberg; his work on formalization was particularly helpful in the context of Section \ref{sec:casari}.

\bibliographystyle{aiml}
\bibliography{vsano20200911,tadeuszinq,negmod}


\iffull

\appendix

\section{Constant Domain Sematics and Inquisitive Logic} \label{sec:cd}
We recall the broader definition of Kripke semantics for superintuitionistic predicate logic of constant domains, of which the inquisitive semantics is a special case. This might be useful for some readers not familiar with the background of Section \ref{sec:schema}. Fix a nonempty domain $D$ of individual elements and a partially ordered collection $\mN := \tuple{N,\leq}$ of \emph{nodes}. To simplify presentation, we assume the existence of the greatest (inconsistent) node $max^\mN$, and at least one node distinct from $max^\mN$.

\begin{itemize}
\item Define $\mathsf{MOD}_{D}$ to be the class of all classical FO-structures on $D$ for \lna{Pred}. 
\item Let $\mathfrak{J}$ be a function from $N$ to  $\mathsf{MOD}_{D}$ s.t for every $P \in \lna{Pred}$ and for every $n,o \in N$ s.t. $n \leq o$, $P^{\mathfrak{J}(n)} \subseteq P^{\mathfrak{J}(o)}$. The collection of all such functions is denoted as \IntModels{\mN,D}.
 Relatively to a given valuation of individual variables $g : \mathsf{Var} \to D$, the notion of forcing is defined in the standard intuitionistic way: 

\begin{tabular}{L{4.25cm}@{\;if\quad}L{7.7cm}} 
$\nng \bot$ & $n = max_\mathfrak{N}$,  \\[1mm]
$\nng P(x_1,\dots,x_m)$ & $\tuple{g(x_1),\dots,g(x_m)} \in P^{\mathfrak{J}(n)}$,  \\[1mm]
$\nng \phi \wedge \psi$  & $\nng \phi$ and $\nng \psi$, \\[1mm]
$\nng \phi \to \psi$  & $\forall o \geq n, \ndm, o \Vdash_g \phi  \text{ implies } \ndm, o \Vdash_g \psi$,  \\[1mm]
$\nng \phi \inqd \psi$  & $\nng \phi$ or $\nng \psi$, \\[1mm]
$\nng \Any{x}\phi$ & for all $d \in D$, $\ndm, n \Vdash_{g[x \mapsto d]} \phi$, \\[1mm]
$\nng \Exi{x}\phi$ & exists $d \in D$ s.t. $\ndm, n \Vdash_{g[x \mapsto d]} \phi$.
\end{tabular}
\end{itemize}

\noindent Here, $g[x \mapsto d]$ is the same mapping as $g$ except that it sends $x$ to $d$. The apparent ``classicality'' of the clause for the universal quantifier is obviously related to the constant domain assumption, 
which yields the following:

\begin{proposition}[CD Persistency]
\label{prop:persgen}
If $o \geq n$ and $\nng \varphi$ then $\nog \varphi$. 
\end{proposition}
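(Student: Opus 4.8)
The plan is to prove the statement by induction on the structure of $\varphi$, keeping $o \geq n$ fixed throughout. This is the standard monotonicity argument for Kripke models of intuitionistic predicate logic, adapted to the constant-domain clauses given above; the only genuine work happens in the atomic and implication cases, and everything else follows immediately from the induction hypothesis.

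For the base cases I would argue as follows. When $\varphi \equiv \bot$, the hypothesis $\nng \bot$ forces $n = max_\mN$, and since $max_\mN$ is assumed to be the greatest node, $o \geq n = max_\mN$ yields $o = max_\mN$, so $\nog \bot$. When $\varphi \equiv P(x_1,\dots,x_m)$, the hypothesis says $\tuple{g(x_1),\dots,g(x_m)} \in P^{\mJ(n)}$; the defining monotonicity condition on $\mJ$ — namely $P^{\mJ(n)} \subseteq P^{\mJ(o)}$ whenever $n \leq o$, built into the definition of $\IntModels{\mN,D}$ — transports this tuple into $P^{\mJ(o)}$, which is exactly $\nog P(x_1,\dots,x_m)$. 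This atomic step is the analogue of the monotonicity assumption underlying persistency in the relational-information setting (Proposition \ref{prop:persistency}).

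For the inductive cases, conjunction and $\inqd$ are immediate by applying the induction hypothesis to each conjunct, respectively disjunct, and $\Any{x}$ and $\Exi{x}$ likewise: because of the constant-domain assumption the same $D$ is quantified over at $n$ and at $o$, so from the induction hypothesis applied at $g[x \mapsto d]$ for every (respectively some) $d \in D$ we conclude forcing at $o$. The one case requiring a second structural ingredient is implication. Assuming $\nng \varphi \to \psi$ and $o \geq n$, to check $\nog \varphi \to \psi$ I would take an arbitrary $p \geq o$ with $\ndm, p \Vdash_g \varphi$; by transitivity of $\leq$ we get $p \geq n$, so the hypothesis at $n$ applies and gives $\ndm, p \Vdash_g \psi$, as required. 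Notably this case needs no appeal to the induction hypothesis at all — only transitivity of the accessibility order.

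I expect no real obstacle here: the proof is routine once the two structural facts (monotonicity of $\mJ$ on atoms and transitivity of $\leq$, together with $max_\mN$ being greatest for the $\bot$ case) are isolated. The mild subtlety worth flagging is that the universal-quantifier clause is ``classical'' precisely because of constant domains, so persistency for $\Any{x}$ needs nothing beyond the induction hypothesis; in a varying-domain semantics this is exactly the point where additional care would be required.
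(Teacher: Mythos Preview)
Your argument is correct. The paper does not supply a proof of this proposition at all; it merely states it as a standard fact about constant-domain Kripke semantics, so your structural induction on $\varphi$ is precisely the routine argument one would expect and all cases are handled correctly.
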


\noindent
Furthermore, $\ndm \Vdash_g \varphi$ if $\ndm, n \Vdash_g \varphi$ for all $n \in N$, moreover $\ndm \Vdash \varphi$ if $\ndm \Vdash_g \varphi$ for all $g$, also $\mathfrak{N}, D \Vdash \varphi$ if $\ndm \Vdash \varphi$ for all $\mathfrak{J} \in \IntModels{N,D}$ and finally $\mathfrak{N} \Vdash \varphi$ if $\mN, D \Vdash \varphi$ for all $D$. We specialize this constant domain semantics to inquisitive semantics as follows.

\begin{itemize}
\item Fix a non-empty collection  $W$ of \emph{possible worlds}. We call any subset $s \subseteq W$ a \emph{state} (relative to $W$). 
For any $s,s' \in \powset(W)$, $s \leq s'$ is defined as $s \supseteq s'$. In other words, $\states{W} := \tuple{\powset(W),\supseteq}$ is our ordered collection of nodes.
\item Let $\mM$ be a function from $W$ to  $\mathsf{MOD}_{D}$. 
The class of all such functions is denoted as \InjModels{W,D}.\footnote{Ciardelli's original MSc Thesis \cite{Ciardelli09} required injectivity of such functions. In newer references, this requirement is dropped. It sometimes leads to complications such as considering \emph{essential cardinality} instead of ordinary cardinality \cite[\S 6]{Ciardelli2022a}, but these considerations do not arise in the present paper. Instead, the injectivity restriction could slightly complicate notationally our completeness proof in Section \ref{sec:semcompl}.}
 Any $\mM \in \InjModels{W,D}$ can be equivalently seen as a \emph{relational information model} $(W,D,\mI)$, where $\mI: \mathsf{Pred} \times W \to D^{< \omega}$ is a function sending each $n$-arity predicate symbol $P$ to $\mI(P,w) \in\mathcal{P}(D^{n})$. 
\item Any $\mM \in \InjModels{W,D}$ is lifted to $\lft{\mM} \in \IntModels{\states{W},D}$ by
\begin{equation} \label{regat}
\tag{\lna{RegAt}} P^{\lft{\mM}(s)} := \bigcap \inset{P^{\mathfrak{M}(w)}}{w \in s}. 
\end{equation}
%
\end{itemize}
%
When the underlying model $\mM$ is clear from the context, we simply write $s \Vdash_{g} \varphi$ instead of $\lft{\mM}, s \Vdash_{g} \varphi$. We can also omit the lifting notation and write $\mM, s \Vdash_{g} \varphi$ instead, as this does not cause any confusion, thus arriving at the definition of support given in Section \ref{sec:prelim} as reformulation.  

\section{A ND Derivation Supporting Theorem \ref{th:casat}}
\label{sec:casat}

\[
\infer[(\to I)_{3}]{
(\Any{x}((\neg\neg P(x) \to \Any{x}\neg\neg P(x)) \to \Any{x}\neg\neg P(x))) \to \Any{x}\neg\neg P(x) 
}{
\infer[(\forall I)]{\Any{x}\neg \neg P(x)}{
\infer[(\neg I)_{2}]{\neg \neg P(z)}{
\infer[(\neg E)]{\bot}{
[\neg P(z)]_{2}
&
\infer[(\forall E)]{\neg\neg P(z)}{
\infer[(\to E)]{
\Any{x}\neg\neg P(x)
}{
\infer[(\to I)_{1}]{
\neg \neg P(z) \to \Any{x} \neg \neg P(x)
}{
\infer[(\bot)]{\Any{x} \neg \neg P(x)}{
\infer[(\neg E)]{\bot}{
[\neg P(z)]_{2}& [\neg \neg P(z)]_{1}
}
}
}
&
\infer[(\forall E)]{(\neg\neg P(z) \to \Any{x}\neg\neg P(x)) \to \Any{x}\neg\neg P(x) }{[\Any{x}((\neg\neg P(x) \to \Any{x}\neg\neg P(x)) \to \Any{x}\neg\neg P(x))]_{3} }
}
}
}
}
}
}
\]

\section{Examples of Derivations}
\label{sec:otherkeyaxioms}

\begin{corollary}
\label{cor:neg_seqcalc}
The following two rules are admissible in $\mathbf{G}(\mathsf{FBInqBQ})$: 
\[
\infer[(\Rightarrow \neg)]{\Gamma \Rightarrow \Delta, X:\neg\varphi}{\inset{\setof{k}:\varphi, \Gamma \Rightarrow \Delta}{k \in X}},
\quad
\infer[(\neg \Rightarrow) \text{ where $X \supseteq Y$,}]{X:\neg\varphi, \Gamma \Rightarrow \Delta}{\Gamma \Rightarrow \Delta, Y:\varphi}
\]
\end{corollary}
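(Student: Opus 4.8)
The plan is to unfold $\neg\varphi$ as $\varphi \to \bot$ and reduce each admissibility claim to a single application of the corresponding implication rule, discharging the resulting premises with structural facts already in hand. Both rules will then follow by root-first rule application together with height-preserving weakening (Proposition~\ref{prop:admissible_rules}(ii)), the monotonicity rule $(\mathtt{Mon})$ (Corollary~\ref{cor:mon_seqcalc}), and the initial sequent $(\bot\Rightarrow)$.

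For $(\Rightarrow \neg)$, I would apply $(\Rightarrow \to)$ to the target $\Gamma \Rightarrow \Delta, X:\varphi \to \bot$, which reduces the problem to deriving $Y:\varphi, \Gamma \Rightarrow \Delta, Y:\bot$ for every nonempty $Y \subseteq X$. The hypotheses supplied to the rule, however, are indexed only by singletons: $\setof{k}:\varphi, \Gamma \Rightarrow \Delta$ for each $k \in X$. To bridge this gap, fix a nonempty $Y \subseteq X$ and choose any $k \in Y$ (so $Y \supseteq \setof{k}$); applying $(\mathtt{Mon})$ to the corresponding hypothesis yields $Y:\varphi, \Gamma \Rightarrow \Delta$, and a single weakening step $(\Rightarrow w)$ adds $Y:\bot$ to the consequent. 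Collecting these derivations for all nonempty $Y \subseteq X$ furnishes exactly the premise family of $(\Rightarrow \to)$, completing the derivation of $\Gamma \Rightarrow \Delta, X:\neg\varphi$.

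For $(\neg \Rightarrow)$ with $X \supseteq Y$, I would apply $(\to \Rightarrow)$ to $X:\varphi \to \bot, \Gamma \Rightarrow \Delta$, producing two premises. The left premise $X:\varphi \to \bot, \Gamma \Rightarrow \Delta, Y:\varphi$ is obtained from the given hypothesis $\Gamma \Rightarrow \Delta, Y:\varphi$ by one antecedent weakening $(w \Rightarrow)$. The right premise $Y:\bot, X:\varphi \to \bot, \Gamma \Rightarrow \Delta$ is simply an instance of $(\bot \Rightarrow)$, since it carries $Y:\bot$ in the antecedent. Hence $X:\neg\varphi, \Gamma \Rightarrow \Delta$ is derivable.

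There is no genuine obstacle here; the only point demanding care is the mismatch in $(\Rightarrow \neg)$ between the singleton-indexed hypotheses and the subset-indexed premises of $(\Rightarrow \to)$, which is precisely what the persistency-derived rule $(\mathtt{Mon})$ is designed to absorb. One should also keep in mind that all labels are nonempty, so the choice of $k \in Y$ is always available and every $Y$ in the premise family of $(\Rightarrow \to)$ is indeed covered.
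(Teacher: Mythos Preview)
Your proposal is correct and follows essentially the same approach as the paper: for $(\Rightarrow \neg)$ you unfold via $(\Rightarrow \to)$, pick $k \in Y$, apply $(\mathtt{Mon})$ and then weaken, exactly as the paper does; for $(\neg \Rightarrow)$ the paper merely says the case is easy, and your spelled-out argument via $(\to \Rightarrow)$, weakening, and $(\bot \Rightarrow)$ is the intended one.
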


\noindent  The proof involves admissibility of weakening rules and Corollary \ref{cor:mon_seqcalc}. 

\begin{proof}
Because the admissibility of the rule $(\neg \Rightarrow)$ is easy, we focus on the admissibility of the rule $(\Rightarrow \neg)$. 
Suppose that $\setof{k}:\varphi, \Gamma \Rightarrow \Delta$ is derivable for each $k \in X$. 
To show the derivability of 
$\Gamma \Rightarrow \Delta, X:\neg\varphi$, it suffices to establish the derivability of 
$Y:\varphi, \Gamma \Rightarrow \Delta, Y:\bot$ for all $X \supseteq Y$. 
Fix any $Y$ such that $X \supseteq Y$. 
Since $Y$ is nonempty, fix some $k \in Y$. 
By the initial supposition, 
$\setof{k}:\varphi, \Gamma \Rightarrow \Delta$ is derivable. 
Since $\setof{k} \subseteq Y$, Corollary \ref{cor:mon_seqcalc} implies that $Y:\varphi, \Gamma \Rightarrow \Delta$ is derivable. 
By the admissibility of weakening rules (by Proposition~\ref{prop:admissible_rules} (ii)), 
we obtain our desired derivability of $Y:\varphi, \Gamma \Rightarrow \Delta, Y:\bot$. \qed
\end{proof}

\begin{proposition} \label{prop:kurocalc}
For every finite subset $X \subseteq \omega$, a sequent $X:\Any{x}\neg\neg\varphi \Rightarrow X:\neg \neg \Any{x}\varphi$ is derivable in $\mathbf{G}(\mathsf{FBInqBQ})$. Therefore, a sequent $\Rightarrow X: \lna{Kuroda}$ is derivable in $\mathbf{G}(\mathsf{FBInqBQ})$, 
for every finite subset $X \subseteq \omega$. 
\end{proposition}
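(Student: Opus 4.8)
The plan is to first dispatch the ``Therefore'' clause exactly as in the Casari derivation (Proposition~\ref{prop:dercasari}): applying $(\Rightarrow\to)$ to the goal $\Rightarrow X:\lna{Kuroda}$ reduces it to deriving $Y:\Any{x}\neg\neg\varphi \Rightarrow Y:\neg\neg\Any{x}\varphi$ for every nonempty $Y\subseteq X$, each of which is an instance of the first statement of the proposition. So it suffices to establish the first statement for every finite label $X$.

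For the core sequent $X:\Any{x}\neg\neg\varphi \Rightarrow X:\neg\neg\Any{x}\varphi$, the idea is to peel the outer negation off the right-hand side so as to land on a singleton label, where the calculus behaves classically. First I would apply the admissible rule $(\Rightarrow\neg)$ from Corollary~\ref{cor:neg_seqcalc}, reducing the goal to deriving, for each $k\in X$, the sequent $\setof{k}:\neg\Any{x}\varphi,\ X:\Any{x}\neg\neg\varphi \Rightarrow$ with empty succedent. Next I would apply $(\neg\Rightarrow)$ of the same corollary to the principal formula $\setof{k}:\neg\Any{x}\varphi$, choosing $Y=\setof{k}$, which turns the goal into $X:\Any{x}\neg\neg\varphi \Rightarrow \setof{k}:\Any{x}\varphi$. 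Now both the available information and the goal can be collapsed to the singleton $\setof{k}$: since $X\supseteq\setof{k}$, the admissible monotonicity rule $(\mathtt{Mon})$ of Corollary~\ref{cor:mon_seqcalc} reduces this to $\setof{k}:\Any{x}\neg\neg\varphi \Rightarrow \setof{k}:\Any{x}\varphi$.

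Finally, this last sequent carries only the singleton label $\setof{k}$, so every inference rule available to it coincides with its classical $\mathbf{G3c}$ counterpart (as in the base case of Proposition~\ref{prop:dercasari}, and noting that $(\Rightarrow\mathtt{at})$ degenerates to the identity on singletons while $\inqd,\inqe$ behave as the classical $\vee,\exists$). Hence it is derivable as the purely classical validity $\forall x\,\neg\neg\varphi \to \forall x\,\varphi$: unfolding it with $(\Rightarrow\forall)$ for a fresh eigenvariable $z$ and then $(\forall\Rightarrow)$ instantiated at $z$ leaves the singleton double-negation elimination $\setof{k}:\neg\neg(\varphi[z/x]) \Rightarrow \setof{k}:\varphi[z/x]$, which is derivable exactly as in the Example treating $\neg\neg P(\overline{x})\to P(\overline{x})$.

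I expect the one genuinely non-routine point to be this last step: unlike the double negation law itself, which is only \emph{atomically} valid in $\inqbq$, here we need double negation elimination for the \emph{arbitrary} (possibly inquisitive) formula $\varphi[z/x]$. This is legitimate only because the formula sits at a singleton label, where the inquisitive connectives lose their non-classical behaviour; off singletons the analogous step would fail, which is precisely why Kuroda is schematically (not merely atomically) valid. It is worth noting that, since the singleton fragment needs no application of $(\Rightarrow\mathtt{at})$, the entire derivation avoids that rule, so Theorem~\ref{thm:generation_schematic_validities} re-confirms that $\lna{Kuroda}$ is schematically valid in every $\mathsf{InqBQ}_{n}$.
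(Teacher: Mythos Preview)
Your proposal is correct and follows essentially the same route as the paper: apply $(\Rightarrow\neg)$ then $(\neg\Rightarrow)$ to land on $X:\Any{x}\neg\neg\varphi \Rightarrow \setof{k}:\Any{x}\varphi$, unfold the quantifiers, and finish with the singleton double-negation step via Proposition~\ref{prop:persis_seqcalc}. The one minor deviation is that you insert $(\mathtt{Mon})$ early to collapse the antecedent label from $X$ to $\setof{k}$ before touching the quantifiers, whereas the paper keeps $X$ on the left and only drops to the singleton in the final application of $(\neg\Rightarrow)$ (choosing $Y=\setof{k}\subseteq X$); both reach the same leaf $\setof{k}:\varphi[z/x]\Rightarrow\setof{k}:\varphi[z/x]$.

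One small caution on your closing remark: the admissible rules $(\Rightarrow\neg)$, $(\neg\Rightarrow)$ and $(\mathtt{Mon})$ are justified via cut (Theorem~\ref{thm:cut_elim} and Corollary~\ref{cor:mon_seqcalc}), so concluding that the resulting \emph{cut-free} derivation avoids $(\Rightarrow\mathtt{at})$ requires checking that the cut-elimination procedure does not introduce it here. This is plausible (and the paper's explicit tree indeed never uses $(\Rightarrow\mathtt{at})$), but it is an extra verification beyond what you wrote.
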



\begin{proof}
It suffices to establish the first claim. 
In what follows, we use the admissibility of $(\Rightarrow \neg)$ and $(\neg \Rightarrow)$ from Corollary \ref{cor:neg_seqcalc}.
\[
\infer[(\Rightarrow \neg)]{X:\Any{x}\neg\neg\varphi \Rightarrow X:\neg \neg \Any{x}\varphi}{
\infer[(\neg \Rightarrow)]{
\inset{ \setof{k}:\neg \Any{x}\varphi, X:\Any{x}\neg\neg\varphi \Rightarrow }{k \in X}
}{
\infer[(\Rightarrow \forall)]{\inset{X:\Any{x}\neg\neg\varphi \Rightarrow  \setof{k}:\Any{x}\varphi}{k\in X}}{
\infer[(\forall \Rightarrow)]{ \inset{X:\Any{x}\neg\neg\varphi \Rightarrow  \setof{k}:\varphi[z/x]}{k\in X} }{
\infer[(\neg\Rightarrow)]{\inset{X: \neg\neg \varphi [z/x] \Rightarrow  \setof{k}:\varphi[z/x]}{k\in X}}
{
\infer[(\Rightarrow \neg)]{\inset{\Rightarrow  \setof{k}:\varphi[z/x], \setof{k}: \neg \varphi [z/x]}{k \in X} }
{
\inset{\setof{k}: \varphi [z/x] \Rightarrow  \setof{k}:\varphi[z/x]}{k\in X}
}
}
}
}
}
}
\]
where the topmost sequents are derivable by Proposition \ref{prop:persis_seqcalc}. 
\qed\end{proof}

\begin{proposition}
\label{prop:KP}
A sequent $X: \neg \theta \to (\varphi \inqd \psi) \Rightarrow X: (\neg \theta \to \varphi) \inqd (\neg \theta \to \psi)$ is derivable in $\mathbf{G}(\mathsf{FBInqBQ})$ for every finite subset $X \subseteq \omega$.
\end{proposition}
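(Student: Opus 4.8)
The plan is to reduce the goal, via the invertible right rules, to a family of obligations indexed by pairs of subsets of $X$, and then to discharge each obligation using the fact that $\neg\theta$ behaves \emph{additively} across labels. Semantically, this additivity reflects that support of a negation at a state is determined pointwise by its singleton substates: if $\setof{k}\not\Vdash\theta$ for every $k$ in a label, then the whole label supports $\neg\theta$. Proof-theoretically, this is exactly what the admissible rules $(\Rightarrow\neg)$ and $(\neg\Rightarrow)$ of Corollary \ref{cor:neg_seqcalc} encode.

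First I would apply $(\Rightarrow\inqd)$ to the succedent, obtaining $X:\neg\theta\to(\varphi\inqd\psi)\Rightarrow X:\neg\theta\to\varphi, X:\neg\theta\to\psi$, and then peel off both implications with two applications of $(\Rightarrow\to)$. This leaves, for every pair of labels $Y,Z\subseteq X$, the obligation
\[
Y:\neg\theta,\, Z:\neg\theta,\, X:\neg\theta\to(\varphi\inqd\psi)\ \Rightarrow\ Y:\varphi,\, Z:\psi.
\]
Note that decomposing \emph{both} implications is essential: keeping only one hypothesis $Y:\neg\theta$ would leave a leaf of the shape $Y:\psi\Rightarrow X:\neg\theta\to\psi$, which is invalid whenever $Y\subsetneq X$, since a $\neg\theta$-substate of $X$ need not be contained in $Y$.

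To discharge this obligation, I would put $U := Y\cup Z\subseteq X$ and apply $(\to\Rightarrow)$ to the antecedent implication with the subset $U$. The right premise $U:\varphi\inqd\psi, \dots \Rightarrow Y:\varphi, Z:\psi$ closes after $(\inqd\Rightarrow)$, because $U\supseteq Y$ and $U\supseteq Z$ give $U:\varphi\Rightarrow Y:\varphi$ and $U:\psi\Rightarrow Z:\psi$ by persistency (Proposition \ref{prop:persis_seqcalc}), followed by weakening. The left premise carries $U:\neg\theta$ in the succedent, so after weakening it reduces to the \emph{join sequent}
\[
Y:\neg\theta,\, Z:\neg\theta\ \Rightarrow\ U:\neg\theta,
\]
which I would prove by $(\Rightarrow\neg)$: for each $k\in U=Y\cup Z$ one has $k\in Y$ or $k\in Z$, and then $(\neg\Rightarrow)$ applied to the corresponding hypothesis (using $Y\supseteq\setof{k}$ or $Z\supseteq\setof{k}$) reduces the branch to $\setof{k}:\theta, \dots \Rightarrow \setof{k}:\theta$, derivable by persistency. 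The same combination works uniformly for every finite $X$, so no induction on $\#X$ is needed; alternatively, one may isolate the join sequent as a lemma and splice it in via the admissible cut rule (Theorem \ref{thm:cut_elim}).

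The step I expect to be the main obstacle---and the genuinely inquisitive ingredient---is the join property of negation: recognizing that two separate hypotheses $Y:\neg\theta$ and $Z:\neg\theta$ combine to $U:\neg\theta$, and that this is precisely what lets the single antecedent implication be instantiated at the ``large'' label $U$ so as to cover both disjuncts of $\varphi\inqd\psi$ at once. Everything else is the routine bookkeeping of invertible rules, persistency, and weakening.
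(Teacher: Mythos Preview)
Your proposal is correct and follows essentially the same route as the paper: decompose via $(\Rightarrow\inqd)$ and two applications of $(\Rightarrow\to)$, then for each pair $Y,Z\subseteq X$ instantiate $(\to\Rightarrow)$ at $U=Y\cup Z$, close the disjunction branch by persistency, and derive the join sequent $Y:\neg\theta,\,Z:\neg\theta\Rightarrow U:\neg\theta$ via $(\Rightarrow\neg)$ and a case split on $k\in Y$ versus $k\in Z$. The only cosmetic difference is that the paper cites Corollary~\ref{cor:mon_seqcalc} for the disjunction branch where you invoke Proposition~\ref{prop:persis_seqcalc} directly, and the paper does not need cut at all (weakening alone suffices to isolate the two subgoals after $(\to\Rightarrow)$), so your ``alternatively via cut'' remark is unnecessary.
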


\noindent Proof involves the rule $(\to \Rightarrow)$, Proposition \ref{prop:persis_seqcalc}, the admissibility of weakening rules, and Corollary \ref{cor:mon_seqcalc}. 

\begin{proof}
From the bottom, we may proceed as follows. 
\[
\infer[(\Rightarrow \inqd)]{X: \neg \theta \to (\varphi \inqd \psi) \Rightarrow X: (\neg \theta \to \varphi) \inqd (\neg \theta \to \psi)}{
\infer[(\Rightarrow \to)]{X: \neg \theta \to (\varphi \inqd \psi) \Rightarrow X: (\neg \theta \to \varphi),  X: (\neg \theta \to \psi)}{
\infer[(\Rightarrow \to)]{\inset{X: \neg \theta \to (\varphi \inqd \psi), Y: \neg \theta \Rightarrow Y: \varphi,  X: (\neg \theta \to \psi)}{X \supseteq Y}}{
\inset{X: \neg \theta \to (\varphi \inqd \psi), Y: \neg \theta, Z: \neg \theta, \Rightarrow Y: \varphi,  Z: \psi}{X \supseteq Y, Z}
}
}
}
\]
To provide a derivation of each of the topmost sequents, fix any $Y, Z \subseteq \omega$ such that $Y, Z \subseteq X$. Then we can provide our desired derivation as follows. 
By the rule $(\to \Rightarrow)$ and the admissibility of weakening rules, it suffices to derive the following two sequents. 
\begin{itemize}
    \item $Y: \neg \theta, Z: \neg \theta \Rightarrow Y \cup Z: \neg \theta$, 
    \item $Y \cup Z:\varphi \inqd \psi \Rightarrow Y: \varphi,  Z: \psi$. 
\end{itemize}
\noindent The second sequent is derivable by the rule $(\inqd \Rightarrow)$ and Corollary \ref{cor:mon_seqcalc}. 
The first sequent is derivable as follows:
\[
\infer[(\Rightarrow \neg)]{Y: \neg \theta, Z: \neg \theta \Rightarrow Y \cup Z: \neg \theta}{
\infer[(\neg \Rightarrow)]{
\inset{ \setof{k}: \theta, Y: \neg \theta, Z: \neg \theta \Rightarrow }{k \in Y}}{
\inset{ \setof{k}: \theta, Z: \neg \theta \Rightarrow \setof{k}: \theta}{k \in Y}
}
&
\infer[(\neg \Rightarrow)]{
\inset{ \setof{k}: \theta, Y: \neg \theta, Z: \neg \theta \Rightarrow }{k \in Z}}{
\inset{ \setof{k}: \theta, Y: \neg \theta \Rightarrow \setof{k}: \theta}{k \in Z}
}
}
\]
where all the leaves are derivable by Proposition \ref{prop:persis_seqcalc}.
\qed\end{proof}

\begin{example}
\label{ex:EK}
$X: \Exi{x}\varphi \Rightarrow X: \Exi{x}(\neg \theta \to \varphi)$ is derivable in $\mathbf{G}(\mathsf{FBInqBQ})$ as:
\[
\infer[(\inqe\Rightarrow)]{
X: \Exi{x}\varphi \Rightarrow X: \Exi{x}(\neg \theta \to \varphi)
}{
\infer[(\Rightarrow \inqe)]{
X:\varphi[z/x] \Rightarrow X: \Exi{x}(\neg \theta \to \varphi)
}{
\infer[(\Rightarrow \to)]{
X:\varphi[z/x] \Rightarrow X: \Exi{x}(\neg \theta \to \varphi),  X: \neg \theta[z/x] \to \varphi[z/x]
}{
\inset{X:\varphi[z/x], Y: \neg \theta[z/x] \Rightarrow X: \Exi{x}(\neg \theta \to \varphi),   Y: \varphi[z/x]}{X \supseteq Y}
}
}
}
\]
where the topmost sequents are derivable by Proposition \ref{prop:persis_seqcalc}.
\end{example}

\begin{proposition}
\label{prop:EKP}
A sequent $X: \neg \theta \to \Exi{x} \varphi(x)  \Rightarrow X: \Exi{x} (\neg \theta \to \varphi(x))$ is derivable in $\mathbf{G}(\mathsf{FBInqBQ})$ for every finite subset $X \subseteq \omega$.
\end{proposition}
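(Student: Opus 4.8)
The plan is to prove a mild strengthening by induction, mirroring the semantic reason the target sequent is valid: the premise $\neg\theta\to\Exi{x}\varphi$ supplies witnesses only \emph{locally}, but $\neg\theta$ is flat, i.e.\ its support is closed under unions (the very fact already exploited in the proof of Proposition~\ref{prop:KP} via the sequent $Y:\neg\theta,Z:\neg\theta\Rightarrow Y\cup Z:\neg\theta$). Hence $\neg\theta$ has a largest supporting substate, and a witness extracted there serves the whole state uniformly. I assume the usual side condition that $x$ does not occur free in $\theta$ (as for $\lna{CD}$ in Table~\ref{table:scheme}), so that $(\neg\theta\to\varphi)[z/x]\equiv\neg\theta\to\varphi[z/x]$. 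For finite $X\subseteq\omega$ and $T\subseteq X$ I would show that $\inset{\setof{k}:\theta}{k\in T},\, X:\neg\theta\to\Exi{x}\varphi \Rightarrow X:\Exi{x}(\neg\theta\to\varphi)$ is derivable, by induction on $\#(X\setminus T)$; the proposition is then the case $T=\varnothing$. Here $T$ collects the worlds already forced to satisfy $\theta$, so that $X\setminus T$ is the current candidate for the maximal $\neg\theta$-substate.

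In the base case $X\setminus T=\varnothing$, i.e.\ $T=X$, every world of $X$ supports $\theta$, so $\neg\theta$ holds only at $\varnothing$ and the consequent is satisfied vacuously: I would apply $(\Rightarrow\inqe)$ with an arbitrary variable $z$ as witness and then $(\Rightarrow\to)$, reducing to premises $Z:\neg\theta,\inset{\setof{k}:\theta}{k\in X},\dots\Rightarrow\dots,Z:\varphi[z/x]$ for each $\varnothing\neq Z\subseteq X$; picking any $k\in Z$, the hypotheses $Z:\neg\theta$ and $\setof{k}:\theta$ clash via $(\neg\Rightarrow)$ from Corollary~\ref{cor:neg_seqcalc} at $\setof{k}\subseteq Z$ together with Proposition~\ref{prop:persis_seqcalc}. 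For the inductive step with $Y:=X\setminus T\neq\varnothing$, I fire $(\to\Rightarrow)$ on the premise formula at the sublabel $Y$. Its right premise makes $Y:\Exi{x}\varphi$ available, and there I extract a witness exactly as in Example~\ref{ex:EK}: apply $(\inqe\Rightarrow)$ with a fresh eigenvariable $z$, then $(\Rightarrow\inqe)$ with $z$ on the consequent, then $(\Rightarrow\to)$, reducing to premises indexed by $Z\subseteq X$. Since $X=Y\sqcup T$, each such $Z$ either satisfies $Z\subseteq Y$, in which case the premise closes by persistency from $Y:\varphi[z/x]$, or meets $T$, in which case it closes by the same $\theta/\neg\theta$ clash as in the base case.

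The left premise of the $(\to\Rightarrow)$ application is $\inset{\setof{k}:\theta}{k\in T},\,X:\neg\theta\to\Exi{x}\varphi\Rightarrow X:\Exi{x}(\neg\theta\to\varphi),\,Y:\neg\theta$. To it I apply the admissible rule $(\Rightarrow\neg)$ of Corollary~\ref{cor:neg_seqcalc}, which replaces $Y:\neg\theta$ by one premise per $k'\in Y$, each of the form $\inset{\setof{k}:\theta}{k\in T\cup\setof{k'}},\,X:\neg\theta\to\Exi{x}\varphi\Rightarrow X:\Exi{x}(\neg\theta\to\varphi)$; since $k'\in X\setminus T$ we have $\#(X\setminus(T\cup\setof{k'}))<\#(X\setminus T)$, so the induction hypothesis applies, and the peeling terminates because $X$ is finite. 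The main obstacle, and the point where the argument genuinely departs from the binary case of Proposition~\ref{prop:KP}, is the need for a \emph{single} existential witness valid across a possibly fragmented label: unlike $\inqd$, the quantifier $\Exi{x}$ cannot be split over the finitely many sublabels of $X$. The induction on $X\setminus T$ resolves this by using the flatness of $\neg\theta$ (encoded in $(\Rightarrow\neg)$) to shrink the candidate $\neg\theta$-substate one world at a time, until the premise yields a usable witness or until $\neg\theta$ becomes unsatisfiable and the consequent holds vacuously.
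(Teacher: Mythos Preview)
Your argument is correct. It is, moreover, a genuinely different route from the paper's own proof, and in some respects a cleaner one.

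The paper's proof handles $\#X=1$ classically and then only treats $\#X=2$ explicitly, writing ``for simplicity''. Its strategy decomposes the \emph{consequent} first: an initial $(\Rightarrow\inqe)$ with a dummy variable followed by $(\Rightarrow\to)$ produces branches indexed by $Y\subseteq X$, each handled by a separate $(\to\Rightarrow)$ on the antecedent at the same $Y$, and the awkward cross-branch (where the witness extracted at $\{1\}$ must serve $\{2\}$) is closed by an explicit Cut against the flatness sequent $\{1\}{:}\neg\theta,\{2\}{:}\neg\theta\Rightarrow\{1,2\}{:}\neg\theta$. How this pattern would scale to arbitrary $\#X$ is left implicit.

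Your approach instead fixes a uniform induction on $\#(X\setminus T)$, decomposing the \emph{antecedent} first via $(\to\Rightarrow)$ at the candidate maximal $\neg\theta$-substate $Y=X\setminus T$. The right branch extracts the witness once and for all and closes each $Z$-premise either by persistency ($Z\subseteq Y$) or by the $\theta/\neg\theta$ clash ($Z$ meets $T$); the left branch invokes the admissible $(\Rightarrow\neg)$ to peel off one world and feed directly into the induction hypothesis. This yields a fully general argument for arbitrary finite $X$ with no explicit Cut and no case distinction on $\#X$, and makes transparent the semantic intuition you describe: shrinking the putative $\neg\theta$-substate world by world until a single witness suffices. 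What the paper's version buys is concreteness at $\#X=2$; what yours buys is uniformity, an explicit induction that scales, and a cleaner match with the semantic picture.
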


\noindent
A rather complex proof involves Proposition \ref{prop:persis_seqcalc}, Example \ref{ex:EK}, the admissibility of weakening rules, and the rule $(\Rightarrow \neg )$ of Corollary \ref{cor:neg_seqcalc}. 

\begin{proof}
When $\# X$ = $1$, we can regard all the inference rules with singleton labels are the same as those for the classical first-order logic, the sequent is easily derivable in in $\mathbf{G}(\mathsf{FBInqBQ})$. 
In what follows, we assume $\# X > 1$. For simplicity, we establish the derivability of the sequent when $\# X$ = 2. Put $X$ = $\setof{1,2}$. To derive 
$$\setof{1,2}: \neg \theta \to \Exi{x} \varphi(x)  \Rightarrow \setof{1,2}: \Exi{x} (\neg \theta \to \varphi(x))$$ by the rules
$(\Rightarrow \inqe)$ and 
$(\Rightarrow \to)$, it suffices to derive the following sequents:  
\begin{itemize}
    \item[(i)] $\setof{1,2}: \neg \theta \to \Exi{x} \varphi(x), \setof{1,2}: \neg \theta \Rightarrow \setof{1,2}:\varphi(z_{0}), \setof{1,2}:\Exi{x} (\neg \theta \to \varphi(x))$. 
    \item[(ii)] $\setof{1,2}: \neg \theta \to \Exi{x} \varphi(x),  \setof{k}: \neg \theta \Rightarrow  \setof{k}: \varphi(z_{0}), \setof{1,2}:\Exi{x} (\neg \theta \to \varphi(x))$ where $k \in \setof{1,2}$. 
\end{itemize}
where $z_{0}$ is a fresh variable. 
First, we establish the former, i.e., clause (i).
By $(\to \Rightarrow)$ and admissibility of weakening rules, it suffices to derive both of the following:
\begin{itemize}
    \item[(i-a)] $\setof{1,2}: \neg \theta \Rightarrow \setof{1,2}: \neg \theta$, 
    \item[(i-b)] $\setof{1,2}: \Exi{x} \varphi(x) \Rightarrow \setof{1,2}: \Exi{x} (\neg \theta \to \varphi(x))$.
\end{itemize}
(i-a) is clearly derivable by Proposition \ref{prop:persis_seqcalc}. 
For (i-b), it is derivable by Example \ref{ex:EK}. This finishes establishing (i). 

Now we move to establish the derivability of (ii).
We assume without loss of generality that $k$ = $1$ and then the desired sequent is:
\[
\setof{1,2}: \neg \theta \to \Exi{x} \varphi(x),  \setof{1}: \neg \theta \Rightarrow  \setof{1}: \varphi(z_{0}),\setof{1,2}: \Exi{x} (\neg \theta \to \varphi(x))
\]
By $(\to \Rightarrow)$ and admissibility of weakening rules, 
it suffices to prove the following two sequents:
\begin{itemize}
    \item[(ii-a)] $\setof{1,2}: \neg \theta \to \Exi{x} \varphi(x),  \setof{1}: \neg \theta \Rightarrow  \setof{1,2}: \Exi{x} (\neg \theta \to \varphi(x)), \setof{1}: \neg \theta$, 
    \item[(ii-b)] $\setof{1}: \Exi{x} \varphi(x), 
\setof{1,2}: \neg \theta \to \Exi{x} \varphi(x),  \setof{1}: \neg \theta \Rightarrow \setof{1,2}: \Exi{x} (\neg \theta \to \varphi(x))$,
\end{itemize}
where (ii-a) is clearly derivable by Proposition \ref{prop:persis_seqcalc}. Thus we establish (ii-b) below. 
By the rules
$(\inqe \Rightarrow)$, $( \Rightarrow \inqe)$  and $(\Rightarrow \to)$, it suffices to derive the following sequents:
\[
\setof{1}: \varphi(z_{2}), 
\setof{1,2}: \neg \theta \to \Exi{x} \varphi(x),  \setof{1}: \neg \theta, Z: \neg \theta \Rightarrow \setof{1,2}: \Exi{x} (\neg \theta \to \varphi(x)), Z: \varphi(z_{2})
\]
for all $Z \subseteq \setof{1,2}$, where $z_{2}$ is a fresh variable. 
When $Z$ = $\setof{1}$, it is clearly derivable by Proposition \ref{prop:persis_seqcalc}. 
When $Z$ = $\setof{1,2}$, the sequent is derivable similarly to (i). 
So we consider the case of $Z$ = $\setof{2}$ in what follows. 
By the admissibility of the weakening rules, it suffices to derive
\[
\setof{1,2}: \neg \theta \to \Exi{x} \varphi(x),  \setof{1}: \neg \theta, \setof{2}: \neg \theta \Rightarrow \setof{1,2}: \Exi{x} (\neg \theta \to \varphi(x)).
\]
We proceed as follows:
\begin{footnotesize}
\[
\infer[(Cut)]{
\setof{1,2}: \neg \theta \to \Exi{x} \varphi(x),  \setof{1}: \neg \theta, \setof{2}: \neg \theta \Rightarrow \setof{1,2}: \Exi{x} (\neg \theta \to \varphi(x))
}{
\setof{1}: \neg \theta, \setof{2}: \neg \theta \Rightarrow \setof{1,2}:\neg \theta 
&
\setof{1,2}:\neg \theta, \setof{1,2}: \neg \theta \to \Exi{x} \varphi(x) \Rightarrow \setof{1,2}: \Exi{x} (\neg \theta \to \varphi(x))
}
\]
\end{footnotesize}
\noindent The topleft sequent is derivable by the rule $(\Rightarrow \neg )$ of Corollary \ref{cor:neg_seqcalc}. 
The topright sequent is derivable similarly to (i). 
This finishes establishing (ii). 
\qed\end{proof}






\else\fi

\end{document}